\newcommand{\ceil}[1]{\left \lceil #1 \right \rceil}
\newcommand{\abs}[1]{\left \vert #1 \right \vert}
\newcommand{\eps}{\varepsilon}
\newcommand{\mbf}[1]{\mathbf{#1}}
\newcommand{\N}{\mathbb{N}}
\newcommand{\Lnk}{\vv{L}(n, k)}
\newcommand{\cG}{\mathcal{G}}
\newcommand{\cL}{\mathcal{L}}
\newcommand{\cS}{\mathcal{S}}
\DeclareMathOperator{\Poi}{Poi}
\DeclareMathOperator{\Bin}{Bin}
\DeclareMathOperator{\Hypergeometric}{Hypergeometric}
\DeclareMathOperator{\Prob}{\mathbb{P}}
\DeclareMathOperator{\expec}{\mathbb{E}}
\newcommand\blfootnote[1]{
  \begingroup
  \renewcommand\thefootnote{}\footnote{#1}
  \addtocounter{footnote}{-1}
  \endgroup
}
\theoremstyle{plain}
\newtheorem{theorem}{Theorem}
\newtheorem{corollary}[theorem]{Corollary}
\newtheorem{lemma}[theorem]{Lemma}
\newtheorem{proposition}[theorem]{Proposition}
\newtheorem{claim}{Claim}
\theoremstyle{definition}
\newtheorem{definition}[theorem]{Definition}
\theoremstyle{remark}
\newcommand{\zero}{0}
\newcommand{\one}{1}
\title{Game connectivity and adaptive dynamics in many-action games}
\author{Tom Johnston\thanks{School of Mathematics, University of Bristol, Bristol,      BS8 1UG, UK.}\:\,\thanks{Heilbronn Institute for Mathematical Research, Bristol, UK.}
    \quad Michael Savery\footnotemark[2]\:\,\thanks{Mathematical Institute, University of Oxford, Oxford, OX2 6GG, UK.}
    \quad Alex Scott\footnotemark[3]\:\,\thanks{Supported by EPSRC grant EP/X013642/1.}
    \quad Bassel Tarbush\thanks{Department of Economics, University of Oxford, Oxford, OX1 3UQ, UK.}}
\date{{\today}}
\begin{document}

\maketitle

\begin{abstract}
    \noindent We study the typical structure of games in terms of their connectivity properties. A game is `connected' if it has a pure Nash equilibrium and there is a best-response path from every action profile which is not a pure Nash equilibrium to every pure Nash equilibrium; a game is generic if it has no indifferences. In previous work we showed that, among all $n$-player $k$-action generic games that admit a pure Nash equilibrium, the fraction that are connected tends to $1$ as $n$ gets sufficiently large relative to $k$. Here, we consider the large-$k$ regime, which behaves differently: we show that the connected fraction tends to $1-\zeta_n$ as $k$ gets large, where $\zeta_n>0$ is an explicit constant.  Thus, a constant fraction of many-action games are \emph{not} connected. However, for $n\geq3$, $\zeta_n$ is small and tends to $0$ rapidly with $n$, so as $n$ increases all but a vanishingly small fraction of many-player-many-action games are connected. Since connectedness is conducive to equilibrium convergence, we find a simple adaptive dynamic that is guaranteed to converge to a pure Nash equilibrium in all but a vanishingly small fraction of generic games that have one. We rely on new probabilistic and combinatorial arguments to tackle the large-$k$ regime.
  
    \blfootnote{\emph{Email}: \textsf{tom.johnston@bristol.ac.uk, \{savery,scott\}@maths.ox.ac.uk, bassel.tarbush@economics.ox.ac.uk}}
    \blfootnote{\emph{Keywords}: game connectivity, adaptive dynamics, best-response graphs, probabilistic combinatorics}
\end{abstract}

\section{Introduction}
The Nash equilibrium solution is a central concept in non-cooperative game theory. A justification that is sometimes given for expecting players to play such equilibria is that they can emerge as the outcomes of simple adaptive dynamics in which players independently employ simple decision heuristics in a decentralised way \citep*{holt2004nash,young2004strategic}. The qualification `simple adaptive' (which we formalise later) refers to the amount of information that each player has access to, and examples of such dynamics include, among others, the well-studied classes of better- and best-response dynamics \citep*{sandholm2010population}. However, an influential result due to \citet*{hart2003uncoupled,hart2006stochastic} has posed a significant challenge to such a justification: for any simple adaptive dynamic, one can construct a game with a pure Nash equilibrium on which the dynamic is not guaranteed to lead to a pure Nash equilibrium of that game. 

While \citet*{hart2003uncoupled,hart2006stochastic} have shown that simple adaptive dynamics perform (very) poorly in the worst case (i.e.\ in certain games), we show that the `typical' case is very different. In previous work, we considered the space of all $n$-player $k$-action games that are ordinal, generic (i.e.\ without indifferences), and have a pure Nash equilibrium, and we used probabilistic and combinatorial arguments to show that when $n$ is sufficiently large relative to $k$, all but a vanishingly small fraction of such games are `connected' \citep*{johnston2023game}; a game is said to be \emph{connected} if there is a best-response path from every action profile which is not a pure Nash equilibrium to every pure Nash equilibrium. The result shows that connectedness is a typical feature of generic games that have a pure Nash equilibrium and relatively many players, despite connectedness being a strong property. Since connectedness is conducive to the equilibrium convergence of simple adaptive dynamics, this allowed us to show that there is a simple adaptive dynamic that is guaranteed to converge to a pure Nash equilibrium in all but a vanishingly small fraction of generic $n$-player $k$-action games that have one, provided $n$ is sufficiently large relative to $k$. However, the techniques used in our previous work could not tackle the regime in which the number of actions per player $k$ is large relative to $n$, and that regime was therefore left open. The central technical contribution of our present paper is in developing techniques to address the many-actions regime.

Here, we show that among $n$-player generic games that admit a pure Nash equilibrium, the fraction that are connected tends to $1-\zeta_n$ as $k$ gets large, where $\zeta_n>0$ (\Cref{prop:large_k}). Since a constant fraction of many-action games are \emph{not} connected, this contrasts with the many-players (i.e.~large $n$) regime studied in our previous work.  However, for $n\geq 3$, $\zeta_n$ is small and tends to $0$ rapidly with $n$. By combining the results of this paper with those of \citet*{johnston2023game} we also show that, among generic games that admit a pure Nash equilibrium, the fraction that are connected tends to 1 as the number of players $n$ gets large, \emph{regardless} of the number of actions $k$ per player (\Cref{thm:inf_k}). Consequently, \Cref{prop:large_k} and \Cref{thm:inf_k} give us that the fraction is 1 in both iterated limits $\lim_{n \to \infty} \lim_{k \to \infty}$ and $\lim_{k \to \infty} \lim_{n \to \infty}$, and in the double limit $\lim_{(n,k) \to (\infty,\infty)}$.  We note, however, that \Cref{prop:large_k} and \Cref{thm:inf_k} neither imply nor are implied by the results of \citet*{johnston2023game}. This paper therefore complements our previous work to provide a more complete picture of connectedness in games. The combined results of this paper and our previous work establish that a simple adaptive dynamic can `typically' find pure Nash equilibria (\Cref{prop:dynamics}). Moreover, as noted in \citet*{johnston2023game}, results on game connectivity also have implications for the convergence of more complex dynamics that are not necessarily `simple adaptive' dynamics.

Games with many actions arise naturally in a number of contexts including AI alignment (where AI models engage in simultaneous debate in language-space, see e.g.~\citealp*{chen2024playinglargegamesoracles,collina2025emergentalignmentcompetition,buzaglo2025hiddengameproblem}), certain games with combinatorial action spaces (such as combinatorial auctions and Blotto games), and learning in discretized games \citep*{bichler2023computing,bichler2023learning}.\footnote{\cite*{bichler2023computing,bichler2023learning} employ simultaneous online dual averaging (SODA), a learning dynamic, to successfully compute Nash equilibria in auction settings in which no analytic solution for the equilibria is known. In their case, the original game has a continuous action space, but SODA is applied to a discretized version with a large number of actions per player.} There is also burgeoning theoretical work on better- and best-response dynamics in games with large action spaces \citep*{amiet2021better,mimun2024best,collevecchio2024basins,collevecchio2025findingnashequilibriumrandom,ashkenazigolan2025simultaneousbestresponsedynamicsrandom}.

\paragraph*{Roadmap} Our game-theoretic definitions, our results on game `connectedness', and their implications for adaptive dynamics are in Sections \ref{sec:model}, \ref{sec:connectedness}, and \ref{sec:dynamics}, respectively. A brief discussion of related literature is provided in Section \ref{sec:related}. Our technical results, on which our results for games are based, as well as their proofs are in Sections \ref{sec:random}-\ref{sec:goodsinks}.

\subsection{Model}\label{sec:model}
In this section, we recall some standard definitions from the theory of games and introduce our notation. For $n\in \N$, we use $[n]$ as shorthand for the set $\{1, \dots, n\}$. For each $a\in \N^n$ and $i\in [n]$, we write $a_{-i}$ for the element of $\N^{n-1}$ obtained by deleting the $i$th coordinate of $a$. In an abuse of notation, for $x\in \N$ and $a_{-i}\in \N^{n-1}$, we write $(x,a_{-i})$ for the element of $\N^n$ obtained by inserting $x$ into the $i$th coordinate of $a_{-i}$.

A \emph{game} is a tuple
\[
    \Big([n], \underbrace{\big([k],\dots,[k]\big)}_{\text{length $n$}} , (\succsim_i)_{i \in [n]}\Big),
\]
where $n$ and $k$ are integers, each no less than 2, and for each $i$, $\succsim_i$ is a complete and transitive binary relation on $[k]^n$. We say that $[n]$ is the \emph{player set} of the game and that each player has an \emph{action set} $[k]$.\footnote{Our approach can accommodate games with different numbers of actions across players but, for simplicity, we restrict ourselves to games in which all players have the same number of actions.} Elements of $[k]^n$ are called \emph{action profiles}, and $\succsim_i$ is known as $i$'s \emph{preference relation}. For each $i$, $\succ_i$ denotes the asymmetric part of $\succsim_i$.

An action $a_i$ of player $i$ is a \emph{best-response} to $a_{-i}$ if $(a_i, a_{-i}) \succsim_i (x, a_{-i})$ for every $x \in [k]$. An action profile $a \in A$ is a \emph{pure Nash equilibrium} if for each player $i \in [n]$, $a_i$ is a best-response to $a_{-i}$. The \emph{best-response graph} of an $n$-player $k$-action game is the directed graph $( [k]^n, \rightarrow )$ whose vertex set is the set of action profiles $[k]^n$ and whose directed edge set $\rightarrow$ is such that, for every $a,b\in [k]^n$, $a \rightarrow b$ if and only if there exists $i\in[n]$ such that $a_{-i}=b_{-i}$, the action $b_i$ is a best-response to $a_{-i}$, and $b \succ_i a$. In other words, there is a directed edge from $a$ to $b$ whenever $b_i$ is a strict best-response to $a_{-i}=b_{-i}$ for some player $i$. Figure \ref{fig:intro_example} shows a game and its associated best-response graph. Relatedly, we also consider a game's better-response graph. An action $a_i$ of player $i$ is a \emph{better-response} than $a_i'$ to $a_{-i}$ if $(a_i,a_{-i}) \succ_i (a_i',a_{-i})$. The \emph{better-response graph} of an $n$-player $k$-action game is the directed graph $( [k]^n, \rightarrow )$ whose vertex set is the set of action profiles $[k]^n$ and whose directed edge set $\rightarrow$ is such that, for every $a,b\in A$, $a \rightarrow b$ if and only if there exists $i\in[n]$ such that $a_{-i}=b_{-i}$ and $b_i$ is a better-response to $a_{-i}$ than $a_i$.

\begin{figure}
\centering
\begin{tikzpicture}[scale=0.75]
\begin{scope}[scale=0.8,xshift=-80,yshift=70,every node/.append style={yslant=0,xslant=0.8},yslant=0,xslant=0.8]
\draw[xstep=2cm,ystep=1cm,color=gray] (0,0) grid (4,2);
% payoffs
\node at (1,1.5) {\footnotesize $\one,\one,\one$}; \node at (3,1.5) {\footnotesize$\zero,\zero,\one$};
\node at (1,0.5) {\footnotesize$\zero,\one,\zero$}; \node at (3,0.5) {\footnotesize $\one,\zero,\zero$};
% actions
\node[] at (-1,0.5) {$\mathbf{B}$};
\node[] at (-1,1.5) {$\mathbf{A}$};
\node[] at (3,2.5) {$\mathbf{B}$};
\node[] at (1,2.5) {$\mathbf{A}$};
\node[] at (5,1) {$\mathbf{A}$};
% players
\draw[white] (-1.5,0.3) -- (-1.5,2) node [pos=0.5,above,rotate=90,yshift=0.2cm] {\color{black}Player 1};
\draw[white] (0,3) -- (3.2,3) node [pos=0.5,above] {\color{black}Player 2};
\end{scope}

\begin{scope}[scale=0.8,xshift=-80,yshift=0,every node/.append style={yslant=0,xslant=0.8},yslant=0,xslant=0.8]
\draw[xstep=2cm,ystep=1cm,color=gray] (0,0) grid (4,2);
% payoffs
\node at (1,1.5) {\footnotesize $\one,\zero,\zero$};  \node at (3,1.5) {\footnotesize $\zero,\one,\zero$};
\node at (1,0.5) {\footnotesize $\zero,\zero,\one$}; \node at (3,0.5) {\footnotesize $\one,\one,\one$};
% actions
\node[] at (5,1) {$\mathbf{B}$};
\end{scope}

\draw[scale=0.8,white] (3.8,1) -- (3.8,3.75) node [pos=0.5,above,rotate=270] {\color{black}Player 3};

\begin{scope}[xshift=150,yshift=-20,
scale=3.5,
roundnode/.style={rectangle, draw=white, fill=white,inner sep=0,outer sep=0}]
\foreach \x in {1,2}
\foreach \y in {1,2}
\foreach \z in {1,2}
{\node[] (\z\x\y) at (\x,\y,\z) {$\circ$};} %(\z,\x,\y)
\node[] () at (1,2,1) {$\bullet$};
\node[] () at (2,1,2) {$\bullet$};
% bottom face P1
\path[->] (211) edge [thick ]  (111);
\path[->] (121) edge [thick ]  (221);
% bottom face P2
\path[->] (111) edge [thick ]  (121);
\path[->] (211) edge [thick ]  (221);

% top face P1
\path[->] (212) edge [thick ]  (112);
\path[->] (122) edge [thick ]  (222);
% top face P2
\path[->] (122) edge [thick ]  (112);
\path[->] (222) edge [thick ]  (212);

% between faces
\path[->] (212) edge [thick ]  (211);
\path[->] (111) edge [thick ]  (112);
\path[->] (121) edge [thick ]  (122);
\path[->] (222) edge [thick ]  (221);
\end{scope}
\end{tikzpicture}
\caption{A 3-player 2-action game (left) and its corresponding best-response graph (right). The $\bullet$ vertices are sinks.  This figure is repeated from \citet*{johnston2023game}.}
\label{fig:intro_example}
\end{figure}
We now define various classes of games in terms of the connectivity properties of their best-response graphs. As part of these definitions we will use standard terminology from the theory of directed graphs which we briefly recall here. Given a directed graph $(V,\rightarrow)$ with vertex set $V$ and edge set $\rightarrow$, a vertex $v \in V$ is a \emph{sink} if it has no outgoing edges, and a \emph{non-sink} otherwise. Similarly, a vertex $v \in V$ is a \emph{source} if it has no incoming edges, and a \emph{non-source} otherwise. Observe that the pure Nash equilibria of a game correspond to the sinks of its best-response graph. For any pair of vertices $v,v' \in V$, we say that $v$ can \emph{reach} $v'$ if there is a sequence $(v^1,\dots,v^m)$ of vertices with $v^1 = v$ and $v^m = v'$ such that $v^i \rightarrow v^{i+1}$ for all $i\in[m-1]$; in this case we also say that the vertex~$v'$ can \emph{be reached from} $v$. Note that every vertex can reach and be reached from itself. A \emph{cycle} is a sequence $(v^1,\dots,v^m)$ of distinct vertices that has length $m \geq 2$ and that satisfies $v^m\rightarrow v^1$ and $v^i \rightarrow v^{i+1}$ for all $i\in[m-1]$. We say that a game is \emph{acyclic} if its best-response graph has no cycles, and we say that a game is \emph{weakly acyclic} if its best-response graph has the property that every vertex can reach a sink. This terminology has become standard in the literature. Interest in such connectivity properties stems from the fact that they are a key determinant of the behaviour of adaptive dynamics. Weak acyclicity, for example, is a necessary condition for the convergence of best-response dynamics to a pure Nash equilibrium from any initial action profile \citep*{young1993evolution,fabrikant2013structure,jaggard2014self,apt2015classification}. The well-known class of potential games is a subclass of acyclic games, on which many adaptive dynamics are guaranteed to converge to equilibrium \citep*{monderer1996potential,roughgarden2016twenty}. In addition to acyclicity and weak acyclicity, we say that a game is \emph{connected} if its best-response graph has at least one sink and the property that every non-sink can reach every sink. This property was recently introduced in \citet*{johnston2023game}. Finally, for each connectivity property $P \in \{$acyclic, weakly acyclic, connected$\}$, we say that a game is \emph{better-response} $P$ if its better-response graph has that property. For example, a game is better-response connected if its better-response graph has a sink and the property that every non-sink can reach every sink. We obtain the following logical relationships between the game classes.\footnote{Note that connectedness neither implies nor is implied by acyclicity.}

\begin{adjustbox}{width=\linewidth}
\centering
    \begin{tikzpicture}
        \matrix (m) [matrix of math nodes,row sep=1em,column sep=1em,minimum width=0em] {
            \text{better-response acyclic} & \text{acyclic} & \text{weakly acyclic} & \text{better-response weakly acyclic}\\
                                    &                & \text{connected}      & \text{better-response connected}\\
        };
        \path[-stealth]
        (m-1-1) edge[double] (m-1-2)
        (m-1-2) edge[double] (m-1-3)
        (m-1-3) edge[double] (m-1-4)
        (m-2-3) edge[double] (m-2-4)
        (m-2-3) edge[double] (m-1-3)
        (m-2-4) edge[double] (m-1-4);
    \end{tikzpicture} 
\end{adjustbox}

Our focus will be on games that are \emph{generic}, by which we mean that for every $i$, and distinct action profiles $a$ and $a'$ that differ only in the $i$th index, either $a \succ_i a'$ or $a' \succ_i a$. Given integers $n\geq 2$ and $k \geq 2$, we use $\cG(n,k)$ to denote the set of all generic games with player set $[n]$ in which, for every $i \in [n]$, player $i$ has action set $[k]$. Since we are working with ordinal games, for a fixed $n$ and $k$, the set $\cG(n,k)$ is finite. 

\subsection{Results on game connectivity}\label{sec:connectedness}
Our results on the connectedness of games are stated below. 
Our first result, \Cref{prop:large_k}, considers the case where the number of actions per player tends to infinity, while the number of players remains fixed.
We show that for fixed $n\geq 3$, among the games in $\cG(n,k)$ that have a pure Nash equilibrium, the fraction that are connected goes to $1-\zeta_n$ where $\zeta_n >0$ is a positive constant. Observe, however, that $\zeta_n$ is small and decays very quickly with $n$; for example, $\zeta_3 \approx 0.0132$ and $\zeta_4 \approx 0.00002$. Of course, if we were to let $n$ grow, then the limiting fraction of connected games would go to 1.

\begin{theorem}\label{prop:large_k}
For fixed $n \geq 3$,
\[
    \lim_{k \to \infty} \frac{|\{g \in \cG(n,k) : \text{ $g$ \textnormal{is connected} }\}|}{|\{g \in \cG(n,k) : \text{ $g$ \textnormal{has a pure Nash equilibrium} }\}|} = 1-\zeta_n,
    \]
where 
\[
\zeta_n = 1 - \frac{e^{-\lambda_n}(1 - e^{\lambda_n - 1})}{1 - e^{-1}}
\]
and $\lambda_n$ is the smallest positive solution $x$ to
\[
x^{\frac{1}{n}} = e^{(n-1) \left(x^{\frac{1}{n}} - 1\right)}.
\]
\end{theorem}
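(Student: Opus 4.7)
A uniformly random game in $\cG(n, k)$ is equivalent, as far as the best-response graph is concerned, to the following model on columns: for each pair $(i, a_{-i})$ with $i \in [n]$ and $a_{-i} \in [k]^{n-1}$, independently choose a ``top'' (best-response index) uniformly in $[k]$. A profile $a$ is a pure Nash equilibrium (PNE) if and only if it is the top of each of its $n$ columns, so if $X$ denotes the number of PNE then $\expec[X] = 1$. A standard method-of-moments (or Chen--Stein) computation, using that two distinct PNE must differ in all $n$ coordinates and so involve disjoint columns, gives $X \to \Poi(1)$ in distribution as $k \to \infty$, so $\Prob(X \geq 1) \to 1 - e^{-1}$ - the asymptotic value of the denominator in the stated ratio (divided by $|\cG(n, k)|$).

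\textbf{Branching process at a sink and ``bad'' sinks.} For a sink $s$, explore the set of profiles that can forward-reach $s$ by following reverse edges. Since $s$ is the top of each of its $n$ columns, each column contributes $k - 1$ immediate backward neighbours. A backward neighbour $v$ that lies in $s$'s direction-$i$ column may in turn be the top of columns in the other $n-1$ directions; in each such direction $v$ sits at the top independently with probability $1/k$ and then contributes $k - 1$ fresh subordinates. Iterating, the ``column activation'' process from each of $s$'s $n$ initial columns converges as $k \to \infty$ to a Galton--Watson branching process with $\Poi(n-1)$ offspring, since the number of new columns spawned by one activated column is distributed as $\Bin((k-1)(n-1), 1/k) \to \Poi(n-1)$. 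Call $s$ \emph{bad} if all $n$ of its branching processes go extinct (in a precise finite-$k$ sense, e.g.\ the exploration from each of $s$'s columns terminates within some slowly growing depth). The extinction probability of a $\Poi(n-1)$ process is the smallest positive root $y_0 \in (0, 1)$ of $y = e^{(n-1)(y-1)}$, so a sink is asymptotically bad with probability $y_0^n = \lambda_n$; since branching processes from distinct sinks explore asymptotically disjoint parts of the graph, these events are asymptotically independent, and Poisson thinning yields joint convergence of (good sinks, bad sinks) to independent $\Poi(1-\lambda_n)$ and $\Poi(\lambda_n)$ variables.

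\textbf{Bad sinks are the only obstructions.} If a sink is bad, only $O(1)$ profiles can reach it and the game is not connected. The heart of the proof is the converse: if at least one sink exists and none is bad, then with high probability the game is connected. Two additional failure modes must be ruled out. First, non-sink terminal strongly connected components: a first-moment bound shows their expected count is $o(1)$, since each of the (typically $n$) out-edges of every vertex of such a component must stay inside, costing a factor $k^{-(n-1)}$ per vertex that easily beats the number of candidate components. Second, non-sinks that reach some but not all sinks: one shows that a surviving column-activation process from a good sink saturates all but $o(k^n)$ profiles via supercritical expansion, after which a sprinkling / fresh-randomness argument absorbs the remaining profiles. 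Granting this, the event ``connected'' asymptotically coincides with $\{$at least one good sink, no bad sinks$\}$, whose probability, by the independence above, tends to
\[
\left(1 - e^{-(1 - \lambda_n)}\right) e^{-\lambda_n} = e^{-\lambda_n} - e^{-1}.
\]
Dividing by $\Prob(X \geq 1) \to 1 - e^{-1}$ gives $1 - \zeta_n$ as claimed. The main obstacle is the second failure mode above: making the branching-process approximation quantitatively sharp enough to guarantee that good sinks leave no non-sink unreached.
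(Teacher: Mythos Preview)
Your overall framework matches the paper closely: the $\Poi(1)$ limit for the number of sinks, the backward exploration from a sink as a branching process on lines with $\Poi(n-1)$ offspring, the good/bad dichotomy with $p=(\eta_{n-1})^n=\lambda_n$, and the Poisson thinning giving independent $\Poi(1-\lambda_n)$ and $\Poi(\lambda_n)$ limits. The identification of ``connected'' with $\{X\ge 1,\,Y=0\}$ and the resulting arithmetic is exactly what the paper does.

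The genuine gap is in your handling of what you call the second failure mode. You propose that a surviving backward process ``saturates all but $o(k^n)$ profiles via supercritical expansion, after which a sprinkling / fresh-randomness argument absorbs the remaining profiles.'' The paper takes a completely different route here, and for good reason. In the large-$k$ regime almost every vertex is a source (wins none of its $n$ lines), so there is very little usable randomness left once the backward branching process has been revealed: the winners of lines are not decomposable into independent rounds the way edges in $G(n,p)$ are, and a naive sprinkling argument does not go through. The paper instead builds a concrete giant strongly connected structure: in each $2$-dimensional slice (fixing coordinates $3,\dots,n$) it finds, with constant probability, a ``good cycle'' of length $\ge\sqrt{k}$ with a large basin; it then proves (i) all good cycles lie in a single strongly connected component, (ii) every non-sink reaches some good cycle, and (iii) every good sink is reached from some good cycle. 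Each of these three steps is a separate lemma with its own exploration carefully designed so that the randomness needed later (winners in yet-unexamined slices, or in yet-unexamined directions) is preserved. Your sketch does not supply a substitute for this machinery, and you correctly flag it as ``the main obstacle''; as written it is a gap, not a proof.

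A smaller issue: your first-moment bound on non-sink terminal strongly connected components asserts a cost of $k^{-(n-1)}$ per vertex from ``each of the (typically $n$) out-edges \dots\ must stay inside.'' But out-edges are not independent across vertices---every vertex on a line shares the same out-neighbour in that direction---so the correct constraint is on the set of \emph{lines} meeting the component (their winners must all lie in the component), and the combinatorics of how many lines a size-$m$ set meets has to be controlled. The paper sidesteps this entirely: once (ii) above is established, non-sink terminal components are automatically impossible.
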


For the case of $n=2$, which is not considered in \Cref{prop:large_k}, the fraction above tends to 0 as $k \to \infty$. In other words, among the games in $\cG(2,k)$ that have a pure Nash equilibrium, the fraction that are connected goes to $0$ as $k$ grows. This follows directly from existing results which show that among the games in $\cG(2,k)$ that have a pure Nash equilibrium, the fraction that are weakly acyclic goes to $0$ as $k$ gets large (\citealp*{amiet2021better,heinrich2023best}). Note, however, that the connectivity of best-response graphs is different from that of better-response graphs when $n=2$: \citet*{amiet2021better} show that among generic 2-player games that have a pure Nash equilibrium, the fraction that are also better-response weakly acyclic tends to one as $k$ gets large. 

One corollary of \Cref{prop:large_k} is that if $n \to \infty$ sufficiently slowly as $k \to \infty$, then the fraction of generic games with a pure Nash equilibrium which are connected tends to 1.
This is closely related to the main result of \citet*[Theorem 1]{johnston2023game}, which shows the same phenomenon but in the case where $k$ stays small compared to $n$. That is, the fraction of generic games with a pure Nash equilibrium which are connected tends to 1 as $n \to \infty$ in both the case where $k$ is small compared to $n$ and the case where $k$ is much larger than $n$.
This naturally raises the question of whether this phenomenon truly depends on $k$ or is solely a consequence of the fact that $n \to \infty$.
Our second theorem confirms the latter: as long as $n \to \infty$, the proportion of games which have a pure Nash equilibrium but are not connected, vanishes.

\begin{theorem}\label{thm:inf_k} 
    \[
    \lim_{n \to \infty} \left(\inf_{k \geq 2} \frac{|\{g \in \cG(n,k) : \text{ $g$ \textnormal{is connected} }\}|}{|\{g \in \cG(n,k) : \text{ $g$ \textnormal{has a pure Nash equilibrium} }\}|} \right) = 1 .
    \]
\end{theorem}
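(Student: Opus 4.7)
The plan is to deduce \Cref{thm:inf_k} by combining \Cref{prop:large_k} with Theorem~1 of \citet{johnston2023game} through a cutoff in $k$. Write $f(n,k)$ for the connected fraction appearing in the statement, and fix $\eps>0$; it suffices to show $f(n,k)\ge 1-\eps$ for every $k\ge 2$ once $n$ is large enough. I would introduce a threshold $k^{\ast}=k^{\ast}(n)$ and treat the ``many-actions'' range $k\ge k^{\ast}$ and the ``few-actions'' range $2\le k<k^{\ast}$ separately.

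For the many-actions regime I would extract a quantitative form of \Cref{prop:large_k} from its proof in Sections~\ref{sec:random}--\ref{sec:goodsinks}: some explicit $k^{\ast}(n)$ and $N_1(\eps)$ such that
\[
f(n,k)\;\ge\;1-\zeta_n-\eps/2\qquad\text{for all }n\ge N_1\text{ and }k\ge k^{\ast}(n).
\]
Since, as emphasised after the statement of \Cref{prop:large_k}, $\zeta_n$ decays very rapidly with $n$ (already $\zeta_4\approx 2\times 10^{-5}$), I can then choose $n\ge N_2(\eps)$ so that $\zeta_n\le \eps/2$, giving $f(n,k)\ge 1-\eps$ for all $n\ge\max(N_1,N_2)$ and all $k\ge k^{\ast}(n)$.

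For the few-actions regime I would appeal to a quantitative form of \citet[Theorem~1]{johnston2023game}. That theorem yields $f(n,k)\to 1$ whenever $k$ remains small relative to $n$, and its proof should provide a growing function $\psi(n)$ with $f(n,k)\ge 1-\eps$ for all $k\le \psi(n)$ provided $n$ is sufficiently large. Provided the thresholds line up, namely $k^{\ast}(n)\le \psi(n)$ for all large $n$, the two ranges jointly cover every $k\ge 2$, and together they give $\inf_{k\ge 2}f(n,k)\ge 1-\eps$ for all large $n$, which is the desired conclusion.

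The main obstacle is compatibility of the two thresholds. The $k^{\ast}(n)$ coming out of the many-actions argument must be no larger than the $\psi(n)$ up to which the many-players argument still delivers strong bounds, and whether this holds depends on the exact quantitative rates produced by the two proofs. Depending on how they turn out, one may have to sharpen the rate of convergence in \Cref{prop:large_k} (to shrink $k^{\ast}(n)$), or strengthen the quantitative bound from \citet{johnston2023game} (to enlarge $\psi(n)$). The very fast decay of $\zeta_n$ leaves considerable slack, so I expect modest quantitative refinements to be enough. The special case $n=2$, where \Cref{prop:large_k} fails and $f(2,k)\to 0$ as $k\to\infty$, is irrelevant because the outer limit is in $n\to\infty$.
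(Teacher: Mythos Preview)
Your approach is essentially the paper's: split at a threshold $k^{\ast}(n)$, use \citet{johnston2023game} below it and the many-actions machinery above it. The paper takes $k^{\ast}=\delta\sqrt{n/\log n}$, which is precisely the upper limit from the quantitative form of \citet{johnston2023game} (their Theorem~3, restated here as \Cref{thm:main_grids}).

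The obstacle you flag---threshold compatibility---is real if you try to invoke \Cref{prop:large_k} or \Cref{thm:n-small} directly: those carry the hypothesis $n\le k^{1/2-\eps}$, which would force $k^{\ast}(n)\gtrsim n^{2}$ and leave a gap between $\psi(n)=\delta\sqrt{n/\log n}$ and $k^{\ast}(n)$. The paper does not close this gap by ``sharpening'' either side. Instead it observes that the restriction $n\le k^{1/2-\eps}$ is only needed for the Poisson limit \Cref{thm:poisson}; the connectivity lemmas (\Cref{lem:good-cycles-same-component,lem:reaching_cycle,lem:cycle_to_sink}) hold for all $n\ge 3$ and all sufficiently large $k$ with no upper bound on $n$, and \Cref{lem:no_bad_sinks} bounds the probability of a bad sink by $(\eta_{n-1})^n+11k^{-(1-2\eps)}$ uniformly in $n,k$. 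Taking $\eps=1/10$ and $k\ge \delta\sqrt{n/\log n}$, this bound is $(\eta_{n-1})^n+O(n^{-0.4}\log n)\to 0$, so above $k^{\ast}$ every sink is good with high probability and the lemmas then give that every non-sink reaches every sink. Thus no refinement is required; the finer lemmas already cover every $k\ge k^{\ast}$, and the thresholds meet. Your remark about the fast decay of $\zeta_n$ providing slack is a red herring here: $\zeta_n$ governs the limiting value, not the range of $k$ over which the large-$k$ argument applies.
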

We note that our proof of this theorem depends intrinsically on the result of \citeauthor*{johnston2023game}, and this result should be viewed as the combination of the work in this paper on many-action games and the work in \citet*{johnston2023game} on many-player games.
Moreover, although this theorem imposes no restriction on how fast $n$ must grow relative to $k$, this comes at the cost of losing the exponential convergence in $n$, and neither result is strictly stronger than the other.
We also note that, although the results are similar, the proofs in the cases of many-actions and many-players require different techniques.
The key to the proof of \citet*{johnston2023game} was the notion of a good vertex which has both high in-degree and high out-degree, and showing that these form a strongly connected component. In the case where $k$ is much larger than $n$, however, almost all the vertices are sources, so a different approach is needed.
Instead, we rely heavily on finding good cycles in the subgames where only the first two players update their action, and show that these form a strongly connected component.
Clearly, this does not apply to the regime where $k$ is fixed and neither approach works in both regimes.
\newpage
\Cref{prop:large_k} and \Cref{thm:inf_k} together imply that 
\[
\frac{|\{g \in \cG(n,k) : \text{ $g$ \textnormal{is connected} }\}|}{|\{g \in \cG(n,k) : \text{ $g$ \textnormal{has a pure Nash equilibrium} }\}|} \to 1
\]
in both iterated limits $\lim_{n \to \infty} \lim_{k \to \infty}$ and $\lim_{k \to \infty} \lim_{n \to \infty}$, and in the double limit $\lim_{(n,k) \to (\infty,\infty)}$.

To complete the picture, we note that the fraction of acyclic games among those that are generic games and have a pure Nash equilibrium goes to zero as $n$ and/or $k$ get large (see \citealp*{heinrich2023best}). There is therefore a split in large game properties, with acyclic and better-response acyclic games being rare, and connected, weakly acyclic, and their `better-response' counterparts being very common.

\subsection{Implications for adaptive dynamics in games}\label{sec:dynamics}

We now consider games played over time, and discuss some implications of our connectivity results for adaptive dynamics in games. 

We recall standard definitions regarding dynamics in games. A player $i$'s \emph{observation set} at time $t$, denoted $o_i^t$, consists of all the information that~$i$ can observe at time $t$. For each integer $k\geq 2$, let $O_k$ denote the set of all possible observation sets for a player with action set $[k]$. A \emph{strategy} for a player with action set $[k]$ is a function $f\colon O_k \rightarrow \Delta([k])$, where $\Delta([k])$ is the probability simplex over $[k]$. A \emph{dynamic} on $\cG(n,k)$ consists of a specification of what information enters into each player's observation set at each time, and a strategy $f_i$ with action set $[k]$ for each player $i$. The play of a game $g\in \cG(n,k)$ under a given dynamic begins at time $t=0$ at an initial action profile $a^0$ chosen arbitrarily. This informs each player's observation set $o_i^1$ according to the dynamic. At time $t=1$, each player updates their action (randomly) according to $f_i(o_i^1)$, and we denote the new (random) action profile by $a^1$. The play continues in this manner, with each player updating their action at $t=2$ according to $f_i(o_i^2)$ to produce an action profile $a^2$, and so on.

Adopting the terminology of \citet*{hart2013simple}, a \emph{simple adaptive dynamic} is a dynamic in which player $i$'s observation set contains at most their own preference relation $\succsim_i$ and the last period's action profile $a^{t-1}$.\footnote{Expressed in terms that are standard in the literature, simple adaptive dynamics are those that are `uncoupled', `1-recall', and `stationary'. A dynamic is \emph{uncoupled} if a player's observation set consists at most of their own preference relation and of the past history of play, it is 1-\emph{recall} if the past history of play is restricted to only the last period, and it is \emph{stationary} if their strategy is time-independent.}

We say that a dynamic on $\cG(n,k)$ \emph{converges almost surely to a pure Nash equilibrium} of a game $g\in \cG(n,k)$ if when $g$ is played according to the dynamic from any initial action profile, there almost surely exists $T < \infty$ and a pure Nash equilibrium $a^*$ of $g$ such that $a^t=a^*$ for all $t \geq T$.

The influential impossibility result of \cite*{hart2003uncoupled,hart2006stochastic} states that there is no simple adaptive dynamic for which play converges almost surely to a pure Nash equilibrium in every (generic) game that has one. By contrast, our connectivity results imply the following.

\begin{proposition}\label{prop:dynamics}
There is a simple adaptive dynamic for which play converges almost surely to a pure Nash equilibrium in all but a vanishingly small fraction of games in $\cG(n,k)$ that have one. The statement holds for both iterated limits and in the double limit in $n$ and $k$.
\end{proposition}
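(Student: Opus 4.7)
The plan is to exhibit an explicit simple adaptive dynamic that converges almost surely to a pure Nash equilibrium on every connected game, and then invoke \Cref{prop:large_k} and \Cref{thm:inf_k} to conclude. For the dynamic I would take the standard randomised best-response rule with independent updating: fix any $p \in (0,1)$; in each period $t \geq 1$, each player $i$ independently of the history flips a coin that comes up heads with probability $p$, and each player whose coin comes up heads simultaneously switches to their unique best-response to $a^{t-1}_{-i}$ (which is well-defined by genericity), while every other player repeats their action from $a^{t-1}$. Since each player's rule depends only on their own preference relation $\succsim_i$ and the previous profile $a^{t-1}$, this dynamic is uncoupled, $1$-recall, and stationary, hence simple adaptive in the sense of \citet*{hart2013simple}.

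To establish convergence on a connected game $g \in \cG(n,k)$, fix any initial profile $a^0$. A pure Nash equilibrium of $g$ is a sink of the best-response graph, so no player wishes to deviate there and the chain is absorbed at such a profile. If instead $a^t$ is a non-sink, then by connectedness there is a best-response path $(a^t = v^0, v^1, \ldots, v^m)$ with $m \leq k^n - 1$ terminating at some pure Nash equilibrium, and each step $v^j \to v^{j+1}$ consists of a single named player moving to their unique best-response to the current $v^j_{-i}$. The probability that the next $m$ periods realise exactly this prescribed sequence of single-player updates is at least $\bigl(p(1-p)^{n-1}\bigr)^{m} \geq \bigl(p(1-p)^{n-1}\bigr)^{k^n - 1}$, which is a positive constant depending only on $n$, $k$, and $p$. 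Hence from every non-absorbing state the chain reaches the absorbing set of pure Nash equilibria within $k^n$ steps with probability at least some $\alpha > 0$, and so the probability it has not been absorbed by time $T$ is at most $(1-\alpha)^{\lfloor T/k^n \rfloor} \to 0$, giving almost sure convergence.

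Combining this with \Cref{prop:large_k} and \Cref{thm:inf_k} immediately yields the proposition in each of the iterated limits $\lim_{n \to \infty} \lim_{k \to \infty}$, $\lim_{k \to \infty} \lim_{n \to \infty}$, and the double limit $\lim_{(n,k) \to (\infty,\infty)}$: the dynamic converges on every connected game with a pure Nash equilibrium, and the complementary fraction among games in $\cG(n,k)$ with a pure Nash equilibrium vanishes in each of the three limits. I do not expect any genuine obstacle here, since the substantive content has been packaged into the two connectivity theorems. The only mild point to verify is that the independent-coin construction realises the `effectively one player moves at a time' event in a manner compatible with the uncoupled and $1$-recall restrictions, which it does because each player's coin flip and subsequent best-response computation use only information local to that player; periods in which zero or several players happen to update cannot prevent convergence, they can only slow it.
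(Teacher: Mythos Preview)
Your proposal is correct and follows essentially the same approach as the paper: exhibit best-response with inertia as the simple adaptive dynamic, note it converges almost surely on every connected (indeed, every weakly acyclic) game, and then invoke \Cref{prop:large_k} and \Cref{thm:inf_k}. The only difference is cosmetic: the paper cites \citet*{young2004strategic} for the convergence of best-response with inertia on weakly acyclic games, whereas you spell out the standard Markov-chain absorption argument explicitly.
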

\noindent The proof is straightforward. The best-response dynamic with inertia,\footnote{In this dynamic, at each time $t$, each player is independently selected with some positive probability. A selected player chooses an action that is a best-response to the actions chosen at $t-1$ by all other players. A player who is not selected keeps playing the same action they played at $t-1$.} which is a simple adaptive dynamic, converges almost surely to a pure Nash equilibrium in every weakly acyclic game \citep*{young2004strategic}, and therefore in every connected game that has a pure Nash equilibrium. \Cref{prop:dynamics} immediately follows from combining this fact with \Cref{prop:large_k} and \Cref{thm:inf_k}. 

What happens to \Cref{prop:dynamics} if $n$ is fixed at some value $n \geq 3$? Then, by a similar argument, we obtain that, as $k \to \infty$, there is a simple adaptive dynamic for which play converges almost surely to a pure Nash equilibrium in all but a $\zeta_n$ proportion of games in $\cG(n,k)$ that have one. Since $\zeta_n$ is almost imperceptibly small, this result is consistent with \citet*{heinrich2023best} who observed, via simulation only, that asynchronous best-response dynamics in which the next player was chosen at random from among all players seemed to converge to a pure Nash equilibrium in `essentially' all generic games that have one when $k$ was large and $n$ was fixed at $n\geq 3$.

As observed in \cite*{johnston2023game}, several dynamics that are not `simple adaptive' dynamics are known to lead to equilibrium in weakly acyclic games \citep*{young1993evolution,marden2007regret,marden2009payoff}.  Similarly to the argument given above for \Cref{prop:dynamics}, \Cref{prop:large_k} and \Cref{thm:inf_k} imply that such dynamics lead to equilibrium in all but a vanishingly small fraction of generic games that have a pure Nash equilibrium.

\subsection{Related work in game theory}\label{sec:related}

Our paper is related to two different strands of the game theory literature: random games, and dynamics in games.

As we will see in the rest of the paper, we establish our results on the prevalence of connectivity properties in games by drawing games at random and determining the probability that the best-response graphs of such randomly drawn games have specific connectivity properties. The idea of drawing games at random dates back at least to \citet*{goldberg1968probability} and \citet*{dresher1970probability} who derived the distribution of pure Nash equilibria. The results were further developed in \citet*{powers1990limiting,stanford1995note,rinott2000number}. Nash equilibria in random games have also been studied in \citet*{barany2007nash,hart2008evolutionary,daskalakis2011connectivity,pei2023nash}, and dominance-solvability was studied in \citet*{alon2021dominance}. A growing literature studies dynamics in random games, e.g.~see \citet*{amiet2021pure,heinrich2023best,johnston2023game,collevecchio2024finding}. In particular, \citet*{mimun2024best,collevecchio2024basins,collevecchio2025findingnashequilibriumrandom,ashkenazigolan2025simultaneousbestresponsedynamicsrandom} have studied best-response dynamics in 2-player games with a large number of actions per player although, in these papers, the player's payoffs are drawn in a manner that either introduces correlation or ties, whereas the case we study here corresponds to $n$-player $k$-action random games in which payoffs are drawn in an i.i.d.\ manner from an atomless distribution.

Our paper is related to the literature on simple adaptive dynamics in games \citep*{fudenberg1998theory} and to impossibility results relating to the equilibrium convergence of such dynamics \citep*{young2009learning}. Simple adaptive dynamics include the well-studied classes of better and best-response dynamics \citep*{young2004strategic,sandholm2010population} but, as noted in \citet*{johnston2023game}, results on game connectivity have implications for the convergence of many different types of dynamics, including regret- and payoff-based dynamics \citep*{marden2007regret,marden2009payoff}, that are not necessarily simple adaptive dynamics.\footnote{\citet*{bichler2023computing,bichler2023learning} employ learning dynamics such as simultaneous online dual averaging, or SODA, to successfully compute Nash equilibria in auction settings in which no analytic solutions for the equilibria are known. \citet*{bichler2023computing} remark that ``the wide range of environments in which SODA converges [to a Nash equilibrium] is remarkable''; moreover, ``the fact that we do find equilibrium consistently in a wide variety of auction games demands a closer look'' \citep*{bichler2023learning}. The success of the dynamics employed in these papers is somewhat surprising because there are no theoretical guarantees that such dynamics must converge to an equilibrium. We consider the class of all generic games rather than auction games specifically, and SODA is not a simple adaptive dynamic (because it has $\infty$-recall rather than 1-recall). Nevertheless, our results can be seen as providing some reasons to expect dynamics to `typically' find pure Nash equilibria in (sufficiently large) generic games that have one because such games are typically connected.} As previously mentioned, \citet*{hart2003uncoupled,hart2006stochastic} showed that there is no simple adaptive dynamic that is guaranteed to converge to a pure Nash equilibrium in every game that has one.\footnote{Our focus throughout is on pure Nash equilibria. For the relationship between adaptive dynamics and other equilibrium notions see, for example, \citet*{hart2000simple} for correlated equilibria, and \citet*{vlatakis2020no} for mixed Nash equilibria.} This influential negative result has been elaborated on in \citet*{babichenko2012completely,jaggard2014self}; and more recently in \citet*{milionis2023impossibility}.

\section{Random subgraphs of directed Hamming graphs}\label{sec:random}

This section presents the main theorems of our paper on the connectivity properties of random subgraphs of directed Hamming graphs, and from which the results in the previous section follow. Sections~\ref{sec:overview} onwards present the proofs of our main theorems.

For non-negative integers $n$ and $k$, the \emph{Hamming graph $H(n,k)$} has vertex set $[k]^n$ and an edge between each pair of vertices which differ in exactly one coordinate. The \emph{directed Hamming graph} $\vv{H}(n,k)$ is the directed graph obtained from $H(n,k)$ by replacing each edge $\{u,v\}$ with a pair of anti-parallel directed edges $(u,v)$ and $(v,u)$. For $i\in[n]$, a \emph{line in dimension $i$} is a subset of $[k]^n$ of size $k$ in which all elements pairwise differ in only the $i$th coordinate, and a \emph{line} is simply a subset of $[k]^n$ which is a line in dimension $i$ for some $i\in[n]$.

We consider the random subdigraph $\Lnk$ of $\vv{H}(n,k)$ defined by the following process. For each line, independently pick one vertex in that line uniformly at random to be the \emph{winner} of that line. Form $\Lnk$ from $\vv{H}(n,k)$ by, for each line $L$, deleting the directed edges induced by $L$ whose endpoint is not the winner. For $x,y\in [k]^n$ we say that $x$ can \emph{reach} $y$ in $\Lnk$ if there is a directed path from $x$ to $y$ in $\Lnk$. Naturally, in this case we say that $y$ can be \emph{reached from} $x$. Note that every vertex can reach and be reached from itself. For a vertex $x\in[k]^n$ (or more generally a set of vertices $A\subseteq[k]^n$) we will say that a line $L$ of $[k]^n$ can reach $x$ (or $A$) in $\Lnk$ if the winner of $L$ can reach $x$ (or $A$) in $\Lnk$.

Our main result on $\Lnk$ is the following.
\begin{theorem}
\label{thm:n-small}
    Let $n = n(k) \geq 3$ be such that $n \leq k^{1/2 - \eps}$ for some $\eps > 0$, and let $p = (\eta_{n-1})^n$, where $\eta_x$ is the extinction probability of a Galton--Watson branching process with offspring distribution $\Poi(x)$. Denote the number of sinks in $\Lnk$ which can and cannot be reached by every non-sink by $X$ and $Y$ respectively.
    Then $X$ and $Y$ are asymptotically independent Poisson random variables with means $1-p$ and $p$. 
    That is,
        \[\lim_{k \to \infty} \left[\Prob \left(X = a, Y = b \right) - \frac{e^{-1} (1-p)^a p^b}{a! b!} \right]  = 0\]
    for all $a, b \in \mathbb{Z}^+$.
\end{theorem}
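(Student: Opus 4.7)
The plan is to show that the total number of sinks in $\Lnk$ converges to $\Poi(1)$, and that each sink is, asymptotically independently, \emph{good} (reachable from every non-sink) with probability $1-p$ or \emph{bad} with probability $p$. The key idea is that a reverse BFS started from a sink is well-approximated by a Galton--Watson branching process with offspring distribution $\Poi(n-1)$, started with $n$ independent copies, one for each line through the sink. The sink's in-component fails to cover the graph precisely when all $n$ of these branching processes go extinct, which happens with probability $\eta_{n-1}^n=p$.

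\textbf{Reverse BFS and branching-process comparison.} A fixed vertex $v$ is a sink with probability $k^{-n}$, so $\expec[X+Y]=1$. Fix a candidate sink $s$ and condition on it being a sink. I would explore the in-component $R(s)=\{v : v \text{ reaches } s\}$ by reverse BFS, organised at the level of \emph{lines}: the $n$ lines through $s$ form the initial generation. Given a discovered line $L$ (say in dimension $i$), each non-$s$ vertex $u\in L$ is prevented from winning its line in dimension $i$ (which $s$ wins), but wins $\Bin(n-1,1/k)$ of its remaining lines, each such won line contributing $k-1$ new reverse-neighbours. Summed over the $k-1$ vertices of $L$, the number of child lines of $L$ is $\Bin((n-1)(k-1),1/k)$, which converges to $\Poi(n-1)$ as $k\to\infty$. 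Two lines in distinct dimensions meet in at most one vertex, so while the explored family has $m$ lines, a freshly exposed line collides with earlier lines in at most $m$ of its $k$ vertices; the hypothesis $n\le k^{1/2-\eps}$ is comfortably enough to make these collisions negligible on the timescale where the branching-process approximation needs to hold. Consequently, if all $n$ initial processes go extinct (probability $\eta_{n-1}^n=p$), then $|R(s)|=O(k\cdot\text{tree size})=o(k^n)$ and $s$ is bad; if at least one survives, a sprinkling/second-moment step shows $R(s)$ contains every non-sink, and $s$ is good.

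\textbf{Joint Poisson limit.} To obtain the joint distribution of $(X,Y)$ I would use factorial moments: for $a,b\in\mathbb{Z}^+$, establish
\[
\expec\bigl[(X)_a(Y)_b\bigr]\longrightarrow (1-p)^a p^b,
\]
by summing over ordered $(a+b)$-tuples of distinct putative sinks. The dominant contribution comes from tuples whose vertices have pairwise Hamming distance $n$ (which is the typical case), for which the $n$ branching-process lines through distinct sinks are vertex-disjoint with high probability; the $(1-p)^a p^b$ factor then emerges from independent extinctions across the $a+b$ sinks, and the extra factor $e^{-1}/(a!\,b!)$ from the Poisson-$1$ limit of $X+Y$. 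Standard bounds handle the negligible contribution from tuples of sinks with smaller Hamming distance.

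\textbf{Main obstacle.} The hardest step is the survival half of the dichotomy: converting "the branching process survives" into "$R(s)$ contains \emph{every} non-sink", and not merely a $1-o(1)$ fraction of them. I expect this to require a two-round exposure argument in which the first round exposes enough randomness to witness survival of the tree up to, say, size $k^{1-o(1)}$, and the second round then shows that any specific vertex can be connected to the already-explored set via a short directed path; this is where the growth assumption $n\le k^{1/2-\eps}$ plays its real role, by ensuring that after the sprinkling step the random structure is still rich enough to absorb arbitrary fixed targets. Controlling the corrections to the branching-process approximation from line intersections in the first few generations, and showing independence across several competing basins in Step 3, are the main additional technical hurdles.
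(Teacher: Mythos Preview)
Your reverse-exploration/branching-process comparison and the factorial-moment computation are essentially what the paper does in its Theorem~\ref{thm:poisson} (Section~\ref{sec:number_of_good_and_bad_sinks}): sinks are classified as \emph{$\eps$-good} or \emph{$\eps$-bad} according to whether the backward exploration reaches $k^\eps$ lines, the exploration is coupled to a $\Poi(n-1)$ Galton--Watson tree started from $n$ individuals, and the joint Poisson limit follows by computing $\expec[(X)_a(Y)_b]$ over ordered tuples of candidate sinks. So that half of your outline is correct and matches the paper.

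The real divergence---and the genuine gap---is in your ``survival half''. The paper does \emph{not} argue directly that a surviving backward tree absorbs every non-sink via sprinkling. Instead it builds an intermediate structural object: \emph{good cycles} in two-dimensional slices (long cycles with large basins in the sub-games where only players $1$ and $2$ move). It then proves three separate lemmas: every non-sink reaches a good cycle with failure probability $k^{-cn}$ (Lemma~\ref{lem:reaching_cycle}), all good cycles lie in one strongly connected component (Lemma~\ref{lem:good-cycles-same-component}), and every $\eps$-good sink is reachable from some good cycle (Lemma~\ref{lem:cycle_to_sink}). The point of this machinery is precisely to get the \emph{per-vertex} failure probability small enough to survive a union bound over all $\sim nk^{n}$ pairs $(x,y)$ of non-sinks with a witnessing out-edge. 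Your two-round sprinkling sketch does not explain how to achieve this: after round one the explored in-set has only $O(k^{1+\eps})$ vertices, so a forward walk from an arbitrary non-sink must grow to size roughly $k^{n-1}$ before it is likely to hit that set, and you would need this growth to succeed with failure probability $k^{-cn}$ while the forward and backward explorations are correlated through the same random winners. The paper's slice-and-cycle decomposition is exactly what decouples these two explorations, and the authors emphasise that the many-player ``good vertex'' approach fails here because almost every vertex is a source when $k\gg n$. Your proposal correctly flags this step as the obstacle, but does not supply a substitute for the good-cycle backbone, which is the paper's main technical contribution.
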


It is easy to deduce \Cref{prop:large_k} from this result. 
Indeed, the game corresponding to $\Lnk$ is connected if $X \geq 1$ and $Y = 0$, and since these are asymptotically independent Poisson random variables with means $1-p$ and $p$ respectively, the probability that the game is connected tends to $(1 - e^{p-1})e^{-p}$. The total number of sinks tends in distribution to a $\Poi(1)$ random variable so the probability that the game has at least one sink tends to $1 - e^{-1}$. 
Finally, by \Cref{lem:GW}, the extinction probability $\eta_{n-1}$ is the smallest non-negative solution to $x = e^{(n-1)(x-1)}$ and $p = (\eta_{n-1})^n$.

We remark that when $n \to \infty$ as $k \to \infty$ sufficiently slowly, the above theorem shows that every non-sink can reach every sink with high probability. In fact, the important factor is that $n \to \infty$, as shown by the following theorem.
This is equivalent to \Cref{thm:inf_k}.

\begin{theorem}
\label{thm:n-large}
    Let $A(k,n)$ be the event that every non-sink can reach every sink in $\Lnk$. Then, as $n \to \infty$,
\[\inf_{k \geq 2} \Prob(A(k,n)) \to 1.\]
\end{theorem}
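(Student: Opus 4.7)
The plan is to prove \Cref{thm:n-large} by combining \Cref{thm:n-small} of this paper, which covers the regime where $k$ is large relative to $n$, with the main result of \citet*{johnston2023game}, which covers the complementary regime where $k$ is small relative to $n$. I would fix a small $\eps > 0$ and a threshold $k^*(n) := \lceil n^{2+\eps} \rceil$, and then show that $\Prob(A(k,n)) \to 1$ uniformly on each of the two ranges $k \geq k^*(n)$ and $2 \leq k < k^*(n)$.

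On the range $k \geq k^*(n)$, we have $n \leq k^{1/(2+\eps)} \leq k^{1/2-\eps'}$ for some $\eps' > 0$, so the hypothesis of \Cref{thm:n-small} is satisfied. Since the event $A(k,n)$ coincides with $\{Y = 0\}$ in the notation of that theorem, \Cref{thm:n-small} gives $\Prob(A(k,n)) \to e^{-p}$ with $p = (\eta_{n-1})^n$. Standard branching-process estimates yield $\eta_{n-1} \sim e^{-(n-1)}$ as $n \to \infty$, so $p$ decays like $e^{-n^2}$ and $e^{-p} \to 1$ very rapidly. Provided the convergence in \Cref{thm:n-small} carries a quantitative rate in $k$, this yields $\Prob(A(k,n)) \geq 1 - o_n(1)$ uniformly in $k$ on this range.

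On the complementary range $2 \leq k < k^*(n)$, the parameter $n$ is large relative to $k$ and I would invoke a quantitative form of the main result of \citet*{johnston2023game}. Any $k$ in this range satisfies $n > k^{1/(2+\eps)}$, and since their techniques handle $k$ growing polynomially in $n$, their bound covers the entire range. Combining the two cases gives $\inf_{k \geq 2} \Prob(A(k,n)) \to 1$ as $n \to \infty$.

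The main obstacle I anticipate is quantitative uniformity: \Cref{thm:n-small} is phrased as an asymptotic for each slowly growing function $n(k)$, but the infimum demands an error bound uniform in $k$. Extracting such a bound from the proof of \Cref{thm:n-small}, and verifying it is dominated by the (super-exponentially small) quantity $1 - e^{-p}$, is the main technical step. A secondary concern is aligning the threshold $k^*(n)$ so that no gap in $k$ remains uncovered between the two regimes; this should reduce to a careful choice of the exponent $2+\eps$ and an appeal to the correct quantitative version of the \citet*{johnston2023game} bound.
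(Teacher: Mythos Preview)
Your overall plan---split on a threshold in $k$ and cover the small-$k$ side with \citet*{johnston2023game} and the large-$k$ side with the machinery of this paper---is exactly what the paper does. However, your execution has a genuine gap, and it is precisely the issue you flagged as ``secondary''.

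The problem is the threshold. Theorem~\ref{thm:main_grids} from \citet*{johnston2023game} only applies when $k \le \delta\sqrt{n/\log n}$, i.e.\ roughly $k \lesssim n^{1/2}$. On the other hand, \Cref{thm:n-small} requires $n \le k^{1/2-\eps}$, i.e.\ roughly $k \gtrsim n^{2+}$. No choice of exponent in your threshold $k^*(n)=\lceil n^{2+\eps}\rceil$ closes the interval $n^{1/2} \lesssim k \lesssim n^{2}$: for $k$ in this range neither result applies. Your assertion that the \citeauthor*{johnston2023game} result ``handles $k$ growing polynomially in $n$'' overstates its range, and the hope that this ``should reduce to a careful choice of the exponent'' is unfounded.

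The paper's fix is not to sharpen \Cref{thm:n-small} but to bypass it entirely on the large-$k$ side. It sets the threshold at $k = \delta\sqrt{n/\log n}$ (so Theorem~\ref{thm:main_grids} covers everything below), and for $k$ above this threshold it invokes the \emph{ingredients} of \Cref{thm:n-small} rather than the theorem itself: \Cref{lem:no_bad_sinks} (valid for all $n,k\ge 2$) shows there are no bad sinks with probability $1 - (\eta_{n-1})^n - O(k^{-0.8}) \to 1$, and \Cref{lem:good-cycles-same-component,lem:reaching_cycle,lem:cycle_to_sink} (valid for all $n\ge 3$ and all sufficiently large $k$, with no upper bound on $n$) show every non-sink reaches every good sink. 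Since $k \ge \delta\sqrt{n/\log n} \to \infty$, the ``$k$ large enough'' hypotheses are eventually satisfied. These lemmas already carry explicit quantitative bounds, which also resolves your uniformity concern without further work.
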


\section{Proofs of Theorems \ref{thm:n-small} and \ref{thm:n-large}}\label{sec:overview}

In this section, we use a series of results (which we prove later) to give proofs of our main results (\Cref{thm:n-small,thm:n-large}).
We will also briefly sketch how we prove these technical results as we use them.

We will work with what we call ``good sinks''. These are sinks which can be reached from lots of different lines; the idea being that a sink should either be reachable from every non-sink or from only a small number of lines. In fact, ``lots'' can be taken to be fairly small, which allows us to preserve much of the randomness in $\Lnk$, even while revealing that a particular vertex is a good or bad sink. We make the following definition.

\begin{definition}
    A sink in $\Lnk$ is said to be \emph{$\eps$-good} if it can be reached from at least $k^{\eps}$ lines in $\Lnk$, and \emph{$\eps$-bad} otherwise. For clarity, we will often suppress the $\eps$ and refer to sinks as simply \emph{good} or \emph{bad}.
\end{definition}

We start by considering the number of good and bad sinks. To decide if a sink is good or bad, we explore backwards from the sink until we have either found every point which can reach the sink or at least $k^{\eps}$ lines which can reach the sink. Since there are not many sinks (the total number of sinks follows a $\Poi(1)$ distribution asymptotically) and $k^{\eps}$ is small, whether different sinks are good or bad should be ``roughly independent''. This means that we expect the number of good sinks to follow a $\Poi(1-p)$ distribution, while the number of bad sinks should follow a $\Poi(p)$ distribution, where $p$ is the probability that a given sink is bad.

To determine the value of $p$, we consider exploring backwards from a single sink. The number of lines at each step of the exploration process grows like a Galton--Watson branching process which starts with $n$ individuals (since the vertex is a sink) and has offspring distribution $\Bin((n-1)(k-1), 1/k)$. Since this binomial distribution is close to a Poisson distribution with mean $n - 1$ (when $k$ is large), we might hope that the branching process behaves much like one with offspring distribution $\Poi(n-1)$, at least when $n$ is small compared to $k$. If this branching process goes extinct, it is likely to happen when the tree is fairly small, and so reaching a total of $k^{\eps}$ individuals should be quite close to failing to go extinct. This suggests that the probability a sink is bad should be around $(\eta_{n-1})^n$, where we recall that $\eta_x$ is the extinction probability of a Galton--Watson branching process which starts with a single individual and has offspring distribution $\Poi(x)$. 

Formalising the above gives the following theorem, which we prove in Section~\ref{sec:number_of_good_and_bad_sinks}.

\begin{restatable}{theorem}{poisson}\label{thm:poisson}
 Fix $\eps < 1/2$. Let $n = n(k)$ be such that $n \leq k^\eps/\log(k)$, and let $p = p(k) = (\eta_{n-1})^n$. Let $X$ and $Y$ be the number of $\eps$-good and $\eps$-bad sinks in $\Lnk$ respectively. Then $(X, Y) \overset{d}{\to} (X', Y')$ as $k \to \infty$, where $X'$ and $Y'$ are independent Poisson random variables with rates $1- p$ and $p$ respectively.
\end{restatable}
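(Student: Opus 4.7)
My approach is the method of factorial moments: I would verify that for each fixed $a, b \geq 0$,
\[
\mathbb{E}\bigl[(X)_a (Y)_b\bigr] \;\longrightarrow\; (1-p)^a p^b, \qquad (Z)_r := Z(Z-1)\cdots (Z-r+1),
\]
which by the standard inversion formula gives $\Prob(X=a,Y=b) - e^{-1}(1-p)^a p^b/(a!b!) \to 0$ and hence the claimed joint convergence to independent $\Poi(1-p)$ and $\Poi(p)$. Expanding each falling factorial as a sum of indicators over ordered tuples of distinct vertices,
\[
\mathbb{E}\bigl[(X)_a (Y)_b\bigr] = \sum_{\substack{v_1, \ldots, v_a,\, u_1, \ldots, u_b \\ \text{distinct}}} \Prob\bigl(\text{each } v_i \text{ is an $\eps$-good sink, each } u_j \text{ is an $\eps$-bad sink}\bigr).
\]
The number of ordered tuples is $(1+o(1))k^{n(a+b)}$. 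A vertex is a sink iff it wins each of its $n$ lines (probability $k^{-n}$), and two distinct vertices share a line only if they differ in exactly one coordinate (an $O(nk^{1-n})$ fraction of pairs). So with probability $1-o(1)$ the $n(a+b)$ lines through the tuple are disjoint and the sink-events are independent Bernoullis of parameter $k^{-n}$, contributing a factor $(1+o(1))k^{-n(a+b)}$.

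\textbf{Backward exploration coupled with a Galton--Watson tree.} Conditional on each tuple vertex being a sink, I would analyse its good/bad status via a backward exploration of the lines that reach it. Starting from a sink $v$, the $n$ lines through $v$ form the initial explored set; processing a line $L$ with winner $w$ creates, for each non-winner $u \in L\setminus\{w\}$ and each of $u$'s $n-1$ other lines $L'$, a child line precisely when $u$ is the winner of $L'$. This gives offspring distribution $\Bin((n-1)(k-1), 1/k)$ per processed line. Halting the exploration once $k^\eps$ lines have been seen (or on extinction), I would couple it step-by-step with a Galton--Watson tree of offspring distribution $\Poi(n-1)$ seeded by $n$ roots: by Le Cam's inequality the per-step $\Bin$--$\Poi$ total-variation distance is $O(n/k)$, so the cumulative coupling error over $k^\eps$ steps is $O(nk^{\eps-1}) = o(1)$ under the hypothesis $n \leq k^\eps/\log k$ with $\eps < 1/2$. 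The $n$-seeded $\Poi(n-1)$ Galton--Watson process is extinct with probability $\eta_{n-1}^n = p$, and its total progeny conditional on extinction has rapidly decaying tail, so ``extinct before size $k^\eps$'' and ``extinct'' differ by $o(1)$. Hence a single sink is $\eps$-bad with probability $p + o(1)$ and $\eps$-good with probability $(1-p) + o(1)$.

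\textbf{Asymptotic independence across the sinks.} Each stopped exploration touches at most $k^\eps$ lines and hence at most $k^{1+\eps}$ vertices, so the union of the $a+b$ explorations involves $O(k^{1+\eps}) = o(k^n)$ vertices (using $n \geq 2$ and $\eps < 1/2$). With probability $1-o(1)$ the footprints are pairwise vertex-disjoint and avoid the other tuple vertices, so the $a+b$ explorations are driven by disjoint, independent subsets of the random line-winners. On this event the good/bad indicators are mutually independent, yielding a joint conditional probability of $((1-p)+o(1))^a(p+o(1))^b = (1-p)^a p^b + o(1)$. Combining the tuple count $(1+o(1))k^{n(a+b)}$, the sink probability $(1+o(1))k^{-n(a+b)}$, and this factorized conditional probability gives $\mathbb{E}[(X)_a(Y)_b] \to (1-p)^a p^b$, as required.

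\textbf{Main obstacle.} The technical heart is controlling the coupling uniformly in $n$ on both sides of the extinction dichotomy: (i) on extinction, one needs the total progeny of the $n$-seeded $\Poi(n-1)$ tree to exceed $k^\eps$ only with $o(1)$ probability, which is most delicate when $n$ is small and the branching process is near-critical (for $n=2$ one must replace exponential tail bounds with the $O(k^{-\eps/2})$ decay for critical Galton--Watson progeny); (ii) on survival, the exploration must reach $k^\eps$ lines before the cumulative $\Bin$--$\Poi$ error $O(nk^{\eps-1})$ or the inter-line collision probabilities $O(k^{2\eps-n})$ become significant. The hypothesis $n \leq k^\eps/\log k$ is precisely calibrated to balance these sources of error, and making the two-sided coupling quantitative is where the bulk of the technical work sits.
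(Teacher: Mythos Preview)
Your proposal is correct and takes essentially the same route as the paper: factorial moments, a backward line-exploration from each putative sink coupled via Le Cam to an $n$-rooted $\Poi(n-1)$ Galton--Watson tree, and a collision bound to justify that the $a+b$ explorations behave almost independently.

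Two small points where the paper's execution differs from your sketch. First, your collision estimate $O(k^{2\eps-n})$ is miscomputed; the correct order is $O(k^{2\eps-1})$ --- at most $(a+b)\lceil k^\eps\rceil$ lines ever enter $A\cup P$, each such line can be re-checked at most once per processed line, and each check contributes $O(1/k)$ to the coupling error. This is still $o(1)$ for $\eps<1/2$, so no harm done. Second, your independence argument (``with probability $1-o(1)$ the footprints are vertex-disjoint, hence the explorations use disjoint randomness'') is not quite right as stated: disjointness of the random footprints is not an event one can condition on to obtain independence, and even with disjoint footprints exploration~2 may check a line whose winner exploration~1 already revealed. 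The paper sidesteps this by defining a single modified exploration that inserts \emph{phantom lines} (with probability $1/k$) whenever a previously-revealed line is checked, so that the modified process is \emph{exactly} a disjoint union of independent $\Bin((n-1)(k-1),1/k)$ Galton--Watson trees; the probability that the phantom and true processes ever diverge is then bounded directly by the $O(k^{2\eps-1})$ collision count. This is a cleaner formalisation of precisely the obstacle you flag, not a different idea.
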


We note that $p \to 0$ as $n \to \infty$, and moreover, if $n \geq k^{\eps}$, then every sink is an $\eps$-good sink. From these two observations, it is not hard to show that every sink is a $\delta$-good sink (for some constant $\delta > 0$) with high probability as $n \to \infty$. More explicitly, we will deduce the following basic quantitative result from the proof of \Cref{thm:poisson}.

\begin{restatable}{lemma}{noBadSinks}\label{lem:no_bad_sinks}
    For all $n, k \geq 2$, the probability that there is an $\eps$-bad sink is at most
    \[(\eta_{n-1})^n + \frac{11}{k^{1-2\eps}}.\]
\end{restatable}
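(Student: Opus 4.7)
The strategy I would use is to track constants through the exploration-and-branching-process argument used in the proof of \Cref{thm:poisson}. First, by Markov's inequality and the vertex-transitivity of $\Lnk$,
\[
\Prob(\exists\,\eps\text{-bad sink}) \;\leq\; \expec[\#\,\eps\text{-bad sinks}] \;=\; k^n \cdot \frac{1}{k^n} \cdot \Prob(v_0 \text{ is $\eps$-bad}\mid v_0 \text{ is a sink}) \;=\; \Prob(v_0 \text{ is $\eps$-bad}\mid v_0 \text{ is a sink}),
\]
using that a fixed $v_0$ is a sink with probability $1/k^n$. If $n \geq k^{\eps}$, every sink is reached by its own $n \geq k^{\eps}$ lines, so is automatically $\eps$-good and the bound is trivial; henceforth I assume $n < k^{\eps}$.

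Conditioning on $v_0$ being a sink, the winners of the remaining $nk^{n-1}-n$ lines are i.i.d.\ uniform over their lines. I would perform a backward breadth-first exploration of the set $\cL$ of lines that can reach $v_0$, stopping as soon as $k^{\eps}$ lines have been found, and couple this with a Galton--Watson tree $\tau$ that starts from $n$ individuals and has offspring distribution $\Bin((n-1)(k-1),1/k)$: each discovered line contributes $k-1$ non-``parent'' vertices, each of which may win $n-1$ perpendicular lines independently with probability $1/k$. Up to \emph{collisions} -- pairs of branches of $\tau$ labelled by the same Hamming line -- one has $|\cL| = |\tau|$, so
\[
\Prob(v_0 \text{ is $\eps$-bad}\mid v_0 \text{ is a sink}) \;\leq\; \Prob(|\tau| < k^{\eps}) + \Prob(\text{some collision among the first }k^{\eps}\text{ branches of }\tau).
\]

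The first term is at most the extinction probability $(\eta_{nk})^n$, where $\eta_{nk}$ is the smallest positive fixed point of the generating function of $\Bin((n-1)(k-1),1/k)$. Since $(1+(s-1)/k)^{k} \to e^{s-1}$, Taylor-expanding the fixed-point equation and using that $(n-1)\eta_{n-1}$ is bounded away from $1$ for $n \geq 3$ gives $\eta_{nk} - \eta_{n-1} = O(n/k)$, so $(\eta_{nk})^n \leq (\eta_{n-1})^n + O(n^2/k) \leq (\eta_{n-1})^n + O(1/k^{1-2\eps})$ in the regime $n < k^{\eps}$. For the collision probability I would use a first-moment bound: there are $\binom{k^{\eps}}{2}$ pairs of branch-positions, and each pair labels the same Hamming line only when a specific coordinate match occurs (a $O(1/k)$ event), giving an expected collision count of $O(k^{2\eps-1})$. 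The main obstacle is making this collision estimate rigorous, since branches of $\tau$ are concentrated near $v_0$ in the Hamming graph rather than spread uniformly, so the per-pair $O(1/k)$ estimate requires care; I would handle this by conditioning on the depth of each branch and counting the possible target lines at that depth. Combining the two estimates with explicit constants then yields the stated bound $(\eta_{n-1})^n + 11/k^{1-2\eps}$.
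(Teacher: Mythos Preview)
Your approach is essentially the paper's: Markov's inequality and vertex-transitivity reduce to a single sink, the case $n\geq k^\eps$ is trivial, and the conditional bad-probability is bounded by coupling the backward exploration to a Galton--Watson process with $\Bin((n-1)(k-1),1/k)$ offspring, then decomposing into a small-population term plus a collision term. The two technical points where the paper differs may make your life easier. For the $\Bin$-to-Poisson comparison, the paper does not compare extinction probabilities via fixed-point perturbation; instead it couples the two offspring distributions step by step using Le Cam's theorem (total-variation distance at most $3(n-1)/k$ per step), and sums over the at most $k^\eps$ steps of the stopped exploration to get $\Prob(\tau<k^\eps)\leq (\eta_{n-1})^n + 3(n-1)k^{\eps-1}$, which is at most $(\eta_{n-1})^n + 3k^{2\eps-1}$ once $n<k^\eps$. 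For the collision term, the paper avoids the depth-conditioning you anticipate: it tracks an explicit coupled ``phantom-line'' process and simply notes that at most $\lceil k^\eps\rceil$ lines are ever revealed, each re-checked at most once per explored line, so the total number of dangerous checks is at most $\lceil k^\eps\rceil^2\leq 4k^{2\eps}$, each causing a discrepancy with probability at most $1/k$; two directions of failure give the bound $8k^{2\eps-1}$. The $3+8=11$ in the statement is exactly this accounting.
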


We now turn our attention to showing that every non-sink can reach every good sink with high probability. To do this we will find a large strongly connected component in $\Lnk$ which is well spread throughout the graph. The key idea here is to split the graph into ``slices'' where only the first two coordinates vary, and to consider ``good cycles'' in the slices. We will show that these are ubiquitous and that they are all in the same connected component, giving the required strongly connected component. The cycles will need to be able to reach lots of vertices, which we guarantee by asking for them to be of size at least $\sqrt{k}$, and also to be easy to reach. For this, we ask that the cycle has a large ``basin'' of points that can reach the cycle.

Formally, we define slices and good cycles as follows.

\begin{definition}
    For $\mbf{a} = (a_1, \dots, a_{n-2}) \in [k]^{n-2}$, the \emph{slice} at $\mbf{a}$, denoted by $S(\mbf{a})$, is the subgraph of $\Lnk$ induced on $\{\mathbf{x} : x_{i+2} = a_i ~\forall i \in [n-2]\}$. 
\end{definition}

\begin{definition}
    A cycle contained wholly in a slice $S(\mbf{a})$ is a \emph{good} cycle if it is of length at least $\sqrt{k}$ and the number of vertices in $S(\mbf{a})$ which can reach the cycle (via directed paths in $S(\mbf{a})$) is at least $k^2/(800 \log (k))$. 
\end{definition}

To show that the good cycles are ubiquitous with high probability, we show that the probability that a single slice contains a good cycle is at least $1/100$. Since each slice contains a good cycle independently, there must be many good cycles and these will be fairly well spread throughout $[k]^n$.

\begin{restatable}{lemma}{goodCycle}\label{lem:good_cycle}
    The probability that a slice contains a good cycle is at least $1/100$ for all large enough $k$.
\end{restatable}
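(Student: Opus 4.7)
The plan is to reduce the lemma to a statement about the functional digraph of a random function on $[k]$ and then carry out a birthday-type exploration. First, identify the slice $S(\mathbf{a})$ with $[k]^2$: the winner of row $y$ is $(w(y),y)$ and the winner of column $x$ is $(x,u(x))$, where $w,u\colon[k]\to[k]$ are independent uniform random functions. Only row- and column-winners have incoming edges, so every cycle in $S(\mathbf{a})$ alternates between row- and column-winners (sinks, which lie in the intersection, have no out-edges) and corresponds bijectively to a cycle of length $\ge 2$ of $f=u\circ w\colon[k]\to[k]$; an $f$-cycle of length $\ell$ lifts to a slice-cycle of length $2\ell$. Moreover, if $T\subseteq[k]$ is the $f$-basin of the $f$-cycle, then the slice-basin contains $[k]\times T$. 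So it suffices to produce, with probability $\ge 1/100$, an $f$-cycle of length $\ge\sqrt{k}/2$ whose $f$-basin has size $\ge k/(800\log k)$.

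For the long cycle I would start a forward exploration at a uniform $x_0\in[k]$ and reveal $x_{t+1}=f(x_t)$ one step at a time via the two queries $w(x_t)$ then $u(w(x_t))$. So long as the $x_i$ and the $w(x_i)$ are all distinct, the new pair $(w(x_t),u(w(x_t)))$ is conditionally uniform on $[k]^2$, and the chance of a collision at step $t$ is at most $2t/k$. A direct birthday computation then gives that the first collision time $\tau$ falls in $[\alpha\sqrt{k},\beta\sqrt{k}]$ with probability bounded below, and that, conditional on the $w,u$-trace, the rendezvous index $J<\tau$ with $x_\tau=x_J$ is close to uniform on $\{0,\dots,\tau-1\}$; hence the $f$-cycle length $\tau-J$ exceeds $\sqrt{k}/2$ with constant probability.

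For the large basin I would condition on the first trail, which occupies $\Theta(\sqrt{k})$ distinct values in each of the $w$- and $u$-domains, and consider a second independent uniform $y\in[k]$. Running the same two-layer birthday estimate on the still-uniform parts of $w$ and $u$ shows that $y$'s forward $f$-orbit meets the first trail within $O(\sqrt{k})$ further steps with probability at least a positive constant; since meeting the trail forces $y$ into the basin of the cycle already found, linearity gives expected $f$-basin of size $\Omega(k)$, and a reverse Markov argument promotes this to $\ge ck$ with constant probability, which comfortably beats $k/(800\log k)$. Tuning the constants in the birthday estimates so that both events hold with probability close to $1$, a union bound then yields the simultaneous bound $1/100$.

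The main technical obstacle is that $f=u\circ w$ is not itself a uniform random function, so the birthday analysis must track two layers of randomness: a collision in the $f$-orbit can arise either from $w(x_t)$ matching some earlier $w(x_j)$ or from $u(w(x_t))$ returning a previously seen value. I would handle this by decomposing each of $w$ and $u$ into its already-revealed part and an independent fresh part, and re-running the birthday estimate on the fresh part whenever needed. A secondary subtlety is combining the long-cycle and large-basin events into a single event of probability at least $1/100$; I would address this by tightening the constants in the birthday inequalities so that each individual event holds with probability close to $1$, rather than merely bounded below, after which the union bound is comfortable.
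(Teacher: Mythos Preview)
Your reduction to the functional digraph of $f=u\circ w$ is correct and elegant: $f$-cycles of length $\ell\ge 2$ do lift to slice-cycles of length $2\ell$, and the slice-basin does contain $[k]\times T$ where $T$ is the $f$-basin. The birthday analysis for finding a long cycle from a single start is also fine and is essentially what the paper does in different language. However, your combining step contains a genuine error. The event ``the trail from $x_0$ ends in an $f$-cycle of length $\ge\sqrt{k}/2$'' has probability bounded away from $1$ by an absolute constant (the paper shows it is at least $1/9$, but it is also at most a constant strictly below $1$); no amount of ``tightening constants'' changes this, because the threshold $\sqrt{k}/2$ is dictated by the definition of a good cycle and is not a free parameter. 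Likewise, the reverse-Markov step only gives the large-basin event with constant probability, not probability close to $1$. So a union bound is simply unavailable here. The right combining is the one implicit in your own step 3: condition on the long-cycle event and multiply $P(\text{long cycle})\cdot P(\text{large basin}\mid\text{long cycle})$. Your second-start argument for the basin also needs more care than you indicate: you must show that $y$'s orbit hits the first trail \emph{before} self-colliding, since a self-collision puts $y$ in a different basin; this is doable but is an additional competing-birthday estimate you have not accounted for.

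The paper takes a genuinely different and simpler route for the large basin. Rather than arguing that the \emph{specific} cycle found from $x_0$ has a large basin, it observes that the expected total number of slice-cycles of length $\ge 4$ is at most $\sum_{r\ge 2}1/r\le 2\log k$ (a one-line first-moment calculation), so by Markov there are at most $\lambda\log k$ cycles with probability $\ge 1-2/\lambda$. Separately, the birthday estimate gives $\mathbb{E}[\#\text{vertices in the basin of some long cycle}]\ge k^2/9$, and reverse Markov then gives $\ge\delta k^2$ such vertices with constant probability. Pigeonhole over the at most $\lambda\log k$ long cycles then forces one of them to have basin $\ge \delta k^2/(\lambda\log k)$. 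This sidesteps the second-start argument entirely, makes the combining of events routine (two Markov-type bounds, subtracted), and explains exactly where the $\log k$ in the definition of ``good cycle'' comes from. Your approach, if carried through carefully, would actually give a stronger conclusion (basin $\Omega(k^2)$ rather than $\Omega(k^2/\log k)$), but at the cost of a more delicate analysis and without the clean route to the explicit $1/100$.
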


To show that they are all in the same component, we first show that, with high probability, any two good cycles which are in slices which differ in just one coordinate are connected. We then only need the slices which contain a good cycle to form a connected component of $H(n-2, k)$.

\begin{restatable}{lemma}{goodCyclesSameComponent}\label{lem:good-cycles-same-component}
    There is a constant $c > 0$ such that, for all large enough $k$ and $n \geq 3$, the probability that all good cycles are in the same strongly connected component is at least $1 - \exp( - c n \sqrt{k}/\log(k))$.
\end{restatable}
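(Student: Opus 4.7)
The plan is to split the argument into a percolation step and a pairwise-connectivity step, then combine them by transitivity. For the percolation step, Lemma \ref{lem:good_cycle} says each slice $S(\mbf{a})$ (with $\mbf{a} \in [k]^{n-2}$) independently contains a good cycle with probability at least $1/100$, so the set $T \subseteq [k]^{n-2}$ of ``good slices'' is a random vertex subset of $H(n-2, k)$ of density at least $1/100$. Standard site-percolation estimates for Hamming graphs (whose minimum degree is $(n-2)(k-1)$) show that with failure probability $\exp(-\Omega(nk))$, the subgraph of $H(n-2, k)$ induced on $T$ is connected. For the pairwise step, I will show that for each edge $\{\mbf{a}, \mbf{a}'\}$ of $H(n-2, k)$ with both endpoints in $T$, the good cycles $C, C'$ in $S(\mbf{a}), S(\mbf{a}')$ lie in the same strongly connected component of $\Lnk$ with conditional probability at least $1 - \exp(-\Omega(\sqrt{k}/\log k))$.

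For the pairwise step, let $j = i+2$ be the dimension where $\mbf{a}, \mbf{a}'$ differ. The two slices are probabilistically independent (their internal edges depend on disjoint sets of lines, since all lines inside each slice are in dimensions $1$ and $2$), and for each $(x, y) \in [k]^2$ the line in dimension $j$ through $(x, y, \mbf{a})$ and $(x, y, \mbf{a}')$ has a uniformly random winner, giving a cross-slice edge in either direction with probability $1/k$, independently across $(x, y)$. To construct a path $C' \to C$, I first count the vertices in $S(\mbf{a}')$ whose dimension-$j$ line has its winner in $B_1$: by independence this count is distributed as $\Bin(|B_1|, 1/k)$, so a Chernoff bound shows the resulting set $D \subseteq S(\mbf{a}')$ has size at least $k/(1600\log k)$ with failure $\exp(-\Omega(k/\log k))$. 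Every vertex in $D$ has a one-edge path into $B_1$ and hence reaches $C$. Next, using that $S(\mbf{a}')$ has the same law as $\vv{L}(2, k)$, which with high probability has a giant strongly connected component $\Gamma$ of size $\Theta(k^2)$, and that the good cycle $C'$ lies in $\Gamma$ (its large backward basin forces $C'$ into the giant SCC), the set $D$ intersects $\Gamma$ with overwhelming probability; since $\Gamma$ is strongly connected, $C'$ reaches every vertex of $D \cap \Gamma$, producing the path $C' \to D \to B_1 \to C$. The symmetric argument gives $C \to C'$, so $C$ and $C'$ lie in a common strongly connected component.

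The main obstacle is the structural claim underpinning the pairwise step: that $\vv{L}(2, k)$ has a giant SCC of size $\Theta(k^2)$ with failure probability $\exp(-\Omega(\sqrt{k}/\log k))$, and that good cycles (with large backward basin) lie inside it. This will require a quantitative giant-SCC analysis of $\vv{L}(2, k)$, where the $\sqrt{k}$ in the exponent ultimately reflects the lower bound on good-cycle length. Given this, combining the pairwise step with the percolation step via a union bound over the $O(nk^{n-1})$ adjacency pairs of $H(n-2, k)$, with polynomial factors absorbed into the exponent, yields the claimed overall bound $\exp(-c n \sqrt{k}/\log k)$.
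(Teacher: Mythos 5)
There is a genuine gap, and it sits exactly where you flagged your ``main obstacle'': the structural claim about $\vv{L}(2,k)$ is false. In $\vv{L}(2,k)$ every vertex has at most two out-edges (one per line, pointing to that line's winner), and every vertex lying on a cycle must be the winner of the line it was entered along, so it has a \emph{unique} out-edge inside the slice. Consequently the non-trivial strongly connected components of $\vv{L}(2,k)$ are precisely vertex-disjoint alternating cycles, each of length at most $2k$; there is no giant SCC of size $\Theta(k^2)$. A good cycle has a large \emph{backward} basin, but the set of vertices reachable \emph{forwards} from $C'$ within its slice is just $C'$ itself, of size $O(k)$. So your step ``$C'$ reaches every vertex of $D\cap\Gamma$'' collapses: for $C'$ to reach your set $D$ (of size about $k/(1600\log k)$) using only edges inside $S(\mbf{a}')$ you would need $D\cap C'\neq\emptyset$, which has probability only $O(1/\log k)$, nowhere near $1-\exp(-\Omega(\sqrt{k}/\log k))$. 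This asymmetry is the whole difficulty of the lemma: the large basin can only be used as a \emph{target} for edges coming into the slice, never as something the cycle itself can spread out to. The paper's proof therefore leaves the slice entirely and runs a three-round expansion from $C_1$ through many intermediate slices (growing the forward-reachable set from $\Theta(\sqrt{k})$ vertices to $\Theta(k)$ to $\Theta(k^{3/2})$, spread over $\Theta(k)$ slices) before taking one final step in the third dimension aimed at the basin of $C_2$; only then is the expected number of hits on the basin $\Theta(\sqrt{k}/\log k)$, giving the claimed failure probability.

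A second, independent problem is the $n$-dependence of the exponent. Even granting a per-pair bound of $\exp(-\Omega(\sqrt{k}/\log k))$, a union bound over the $\Theta(nk^{n-1})$ adjacent pairs gives $nk^{n-1}\exp(-\Omega(\sqrt{k}/\log k))$, and the factor $k^{n-1}=\exp((n-1)\log k)$ is \emph{not} absorbable when $n$ is large relative to $k$ (the lemma must hold for all $n\ge 3$, including $n$ growing with $k$ fixed, as it is applied in the proof of Theorem~\ref{thm:n-large} with $k\approx\sqrt{n/\log n}$). The paper handles this by making the per-pair failure probability itself decay like $\exp(-cn\sqrt{k}/\log k)$: for $n\ge 4$ it routes from $C_1$ to $C_2$ via intermediate good cycles $C_1',C_2'$ in roughly $n$ independent choices of auxiliary dimension, so the $n$ in the exponent comes from independence across dimensions, not from the union bound. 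Your percolation step and your use of the dimension-$j$ lines to step from $S(\mbf{a}')$ into the basin $B_1$ are both sound and match the paper, but the pairwise step needs to be rebuilt along these lines.
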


To finish the proof, we will need two more things to occur with high probability: we need every non-sink to reach a good cycle and every good sink to be reachable from some good cycle. The first of these is shown by exploring out from a non-sink until we have encountered lots of different slices. With some care, we can ensure that the exploration process has revealed nothing about many of the slices and we can then reveal the winners in these slices to see if the process has reached a good cycle. 

The second is shown in a similar manner, except by working backwards from the sink. Unfortunately, in this case, we do not manage to revealing information about the winners in many of the slices. However, in most slices we will have only revealed that a small subset of the vertices are sources, which is not an unlikely event, and this revelation has only a small effect on the distribution of the winner of each line within the slice. 

\begin{restatable}{lemma}{reachingCycle}\label{lem:reaching_cycle} 
    There is a constant $c > 0$ such that, for all $k$ sufficiently large and $n \geq 3$, the probability that every non-sink can reach a good cycle is at least $1 - k^{-c n}$.
\end{restatable}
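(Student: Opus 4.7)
My plan is to show that for each fixed non-sink $x$, the probability that $x$ cannot reach a good cycle is at most $k^{-(c+1)n}$, and then take a union bound over the at most $k^n$ non-sinks to obtain the stated $1 - k^{-cn}$ bound. To bound the per-vertex failure, I would perform a forward exploration in $\Lnk$ from $x$ that prioritises cross-slice edges (in dimensions $3, \ldots, n$) over in-slice edges (dimensions $1$ and $2$). The crucial observation is that revealing the winner of a cross-slice line says nothing about any within-slice line of any slice; consequently, every slice entered purely via cross-slice edges remains \emph{fresh}, in the sense that its within-slice random structure is independent of everything the exploration has revealed so far. The target is to reach $M = \Theta(n \log^2 k)$ distinct fresh slices.

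Conditional on the exploration reaching $M$ fresh slices at specified entry vertices, I would apply the following independent per-slice analysis. By \Cref{lem:good_cycle}, each fresh slice contains a good cycle with probability at least $1/100$. Given that a good cycle exists, its basin has size at least $k^2/(800 \log k)$; and by the vertex-transitivity in distribution of $\vv{L}(2,k)$ under coordinate-wise permutations from $S_k \times S_k$, the entry vertex lies in the basin with probability at least $1/(800 \log k)$. Hence each fresh slice yields a reachable good cycle with probability at least $1/(80000 \log k)$, and these events are independent across slices because different slices involve disjoint within-slice line sets. The probability that none of the $M$ slices succeeds is therefore at most
\[
\bigl(1 - 1/(80000 \log k)\bigr)^M \leq \exp\bigl(-M/(80000 \log k)\bigr),
\]
which can be driven below $k^{-(c+1)n}$ by choosing $M \geq C(c+1) n \log^2 k$ for an absolute constant $C$.

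The main obstacle is to show that the exploration reaches $M$ fresh slices with probability at least $1 - k^{-(c+1)n}$. For $n \geq 5$, the cross-slice out-degree at each newly reached vertex concentrates around $n - 3 \geq 2$, so a supercritical branching-process argument combined with a birthday-style bound on collisions with already-visited slices should give the required high-probability growth to $M$ fresh slices. For $n = 4$, cross-slice exploration reduces (essentially) to a zigzag path alternating between dimensions $3$ and $4$, whose length before dead-ending dominates a Geometric$(1/k)$ random variable and therefore reaches $M$ fresh slices with the required probability. The hardest case is $n = 3$: after one cross-slice step we land at the winner of the only cross-slice line and can take no further cross-slice step. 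Here the exploration must interleave cross-slice steps with in-slice exploration inside a launch slice to manufacture many launch points for subsequent cross-slice steps; the launch slice itself is sacrificed (no longer fresh), but the slices we subsequently enter via cross-slice edges from these launch points remain fresh. The delicate part is to show that in-slice forward reachability in $\vv{L}(2,k)$ generates enough launch points with the polynomially-small failure probability required for the union bound, which I expect will require amplifying the exploration over multiple phases and carefully scheduling the revelation of winners so that the fresh slices remain genuinely independent of the launch-slice analysis.
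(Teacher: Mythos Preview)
Your overall architecture---explore forward from a non-sink to enter many ``fresh'' slices, then use the per-slice basin probability $\Theta(1/\log k)$ and independence across slices---matches the paper's, and your symmetry argument for the per-slice success probability is correct. The genuine gap is in the survival analysis of the exploration, which as stated does not beat the union bound over $k^n$ vertices.

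Concretely, for the union bound to give $1-k^{-cn}$ you need the per-vertex failure probability to be at most $k^{-(1+c)n}$. But for $n\ge 5$, the cross-slice-only branching from a single root $x$ already fails with probability at least $k^{-(n-2)}$, namely on the event that $x$ is the winner of all $n-2$ cross-slice lines through it (which is compatible with $x$ being a non-sink). After the union bound this contributes $k^n\cdot k^{-(n-2)}=k^2$, not $o(1)$. The same obstruction hits your $n=4$ zigzag: the event that $x$ wins both lines in dimensions $3$ and $4$ has probability $k^{-2}$, and $k^4\cdot k^{-2}=k^2$. More generally, a Geometric$(1/k)$ path length gives $\Prob(\text{length}<M)\approx M/k$, which is nowhere near $k^{-4(1+c)}$. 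So ``supercritical'' and ``dominates Geometric$(1/k)$'' are true but far too weak here; your in-slice fallback could in principle rescue this, but that requires exactly the kind of multi-phase argument you flag as ``delicate'' for $n=3$, and you would need it for \emph{all} $n$, with a quantitative bound of the form $k^{-(1+c)n}$.

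The paper sidesteps this with two devices you are missing. First, it union bounds not over non-sinks $x$ but over pairs $(x,y)$ with $y$ the winner of their common line; this contributes a factor $1/k$ and, more importantly, launches the exploration from \emph{two} vertices, squaring the extinction probability. Second, the initial exploration runs in \emph{all} $n$ dimensions (offspring $\Bin(n-1,1-O(k^{-5/6}))$, not $\Bin(n-3,\cdot)$), which by \Cref{lem:GW_bin} gives extinction probability $O(k^{-5(n-1)/3})$ for the pair, and $5(n-1)/3>n$ for $n\ge 3$. Only \emph{after} this exploration has produced $k^{1/6}$ active vertices does the paper take a single cross-slice step from each to reach fresh slices; this way freshness is only needed at the very last step, not throughout.
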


\begin{restatable}{lemma}{cycleToSink}\label{lem:cycle_to_sink}
     There is a constant $c > 0$ such that, for all $k$ sufficiently large, $n \geq 3$ and $\eps < 1/6$, the probability that every good sink can be reached from a good cycle is at least $1 - e^{-ck^\eps}$.
\end{restatable}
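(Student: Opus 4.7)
The plan is to fix a single $\eps$-good sink $v$ and then union bound over the sinks, which by \Cref{thm:poisson} are few in number. Since $v$ is good, at least $k^\eps$ lines reach $v$ in $\Lnk$, so a backwards exploration from $v$ will expose many edges. I run this exploration one bit at a time: for each revealed vertex $u$, for each of the $n$ lines $L$ through $u$ I only ask the binary question ``is $u$ the winner of $L$?''. If the answer is no, I have revealed only that $u$ is a source for $L$, which is a priori likely (probability $(k-1)/k$) and barely perturbs the distribution of the actual winner. If the answer is yes, I enqueue the other $k-1$ vertices of $L$ and continue. By prioritising exploration steps that move in one of the last $n-2$ coordinates, I ensure that by the time $k^\eps$ lines have been touched, the process has entered at least $c_1 k^\eps$ distinct slices $S(\mathbf{a})$.

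Call a touched slice $S$ \emph{lightly touched} if the exploration revealed the winner of at most $C$ of the $2k$ lines lying inside $S$. By a branching-process tail bound analogous to the one behind \Cref{thm:poisson}, at least $c_2 k^\eps$ of the touched slices are lightly touched. Conditional on the entire history of the backwards exploration, the winners of the $\geq 2k-C$ unrevealed lines in such a slice are independent, each uniform on its line after possibly excluding a single revealed-source vertex, so this conditional distribution differs from the unconditional one by an $O(1/k)$ multiplicative factor line by line. This perturbation is mild enough that the argument behind \Cref{lem:good_cycle} still yields, conditional on the revealed history, a good cycle inside $S$ with probability at least some absolute constant $c_3>0$. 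Furthermore, by running the same analysis on the line-reversed two-dimensional subgraph (swapping the roles of winners and losers in each of the $2k$ lines of $S$), one obtains that with constant probability the slice contains a good cycle whose \emph{reach} inside $S$ has size at least $k^2/(800\log k)$, and in particular includes the entry vertex $w_S$ at which the backwards exploration first entered $S$ (absorbing a $1/(800\log k)$ loss into $c_3$).

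The lines inside distinct slices are disjoint, so the winner assignments across different slices are independent. Therefore the probability that none of the $c_2 k^\eps$ lightly touched slices hosts a good cycle whose reach contains its entry vertex is at most $(1-c_3)^{c_2 k^\eps} \leq e^{-c_4 k^\eps}$. Whenever such a good cycle $C_0 \subseteq S$ does exist with $w_S$ in its reach, any vertex on $C_0$ reaches $w_S$ inside $S$, and $w_S$ reaches $v$ in $\Lnk$ by construction of the exploration, so $v$ is reached from $C_0$. \Cref{thm:poisson} supplies at most $\log k$ sinks with probability $1-o(1)$, and a union bound over the good sinks produces the claimed $1-e^{-ck^\eps}$ bound after shrinking $c$. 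The main difficulty is the middle step: verifying that enough randomness survives the backwards exploration in a lightly touched slice to rerun the proof of \Cref{lem:good_cycle}, and in particular upgrading the basin-based definition of a good cycle to the reach-based property that is actually needed for a cycle to reach $v$.
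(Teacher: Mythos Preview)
The proposal has a genuine gap at exactly the point you flag as the main difficulty, and the proposed fix (line-reversal) cannot work.

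In a two-dimensional slice every vertex lying on a cycle has out-degree exactly $1$: the incoming cycle edge certifies that the vertex is the winner of one of its two lines, and the outgoing cycle edge is its unique out-edge to the winner of the other line. Consequently the reach of any cycle inside the slice is the cycle itself --- there is no such thing as a cycle whose in-slice reach has size $k^2/(800\log k)$. Thus ``$C_0$ reaches $w_S$ inside $S$'' is equivalent to ``$w_S$ lies on $C_0$'', and the probability that a given vertex lies on some good cycle is $O(k^{-3/2}\log k)$ (at most $O(\log k)$ cycles of length $O(k)$ in a slice of $k^2$ vertices). With only $c_1k^{\eps}$ touched slices and one entry vertex per slice, the expected number of successes is $O(k^{\eps-3/2}\log k)=o(1)$ for $\eps<1/6$, so the argument fails by a wide margin. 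Your line-reversal idea does not rescue this: the edge-reversal of $\vv{L}(2,k)$ is not distributed like $\vv{L}(2,k)$ (in the reversal each line has one vertex with $k-1$ out-edges and the rest have none in that direction), so the analysis behind \Cref{lem:good_cycle} does not transfer; and in any case reversing the basin statement would only reproduce the basin, not the forward reach you need.

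The paper's proof takes a different route that avoids this obstruction. First, it does not stop at $k^{\eps}$ lines: it shows (Lemma~\ref{lem:good_sink_explodes}) that once the backward exploration reaches $k^{\eps}$ lines it grows, with failure probability $e^{-\Omega(k^{\eps})}/k^n$, to $k^{2/3}$ active lines while still having explored at most $k^{3/4}$ lines in total (Lemma~\ref{lem:bushy_trees}). After relabelling so the explored region sits inside $[k^{3/4}]^n$, it then performs three further one-step explorations in dimensions $1$, $2$, $3$ to obtain $\Theta(k^{2/3})$ dimension-$3$ lines whose first two coordinates lie outside $[k^{3/4}]$. These lines give $\Theta(k^{5/3})$ incidences with the $\Theta(k)$ ``clean'' slices (those with third coordinate~$>k/2$). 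In each such slice the only conditioning is that at most $k^{3/4}$ vertices in $[k^{3/4}]^2$ are sources; the paper proves via a ``pathological cycle'' argument (Lemmas~\ref{lem:path_cycles} and~\ref{lem:slicewise}) that under this conditioning the probability some incident vertex is \emph{on} a good cycle is $\Omega(u_S/k^{3/2})$, and independence over slices gives failure probability $e^{-\Omega(k^{1/6})}$. The union bound is then taken over all $k^n$ vertices, not over the random set of sinks.
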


The proof of \Cref{thm:n-small} is now immediate. 
\begin{proof}[Proof of \Cref{thm:n-small}]
    Fix some $\eps \in (0, 1/6)$, say $1/10$.
    \Cref{lem:cycle_to_sink} shows that with high probability every $\eps$-good sink can be reached from at least one good cycle, and \Cref{lem:good-cycles-same-component} shows that all of the good cycles are in the same strongly connected component, again with a vanishingly small failure probability. Therefore every good sink can be reached from every vertex in a good cycle with high probability. \Cref{lem:reaching_cycle} shows that (with high probability) every non-sink can reach a good cycle, and can therefore reach every good sink.
    This means that (up to a vanishingly small failure probability) the good (resp. bad) sinks are exactly the sinks which can (resp. cannot) be reached by every non-sink.
    Finally, \Cref{thm:poisson} shows that the distribution of the number of good/bad sinks is as claimed.
\end{proof}

To prove \Cref{thm:n-large} we will need to combine the results here with an earlier theorem of the authors. The following is a slight simplification of Theorem 3 from \citet*{johnston2023game} to only consider the case where each player has the same number of actions.

\begin{theorem}[\protect{\citet*[Theorem 3]{johnston2023game}}]\label{thm:main_grids}
    For every $\eps>0$ there exist $c,\delta>0$ such that for all integers $n,k \geq 2$ for which $k \leq \delta \sqrt{n/ \log(n)}$, with failure probability at most ${k}^{-cn}$, every vertex of $\Lnk$ can either be reached from at most $N \coloneqq(1+\eps)k\log(k)$ vertices, or from every non-sink.
\end{theorem}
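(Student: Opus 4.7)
The plan is to prove the dichotomy via the ``good vertex'' strategy hinted at in the introduction: identify a strongly connected subgraph $C\subseteq\Lnk$ of density $1-o(1)$, so that for every $v$, either $R(v):=\{u:u\text{ reaches }v\text{ in }\Lnk\}\supseteq C$ --- and hence $R(v)$ contains every non-sink --- or $R(v)$ is confined to a small bad pocket of size at most $N=(1+\eps)k\log k$.

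Call $u\in[k]^n$ \emph{good} if the number $d_u\sim\Bin(n,1/k)$ of lines through $u$ on which $u$ wins lies in a concentration window around its mean $n/k$, so that both in-degree and out-degree of $u$ in $\Lnk$ are $\Theta(n)$. A Chernoff bound gives per-vertex bad probability at most $\exp(-c_0 n/k)$; the hypothesis $k\le\delta\sqrt{n/\log n}$ forces $n/k\ge k\log n/\delta^2$, so this probability is at most $n^{-\Omega(k)}$. A bounded-differences concentration (each line-winner flips the goodness of at most $k$ vertices, and there are $nk^{n-1}$ line-winners in all) then shows the total bad-vertex count is at most $k^n\cdot n^{-\Omega(k)}$ except on an event of probability $\exp(-k^{\Omega(n)})$.

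Next I would establish Stage A: with failure probability $k^{-cn}$, the good vertices form a single strongly connected component $C$. For each good vertex $u$, run a forward BFS from $u$ restricted to good intermediates. At every step of the BFS, the out-neighborhood of the current vertex contains a $1-n^{-\Omega(k)}$ fraction of good vertices, so the branching factor is $\Omega(n)$, and after $O(n)$ generations the tree reaches size $\Theta(k^n)$. A symmetric statement for the backward BFS, together with a pigeonhole intersection argument for any two such sets of size $\ge k^n/2$, yields strong connectivity. Crucially, the per-level Chernoff failures are $\exp(-k^{\Omega(n)})$ for $t\ge 2$, so only the first-level failure $\exp(-\Omega(n))$ matters, and it is absorbed by noting that the concentration of the good-vertex set (from the bounded-differences bound above) already guarantees a constant fraction of good first-level offspring for most starting vertices. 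A union bound over $u\in[k]^n$ then delivers $C$.

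For Stage B, fix any $v$ and run the backward exploration from $v$ in $\Lnk$. Either the exploration meets $C$, in which case $R(v)\supseteq C$ by Stage A, and a symmetric forward argument shows that every non-sink reaches $C$ and hence $v$; or the exploration is entirely confined to bad vertices, in which case I would dominate it by a subcritical branching process with offspring mean at most $n(k-1)\cdot n^{-\Omega(k)}\ll 1$ for $\delta$ chosen small enough. The standard exponentially decaying tail of subcritical cluster sizes then yields $|R(v)|\le N=(1+\eps)k\log k$ with failure probability $k^{-cn}$. A final union bound over $v$ completes the proof.

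The main obstacle is the quantitative control in Stage A: naive first-level BFS extinction occurs with probability only $\exp(-\Omega(n))$, which falls short of the required $k^{-cn}$ per-vertex bound once $k$ grows. Closing this gap --- by amortizing first-level failures against the bounded-differences bound on the total bad-vertex count, or equivalently by restricting attention to ``doubly good'' vertices whose immediate neighborhoods already contain many good vertices --- is the point at which the regime $k\le\delta\sqrt{n/\log n}$ is essentially tight: it provides exactly the slack needed for the bad density $n^{-\Omega(k)}$ to simultaneously be small enough for the subcritical Stage B estimate and well-spread enough for the Stage A connectivity analysis.
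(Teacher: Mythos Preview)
This theorem is not proved in the present paper. It is quoted verbatim as a black box from \citet*{johnston2023game} (see the sentence immediately preceding the statement: ``The following is a slight simplification of Theorem~3 from \citet*{johnston2023game}\ldots''), and the paper only \emph{uses} it in the proof of \Cref{thm:n-large}. There is therefore no proof in this paper to compare your proposal against.

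That said, the paper does describe the strategy of the cited proof in one sentence: ``The key to the proof of \citet*{johnston2023game} was the notion of a good vertex which has both high in-degree and high out-degree, and showing that these form a strongly connected component.'' Your plan matches this description at the level of the outline: define good vertices via concentration of $d_u\sim\Bin(n,1/k)$ around $n/k$, prove they form a giant strongly connected set $C$, and then argue the dichotomy for $R(v)$ according to whether the backward exploration meets $C$.

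However, what you have written is a plan, not a proof, and you yourself flag the main gap. In Stage~A you need a per-vertex failure probability of $k^{-cn}$ to survive the union bound over $k^n$ vertices, but the naive first-level branching-extinction bound only gives $e^{-\Omega(n)}$. Your suggested fixes (amortizing against the global bad-vertex count, or passing to ``doubly good'' vertices) are plausible heuristics but are not carried out, and this is exactly where the work lies. Similarly, in Stage~B the claim that an exploration avoiding $C$ is dominated by a subcritical branching process with mean $n(k-1)\cdot n^{-\Omega(k)}$ is not justified: the exploration steps are not independent of the goodness status of the vertices encountered (you have conditioned on avoiding $C$), so the domination requires a careful argument. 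Finally, the specific target $|R(v)|\le(1+\eps)k\log k$ does not drop out of a generic subcritical tail bound; getting the constant $(1+\eps)$ right requires a sharper analysis than ``exponentially decaying tail''. If you want to turn this into an actual proof, you would need to consult the original paper for how these quantitative points are handled.
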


The proof of \Cref{thm:n-large} is now straightforward.

\begin{proof}[Proof of \Cref{thm:n-large}]
    Set $\eps = 1$ and take $0 < c, \delta \leq 1$ as in the statement of \Cref{thm:main_grids}. Suppose that $n$ is large enough that \Cref{lem:good-cycles-same-component,lem:reaching_cycle,lem:cycle_to_sink} all apply with $k = \ceil{\delta \sqrt{n/\log(n)}}$. Suppose also that $n$ is large enough that $n \geq 2 \sqrt{n\log(n)}$. By \Cref{thm:main_grids}, with probability at least $1 - 2^{-cn}$, every non-sink can reach every sink in $\Lnk$ when $k \leq \delta \sqrt{n/\log(n)}$.
    
    When $k \geq \delta \sqrt{n/\log(n)}$ we will consider $(1/10)$-good sinks. By \Cref{lem:no_bad_sinks}, the probability that there is a bad sink is at most $(\eta_{n-1})^n + 11k^{-0.8}$.  Since we have lower bounded $k$, this is $(\eta_{n-1})^n + O(n^{-0.4} \log(n))$ which tends to 0 as $n \to \infty$. 
    Together \Cref{lem:good-cycles-same-component,lem:reaching_cycle,lem:cycle_to_sink} show that the probability that every non-sink can reach every $(1/10)$-good sink tends to 1 as $n \to \infty$, completing the proof.
    \end{proof}

\section{Probabilistic preliminaries}

In this section we give a series of standard results which will be used throughout the paper.

Many of our proofs involve an exploration process to find the vertices and lines that can reach or be reached from a given vertex. These closely resemble Galton--Watson branching processes, and in particular, we will show that the ``extinction'' probabilities of our branching processes are close to the extinction probabilities of suitable Galton--Watson branching processes. 
The following well-known lemma (see, for example, \citet*{lyons2017probability}, Proposition 5.4) will therefore be useful.

\begin{lemma}\label{lem:GW}
    The extinction probability of a Galton--Watson process with offspring distribution $\mu$ is the smallest non-negative solution $z$ to the equation
    \[
    G(z)=z,
    \]
    where $G$ denotes the probability generating function of $\mu$.
\end{lemma}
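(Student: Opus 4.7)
The plan is to prove this classical result by a fixed-point argument combined with a monotone coupling. Let $q$ denote the extinction probability of the process, and let $Z_n$ be the size of the $n$th generation (with $Z_0=1$).

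First I would show that $q$ is a fixed point of $G$. Conditioning on $Z_1$, note that the process goes extinct if and only if each of the $Z_1$ subtrees rooted at a child of the initial individual goes extinct. Since these subtrees are independent Galton--Watson trees with the same offspring distribution $\mu$, each goes extinct with probability $q$, so
\[
q \;=\; \sum_{j=0}^{\infty} \mu(j)\, q^{j} \;=\; G(q).
\]
Thus $q$ solves $G(z)=z$.

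Next I would show $q$ is the \emph{smallest} non-negative solution. Let $q_n = \Prob(Z_n = 0)$ be the probability of extinction by generation $n$. Then $q_0 = 0$, the sequence $(q_n)$ is non-decreasing (extinction by generation $n$ implies extinction by generation $n+1$), and a first-step analysis analogous to the above yields the recursion $q_{n+1} = G(q_n)$. Moreover, $\{Z_n=0\}$ is an increasing sequence of events whose union is the extinction event, so $q_n \nearrow q$. Now suppose $r \geq 0$ is any solution to $G(r)=r$. Since the probability generating function $G$ is non-decreasing on $[0,1]$, induction gives $q_n \leq r$ for all $n$: indeed $q_0 = 0 \leq r$, and if $q_n \leq r$ then $q_{n+1} = G(q_n) \leq G(r) = r$. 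Passing to the limit yields $q \leq r$.

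The argument is routine and no step presents a real obstacle; the main thing to be careful about is the monotonicity of $G$ on $[0,1]$ (which follows from its power-series representation with non-negative coefficients) and the justification that $q_n \to q$, which is just continuity of probability along the increasing union of extinction-by-generation-$n$ events.
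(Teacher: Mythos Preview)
Your proof is correct and is the standard argument. The paper itself does not prove this lemma but simply cites it as a well-known result (Proposition~5.4 in Lyons and Peres, \emph{Probability on trees and networks}), so there is no paper proof to compare against; your write-up supplies exactly the classical justification one would expect.
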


Recall that $G(1)=1$ and if $\mu(2)>0$, then $G$ is strictly convex on $[0,1]$. It follows that for such $\mu$, denoting the extinction probability by $\eta$, we have $\eta<z$ for all $z$ satisfying $G(z)<z$. We will use this to prove the following lemma.

\begin{lemma}\label{lem:GW_bin}
    If $x \in (0, 1/4)$, then the extinction probability of a Galton--Watson process with offspring distribution $\Bin(n,1-x)$ is less than $2x^n$ for all $n\geq 2$.
\end{lemma}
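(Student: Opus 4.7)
The probability generating function of $\Bin(n,1-x)$ is $G(z)=(x+(1-x)z)^n$. Since $n\ge 2$ and $x\in(0,1)$, the offspring distribution places mass $\binom{n}{2}(1-x)^2 x^{n-2}>0$ at $2$, so $G$ is strictly convex on $[0,1]$. By \Cref{lem:GW} and the remark following it, to conclude that the extinction probability $\eta$ satisfies $\eta<z$ it suffices to exhibit a point $z\in[0,1]$ with $G(z)<z$. The plan is therefore to use the test point $z=2x^n$; note $2x^n\in(0,1/8)\subset(0,1)$ since $x<1/4$ and $n\ge 2$, so this is a valid choice.

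The calculation at $z=2x^n$ factors cleanly:
\[
G(2x^n)=\bigl(x+2x^n(1-x)\bigr)^n=x^n\bigl(1+2x^{n-1}(1-x)\bigr)^n,
\]
so the inequality $G(2x^n)<2x^n$ reduces to $(1+\alpha)^n<2$, where $\alpha:=2x^{n-1}(1-x)$. This is what the rest of the proof will establish.

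For $(1+\alpha)^n<2$ I would split on $n$. For $n\ge 3$, I would apply $(1+\alpha)^n\le e^{n\alpha}$ and bound $n\alpha\le 2n\cdot(1/4)^{n-1}$; the sequence $2n/4^{n-1}$ is decreasing for $n\ge 3$ (ratio $(n+1)/(4n)\le 1/2$) and takes maximum value $3/8$ at $n=3$, so $n\alpha\le 3/8<\ln 2$ and the bound follows. For $n=2$, the exponential estimate is too weak, so I would compute directly: on $x\in(0,1/4)$ we have $\alpha=2x(1-x)<3/8$, and expanding gives $(1+\alpha)^2<1+2\cdot(3/8)+(3/8)^2=121/64<2$.

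There is no substantive obstacle; the argument is essentially the choice of test point $z=2x^n$ combined with routine estimates. The only subtlety worth flagging is that the clean exponential bound $e^{n\alpha}<2$ just fails at $n=2$ (where $n\alpha$ can be close to $1$), which forces the small separate direct computation in that boundary case.
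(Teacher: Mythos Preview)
Your proof is correct and follows essentially the same approach as the paper: both reduce the claim via \Cref{lem:GW} and strict convexity to showing $G(2x^n)<2x^n$, i.e.\ $(x+2(1-x)x^n)^n<2x^n$. Your execution of the remaining elementary inequality is in fact slightly cleaner than the paper's---factoring out $x^n$ and using $(1+\alpha)^n\le e^{n\alpha}$ for $n\ge 3$ (plus a direct check at $n=2$) avoids the paper's binomial expansion argument for $n\ge 4$ and its separate verification of $n\in\{2,3\}$.
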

\begin{proof}
    The probability generating function for a $\Bin(n,p)$ random variable is given by $G(z)=(1-p+pz)^n$. Thus, by Lemma~\ref{lem:GW} and the discussion above, it is sufficient to show that if $x \in (0, 1/4)$, then
    \[
    (x+2(1-x)x^n)^n < 2x^n
    \]
    for all $n\geq 2$.   
    Both sides of this inequality are polynomials in $x$ and one can check the claim for $n = 2,3$. 
    To show that the claim holds for $n \geq 4$, we note that for $x \leq 1/e$ and $n \geq 4$, we have that $nx^{n} \leq x^{n - \log(n)} \leq x^2$. Hence,
    \begin{align*}
        \big(x+2(1-x)x^n\big)^n &= x^n + \sum_{k=1}^n \binom{n}{k}(2(1-x)x^n)^k x^{n-k}\\
        &\leq  x^n + \sum_{k=1}^n (2(1-x) n x^{n})^k x^{n-k}\\
        &\leq x^n + \sum_{k=1}^n (2(1-x)x^{n - \log(n)})^k x^{n-k}\\
        &\leq x^n + x^n \sum_{k=1}^n (2(1-x)x)^k\\
        &\leq 2x^n
    \end{align*}
    for $x \leq 1/e$ and $n \geq 4$.
\end{proof}

As well as the extinction probability, we will be interested in the total population of a branching process and, in particular, showing that this is unlikely to be too large.

\begin{lemma}
    \label{lem:galton-watson-tail}
    Let $T$ be a Galton--Watson branching process with branching distribution $\Poi(\mu)$ and starting population $k$, and let the total population be $Z$. Suppose $\mu \in (0,1)$. Then there is a constant $c_\mu$ depending only on $\mu$ such that 
    \[ \Prob(Z \geq t) \leq c_\mu \mu^{-k}(\mu e^{1-\mu})^t \]
    for all $t \geq 0$.
\end{lemma}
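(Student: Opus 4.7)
The plan is to bound the tail of $Z$ using the exponential moment (Chernoff) method applied to the generating function of the total progeny of the tree. First I would decompose $Z = Z^{(1)} + \cdots + Z^{(k)}$, where $Z^{(i)}$ is the total progeny of the subtree rooted at the $i$th starting individual. The $Z^{(i)}$ are i.i.d., each distributed as the total progeny of a $\Poi(\mu)$ Galton--Watson process started from a single individual. Writing $F(s) = \expec[s^{Z^{(1)}}]$, we obtain $\expec[s^Z] = F(s)^k$ whenever the right-hand side is finite, so Markov's inequality gives
\[
\Prob(Z \geq t) \leq s^{-t} F(s)^k
\]
for every $s \geq 1$ with $F(s) < \infty$. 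The task is then to find the optimal $s$.

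Next I would pin down the domain of $F$ via the standard first-step functional equation
\[
F(s) = s \, e^{\mu (F(s) - 1)},
\]
which I would invert by writing $s$ as a function of $u = F(s)$: explicitly, $s(u) = u\, e^{\mu(1-u)}$. A short calculation gives $s'(u) = (1 - \mu u)\, e^{\mu(1-u)}$, so $s(\cdot)$ is strictly increasing on $[0, 1/\mu]$ and achieves its maximum
\[
s^\ast \;:=\; s(1/\mu) \;=\; \mu^{-1} e^{\mu - 1}
\]
at $u = 1/\mu$. Since $\mu \in (0,1)$ one checks that $s^\ast > 1$ (equivalently, $\mu e^{1-\mu} < 1$), and the inversion shows that $F$ is well-defined on $[0, s^\ast]$ with $F(s^\ast) = 1/\mu$.

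Finally, substituting $s = s^\ast$ into the Markov bound yields
\[
\Prob(Z \geq t) \;\leq\; (s^\ast)^{-t} F(s^\ast)^k \;=\; \mu^{-k} (\mu e^{1 - \mu})^t,
\]
which is the claimed bound (with $c_\mu = 1$). If one prefers to avoid evaluating $F$ at its critical radius, the same conclusion follows by taking $s \uparrow s^\ast$ and using that $F(s) \to 1/\mu$ by monotone convergence (or by Abel's theorem together with Borel's explicit formula $\Prob(Z^{(1)} = n) = e^{-\mu n}(\mu n)^{n-1}/n!$, whose tail decays like $n^{-3/2}$ at $s = s^\ast$ and is therefore summable).

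I expect the only slightly delicate step to be the justification that $F$ extends up to the critical value $s^\ast$ with $F(s^\ast) = 1/\mu$, rather than only being finite strictly below it; the inversion via $u \mapsto s(u)$ handles this directly, and everything else reduces to elementary calculus. The positivity of the logarithmic deficit $\mu e^{1-\mu} < 1$ that gives the exponential decay is also immediate from $s^\ast > 1$, so no further case analysis on $\mu$ is required.
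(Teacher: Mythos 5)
Your proof is correct, but it takes a genuinely different route from the paper's. The paper applies the Otter--Dwass formula to get the exact (Borel--Tanner) distribution $\Prob(Z=n)=\tfrac{k}{n}\Prob(\Poi(\mu n)=n-k)$, bounds it using Stirling's inequality to extract the factor $(\mu e^{1-\mu})^n$, and then sums the resulting geometric series over $n\geq t$, yielding an explicit constant $c_\mu=\tfrac{1}{\sqrt{2\pi}}\cdot\tfrac{e^{\mu}}{e^{\mu}-e\mu}$. You instead use the exponential-moment (Chernoff) method: decomposing $Z$ into $k$ i.i.d.\ total progenies with common generating function $F$ satisfying $F(s)=se^{\mu(F(s)-1)}$, inverting via $s(u)=ue^{\mu(1-u)}$ to identify the critical radius $s^{\ast}=\mu^{-1}e^{\mu-1}>1$ with $F(s^{\ast})=1/\mu$, and applying Markov's inequality at $s=s^{\ast}$. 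This is valid and in fact sharper, giving $c_\mu=1$ with no Stirling approximation; its one delicate point, the finiteness of $F$ on the closed disc up to $s^{\ast}$, is exactly the issue you flag, and your fallback via Borel's formula $\Prob(Z^{(1)}=n)=e^{-\mu n}(\mu n)^{n-1}/n!$ (whose terms at $s=s^{\ast}$ decay like $n^{-3/2}$ and are summable) settles it rigorously. It is worth noting that both arguments ultimately rest on the same combinatorial input --- the Otter--Dwass/Lagrange-inversion description of the total progeny --- but yours packages it through the generating function rather than the pointwise probabilities.
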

\begin{proof}
    By the Otter--Dwass formula (\citet*[Exercise 5.35]{lyons2017probability}) we have
    \begin{align*}
        \Prob(Z = n) &= \frac{k}{n} \Prob(\Poi(\mu n) = n -k)\\
    &= \frac{k}{n} \cdot \frac{(\mu n)^{n -k} e^{-\mu n}}{(n-k)!}\\
    &\leq \frac{k}{n \mu^k} \frac{(\mu n)^n e^{-\mu n}}{n!}.
    \end{align*} 
    Using that $n! \geq \sqrt{2 \pi n} (n/e)^n$, we obtain
    \[\Prob(Z = n) \leq\frac{k}{\sqrt{2\pi} n^{3/2} \mu^k} \left( \mu e^{1-\mu} \right)^n.\]
    Hence, by summing over $n \geq t$ (and noting that the probability that $Z = \infty$ is 0), we find
    \begin{align*}
        \Prob(Z \geq t) &\leq \frac{1}{\sqrt{2 \pi} \mu^k}\sum_{n \geq t} \left( \mu e^{1-\mu} \right)^n\\
        &=  \frac{1}{\sqrt{2 \pi}\mu^k}\cdot \frac{e^\mu (\mu e^{1-\mu})^{\lceil t\rceil }}{e^\mu - e \mu }.\qedhere
    \end{align*}
\end{proof}

To prove \Cref{thm:poisson} we will need to show that the probability that a Galton--Watson branching process goes extinct with a large total population is small.
We use the following standard lemma (see \citet*{lyons2017probability}, Proposition 5.28(ii)).

\begin{lemma}\label{lem:extinction_branching_distribution}
    Let $T$ be a Galton--Watson branching process whose offspring distribution has probability generating function $G$. Then the law of $T$ conditioned on extinction is that of a Galton--Watson branching process whose offspring distribution has probability generating function $G(qs)/q$, where $q$ is the extinction probability of $T$.
\end{lemma}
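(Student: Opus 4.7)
The plan is to verify directly that the law of $T$ conditioned on extinction is a Galton--Watson process with the claimed offspring generating function, by checking the two defining features: (i) the root has the right offspring distribution, and (ii) conditional on the number of children, the subtrees rooted at the children are independent and distributed as the conditioned process itself.

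For step (i), let $p_k$ denote the probability that the root of $T$ has $k$ children, so $G(s) = \sum_k p_k s^k$. By the branching property of $T$, conditional on the root having $k$ children the $k$ subtrees hanging off those children are i.i.d.\ copies of $T$. Hence the event of extinction of $T$ coincides with the intersection of the $k$ independent events that each of these subtrees goes extinct, which has probability $q^k$. Applying Bayes' rule,
\[
\Prob(\text{root has }k\text{ children} \mid \text{extinction}) \;=\; \frac{p_k\,q^k}{q},
\]
and the associated probability generating function is
\[
\sum_{k \geq 0} \frac{p_k q^k}{q} s^k \;=\; \frac{1}{q}\sum_{k \geq 0} p_k (qs)^k \;=\; \frac{G(qs)}{q},
\]
as required.

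For step (ii), condition on extinction together with the root having $k$ children. As observed above, extinction is equivalent to the simultaneous extinction of the $k$ independent subtrees. Since we are conditioning on an intersection of independent events, the subtrees remain mutually independent after conditioning, and each is distributed as $T$ conditioned on its own extinction. This recursive self-similarity is exactly the statement that the conditioned process is a Galton--Watson process whose offspring distribution agrees with that found in step (i), completing the proof.

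There is no substantive obstacle here: the argument is a one-line Bayes computation together with the observation that extinction factorises across the children of the root, so the branching property is preserved under the conditioning. The only point requiring a moment of care is that one must use independence of the subtrees (not merely the branching property) in order to conclude that conditioning on global extinction is equivalent to conditioning each subtree separately on its own extinction.
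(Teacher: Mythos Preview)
Your proof is correct. The paper does not actually prove this lemma: it simply cites it as a standard result (Proposition 5.28(ii) in Lyons and Peres, \emph{Probability on Trees and Networks}), and your argument is essentially the textbook proof one finds there---Bayes' rule for the root's offspring count, together with the observation that extinction factors as the independent extinction of each subtree, so the branching structure survives the conditioning.
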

\begin{corollary}
    Suppose $T$ is a Galton--Watson branching process with offspring distribution $\Poi(\lambda)$ for some $\lambda > 1$. Then the law of $T$ conditioned on extinction is that of a Galton--Watson branching process with offspring distribution $\Poi(\nu)$ where $\nu < 1$ satisfies
    \[\nu e^{-\nu} = \lambda e^{-\lambda}.\]
\end{corollary}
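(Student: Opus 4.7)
The plan is to apply \Cref{lem:extinction_branching_distribution} directly, using that the probability generating function of $\Poi(\lambda)$ is $G(s) = e^{\lambda(s-1)}$. If $q$ denotes the extinction probability of $T$, then by the lemma the conditioned process has offspring distribution with pgf
\[
\tilde G(s) = \frac{G(qs)}{q} = \frac{e^{\lambda(qs - 1)}}{q}.
\]
I want to show this equals $e^{\nu(s-1)}$ for a suitable $\nu$, which by uniqueness of pgfs will identify the offspring distribution as $\Poi(\nu)$.

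To match the two expressions, I would take logarithms and compare coefficients in $s$: the $s$-coefficient forces $\nu = \lambda q$, and the constant term forces $\nu = \lambda + \log q$. These two relations are consistent precisely because $q$ satisfies the fixed-point equation $q = G(q)$ from \Cref{lem:GW}, i.e.\ $\log q = \lambda(q-1)$, which rearranges to $\lambda q = \lambda + \log q$. Hence setting $\nu := \lambda q$ works, and
\[
\tilde G(s) = e^{\lambda q s - \lambda + (\lambda q - \lambda)} \cdot e^{\lambda - \lambda q} \cdot \frac{1}{q} e^{-\lambda q + \lambda} = e^{\nu(s-1)}
\]
after simplification (the cleanest route is simply to substitute $q = e^{-(\lambda - \nu)}$ into $\tilde G$).

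It then remains to derive the stated equation for $\nu$ and to verify $\nu < 1$. From $\nu = \lambda q$ and $\log q = \nu - \lambda$, I eliminate $q$ by writing $\log \nu = \log \lambda + \log q = \log \lambda + \nu - \lambda$, which rearranges to $\log \nu - \nu = \log \lambda - \lambda$, i.e.\ $\nu e^{-\nu} = \lambda e^{-\lambda}$. For the strict inequality $\nu < 1$, I would note that $\lambda > 1$ implies $q < 1$ (this is the standard fact that a supercritical Poisson Galton--Watson process survives with positive probability, which follows from the strict convexity of $G$ on $[0,1]$ and $G'(1) = \lambda > 1$), and therefore $\nu = \lambda q < \lambda$; combined with the fact that $x \mapsto xe^{-x}$ is strictly increasing on $(0,1)$ and strictly decreasing on $(1,\infty)$, the equation $\nu e^{-\nu} = \lambda e^{-\lambda}$ with $\nu \neq \lambda$ forces $\nu \in (0,1)$.

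There is no real obstacle here: the whole argument is a short generating-function computation followed by an elementary analysis of $xe^{-x}$. The only thing to be careful about is ensuring the identification $\nu = \lambda q$ is set up before invoking the fixed-point relation for $q$, so that the exponent manipulation in $\tilde G(s)$ goes through cleanly.
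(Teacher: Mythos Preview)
Your approach is correct and is exactly the intended one: the paper states this as an immediate corollary of \Cref{lem:extinction_branching_distribution} without giving a proof, and the computation you outline (compute $G(qs)/q$ for $G(s)=e^{\lambda(s-1)}$, set $\nu=\lambda q$, use the fixed-point relation $q=e^{\lambda(q-1)}$ to simplify, then analyse $x\mapsto xe^{-x}$) is the natural way to fill it in.

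One presentational point: the displayed chain
\[
\tilde G(s) = e^{\lambda q s - \lambda + (\lambda q - \lambda)} \cdot e^{\lambda - \lambda q} \cdot \frac{1}{q} e^{-\lambda q + \lambda} = e^{\nu(s-1)}
\]
does not actually balance as written (multiply out the exponents and you will find a stray $\lambda-\lambda q$). Your parenthetical remark already contains the clean fix: from $\nu=\lambda q$ and $q=e^{\nu-\lambda}$ one gets directly
\[
\tilde G(s)=\frac{e^{\lambda q s-\lambda}}{q}=\frac{e^{\nu s-\lambda}}{e^{\nu-\lambda}}=e^{\nu(s-1)},
\]
and I would replace the garbled line with this. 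The rest (deriving $\nu e^{-\nu}=\lambda e^{-\lambda}$ and concluding $\nu<1$ from $\nu=\lambda q\neq\lambda$ together with the unimodality of $xe^{-x}$) is fine.
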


We will also need the following tail bounds on the hypergeometric random variable and on the sum of Bernoulli random variables. The first is an immediate corollary of the stronger bounds in \citet*{serfling1974probability}, and a good description of the second can be found in \citet*{mitzenmacher2017probability} (see Theorem 4.5).

\begin{lemma}[\protect{\citet*[Corollary 1.1]{serfling1974probability}}]
\label{lem:hyper}
    Let $X \sim \Hypergeometric(N, K, n)$. Then, for all $0 < t$, we have
    \[\mathbb{P}\left( X \leq (K/N  - t)n \right) \leq e^{-2t^2n}.\]
\end{lemma}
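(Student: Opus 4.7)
The plan is to reduce the claim to the standard Chernoff--Hoeffding bound for the binomial via Hoeffding's classical convex ordering between sampling with and without replacement. Write $X = \sum_{i=1}^n X_i$, where $X_i \in \{0,1\}$ indicates success on the $i$th draw (without replacement) from a population of $N$ items containing $K$ successes, so that $\mathbb{E}[X] = nK/N$.

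The main step is the convex-domination inequality: for every convex $\phi \colon \mathbb{R} \to \mathbb{R}$,
\[\mathbb{E}[\phi(X)] \leq \mathbb{E}[\phi(Y)], \qquad Y \sim \Bin(n, K/N).\]
I would prove this following Hoeffding's original argument: expand $\mathbb{E}[\phi(X)]$ as an average over the $\binom{N}{n}$ ordered $n$-tuples of distinct draws and rewrite it as the double average first over all $N^n$ tuples (corresponding to the binomial) and then over the symmetric group, applying Jensen's inequality to collapse the inner permutation average onto the with-replacement distribution. An equivalent route is to build a martingale coupling between $X$ and a sum of independent Bernoullis; either way the argument is elementary but not quite a one-liner.

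Given convex domination, the Chernoff trick yields, for any $s > 0$,
\[\mathbb{P}\bigl(X \leq (K/N - t)n\bigr) \leq e^{s(K/N - t)n}\,\mathbb{E}[e^{-sX}] \leq e^{s(K/N - t)n}\,\mathbb{E}[e^{-sY}],\]
where the second inequality applies convex domination to $\phi(x) = e^{-sx}$. The right-hand side is the standard Chernoff bound for $\Bin(n, K/N)$, and Hoeffding's lemma for sums of independent random variables taking values in intervals of length $1$ gives the bound $e^{-2t^2 n}$ after optimising in $s$.

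The only non-routine ingredient is the convex-domination step, which requires a genuine argument because the $X_i$ are negatively dependent rather than independent; once it is in hand, the remainder is a textbook Chernoff calculation. Since the precise bound we need is recorded verbatim as \citet*[Corollary 1.1]{serfling1974probability}, the paper simply quotes that result and uses it as a black box in subsequent sections.
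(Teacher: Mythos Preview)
Your proposal is correct, and you have anticipated the paper exactly: the lemma is stated with attribution to \citet*[Corollary 1.1]{serfling1974probability} and no proof is given in the paper, which merely remarks that it is ``an immediate corollary of the stronger bounds'' in Serfling. Your sketched route via Hoeffding's convex-domination inequality for sampling without replacement followed by the standard Chernoff--Hoeffding optimisation is a valid and standard derivation of precisely the bound $e^{-2t^2n}$; Serfling's own argument proceeds via a martingale representation and yields a sharper exponent involving a finite-population correction, of which the stated bound is indeed a weakening.
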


\begin{lemma}
\label{lem:chernoff}
    Let $X_1,\dots,X_n$ be independent Bernoulli random variables. Let $X$ denote their sum, and let $\mu=\expec[X]$. Then, for all $\delta>0$, we have
    \[
    \Prob(X\leq (1-\delta)\mu)\leq e^{-\delta^2\mu/2}.
    \]
\end{lemma}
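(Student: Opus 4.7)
The plan is to prove this by the standard Chernoff--Cramér exponential moment method. First, note that if $\delta \geq 1$ then $(1-\delta)\mu \leq 0$, and since $X \geq 0$ the left-hand side is either trivially zero (when $\delta > 1$) or equals $\Prob(X=0) \leq \prod_i (1-p_i) \leq e^{-\mu}$, which is stronger than $e^{-\delta^2 \mu/2} = e^{-\mu/2}$ at $\delta=1$. So the substantive case is $\delta \in (0,1)$, on which I will focus.

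For any $t > 0$, apply Markov's inequality to the non-negative random variable $e^{-tX}$:
\[
\Prob\bigl(X \leq (1-\delta)\mu\bigr) = \Prob\bigl(e^{-tX} \geq e^{-t(1-\delta)\mu}\bigr) \leq e^{t(1-\delta)\mu}\, \expec\!\bigl[e^{-tX}\bigr].
\]
Writing $p_i = \expec[X_i]$, so that $\sum_i p_i = \mu$, independence gives $\expec[e^{-tX}] = \prod_i (1 + p_i(e^{-t}-1))$. Using $1 + x \leq e^x$ on each factor yields $\expec[e^{-tX}] \leq \exp\bigl(\mu(e^{-t}-1)\bigr)$. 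Substituting the optimising choice $t = -\ln(1-\delta) > 0$ then gives the classical relative-Chernoff bound
\[
\Prob\bigl(X \leq (1-\delta)\mu\bigr) \leq \left(\frac{e^{-\delta}}{(1-\delta)^{1-\delta}}\right)^{\!\mu}.
\]

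The remaining task is the elementary inequality $-\delta - (1-\delta)\ln(1-\delta) \leq -\delta^2/2$ for $\delta \in (0,1)$, from which the claimed bound $e^{-\delta^2 \mu/2}$ follows by raising to the $\mu$th power. To prove it, define $f(\delta) = (1-\delta)\ln(1-\delta) + \delta - \delta^2/2$; one checks $f(0) = 0$ and $f'(\delta) = -\ln(1-\delta) - \delta$, which is non-negative on $[0,1)$ by the standard inequality $\ln(1-\delta) \leq -\delta$, so $f \geq 0$ on $[0,1)$. This is the only non-mechanical step, and it is standard; I would expect to cite it rather than prove it in the body, since the lemma itself is attributed to \citet*{mitzenmacher2017probability}. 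Overall there is no real obstacle: the proof is a textbook Chernoff argument, presented here only because the bound is invoked throughout the later sections.
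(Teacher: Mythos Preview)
Your proof is correct; the exponential-moment argument and the elementary inequality $-\delta-(1-\delta)\ln(1-\delta)\le -\delta^2/2$ are both handled properly. The paper does not actually give its own proof of this lemma---it simply cites \citet*{mitzenmacher2017probability} (Theorem~4.5)---so there is nothing to compare against; your argument is precisely the standard textbook derivation that reference contains.
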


In order to determine that the number of good sinks and bad sinks are asymptotically distributed like independent Poisson random variables, we will use the ``method of moments''.
That is, we will show that the factorial moments of the number of good and bad sinks tend to the anticipated values and appeal to the following theorem
(see, for example, \citet*{bollobas2001random}, Theorem 1.23).

\begin{theorem}
\label{thm:method_of_moments}
    Let $\lambda = \lambda(k)$ and $\mu = \mu(k)$ be non-negative bounded functions on $\mathbb{N}$. For each $k$, let $X(k)$ and $Y(k)$ be non-negative integer values random variables defined on the same space. Suppose for all $a, b \in \mathbb{Z}^+$ we have 
    \[\lim_{k \to \infty} \expec \left[X(X-1) \cdots (X-a+1) \cdot Y(Y-1)\cdots (Y - b + 1) - \lambda^a \mu^b\right]  = 0 .\]
    Then $X$, $Y$ are asymptotically independent Poisson random variables with means $\lambda$ and $\mu$. That is,
        \[\lim_{k \to \infty} \left[\Prob \left(X =a, Y = b \right) - \frac{e^{-(\lambda + \mu)} \lambda^a \mu^b}{a! b!} \right]  = 0\]
    for all $a, b \in \mathbb{Z}^+$.
\end{theorem}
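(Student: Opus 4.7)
The strategy is to reduce this moving-parameter statement to the classical fixed-parameter bivariate method of moments via a subsequence extraction, using boundedness of $\lambda$ and $\mu$ as the only quantitative input. First I would fix an arbitrary subsequence of $k \to \infty$. Since $\lambda$ and $\mu$ are bounded, we can pass to a further sub-subsequence $k_m$ along which $\lambda(k_m) \to \lambda_\infty$ and $\mu(k_m) \to \mu_\infty$ for some finite $\lambda_\infty, \mu_\infty \geq 0$. Along $k_m$ the hypothesis becomes
\[\expec\bigl[X(k_m)^{(a)} Y(k_m)^{(b)}\bigr] \to \lambda_\infty^a \mu_\infty^b \qquad \text{for every } a, b \in \mathbb{Z}^+,\]
and the right-hand side equals the joint factorial moments of the independent pair $(U, V) \sim \Poi(\lambda_\infty) \otimes \Poi(\mu_\infty)$.

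The next step is to convert joint factorial moment convergence into joint distributional convergence. I would do this via the M\"obius-type inversion
\[\Prob(X = a, Y = b) = \frac{1}{a! \, b!} \sum_{i, j \geq 0} \frac{(-1)^{i + j}}{i! \, j!} \expec\bigl[X^{(a + i)} Y^{(b + j)}\bigr],\]
obtained by applying the univariate identity $\Prob(N = m) = \sum_{j \geq m} (-1)^{j - m} \binom{j}{m} \expec\bigl[\binom{N}{j}\bigr]$ in each coordinate in turn. Truncating the double sum at a finite level $N$ produces alternating two-sided bounds of Bonferroni type: the univariate identity is the classical inclusion-exclusion formula for a count $N = \sum_i \mathbbm{1}_{A_i}$, whose partial sums alternately over- and under-estimate the target probability. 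Applying this iteratively in $a$ and $b$ (using that $\binom{X}{j} \geq 0$, so that multiplying by $\binom{X}{j}$ preserves inequalities) gives the bivariate version. At each fixed truncation level only finitely many joint factorial moments appear, so I can send $k_m \to \infty$ term-by-term, replacing each moment by $\lambda_\infty^{a + i} \mu_\infty^{b + j}$; then sending $N \to \infty$, the remainder vanishes because the exponential series $\sum_i \lambda_\infty^i / i!$ and $\sum_j \mu_\infty^j / j!$ converge absolutely. The outcome is
\[\Prob\bigl(X(k_m) = a, Y(k_m) = b\bigr) \to \frac{e^{-(\lambda_\infty + \mu_\infty)} \lambda_\infty^a \mu_\infty^b}{a! \, b!}.\]

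Finally, by continuity of $(\lambda, \mu) \mapsto e^{-(\lambda + \mu)} \lambda^a \mu^b / (a! \, b!)$, the subtracted term in the statement converges to the same limit along $k_m$, so the difference tends to $0$ along every sub-subsequence and hence along the full sequence. The main technical obstacle is justifying the interchange of the limit $k_m \to \infty$ with the infinite sum in the inversion formula, in the absence of any uniform-in-$k$ control on the factorial moments. The Bonferroni truncation is what resolves this: at each finite level only finitely many moments are involved, so the limit passes term-by-term, and the truncation error is then controlled purely in terms of the (bounded, absolutely convergent) limiting series.
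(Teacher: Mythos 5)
The paper does not actually prove this statement: it is quoted (in the two-variable case) directly from Bollob\'as, \emph{Random Graphs}, Theorem~1.23, so the only ``paper proof'' is that citation. Your argument is therefore a genuine addition rather than a reproduction. The outer structure is correct and clean: by boundedness of $\lambda,\mu$, every subsequence has a further subsequence along which $\lambda(k_m)\to\lambda_\infty$ and $\mu(k_m)\to\mu_\infty$, the hypothesis then gives convergence of the joint factorial moments to those of an independent Poisson pair, and continuity of $(\lambda,\mu)\mapsto e^{-(\lambda+\mu)}\lambda^a\mu^b/(a!\,b!)$ transfers the fixed-parameter conclusion back to the moving-parameter statement. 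The inversion-plus-truncation argument for the fixed-parameter case is the standard one.

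The one step stated too glibly is the bivariate Bonferroni sandwich. Write $S_I=\sum_{i=0}^{I}(-1)^i\binom{a+i}{a}\binom{X}{a+i}$ and $T_J$ analogously for $Y$, so that $S_I-\mathbbm{1}[X=a]=(-1)^IR_I$ and $T_J-\mathbbm{1}[Y=b]=(-1)^JQ_J$ with $R_I,Q_J\geq 0$. Then
\[S_IT_J-\mathbbm{1}[X=a]\mathbbm{1}[Y=b]=(-1)^J\mathbbm{1}[X=a]\,Q_J+(-1)^IR_I\,\mathbbm{1}[Y=b]+(-1)^{I+J}R_IQ_J,\]
and no choice of parities of $I$ and $J$ makes all three terms non-positive (the product term always carries sign $(-1)^{I+J}$, which is $+1$ exactly when the two cross terms are both $-1$), so ``multiplying the univariate inequalities together'' yields the upper bound ($I,J$ both even) but not the matching lower bound; the truncated double sums do not alternately over- and under-estimate $\Prob(X=a,Y=b)$. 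The repair is standard and stays within your framework: bound the error in absolute value by $\expec[R_I\mathbbm{1}[Y=b]]+\expec[\mathbbm{1}[X=a]Q_J]+\expec[R_IQ_J]$, and use $R_I\leq\binom{a+I+1}{a}\binom{X}{a+I+1}$, $Q_J\leq\binom{b+J+1}{b}\binom{Y}{b+J+1}$, together with $\mathbbm{1}[X=a]\leq\binom{X}{a}$ and $\mathbbm{1}[Y=b]\leq\binom{Y}{b}$, so that each error term is a constant multiple of a joint factorial moment of order $(a+I+1,b)$, $(a,b+J+1)$ or $(a+I+1,b+J+1)$. By hypothesis these converge along $k_m$ to terms of the absolutely convergent exponential series, hence their limsup is uniformly small once $I,J$ are large. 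With that patch the interchange of limits is justified and the rest of your argument goes through.
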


In our case, the values $\lambda(k)$ and $\mu(k)$ are related to the extinction probabilities of Poisson Galton--Watson branching processes. As part of this, we will show that the extinction probability of a suitable Galton--Watson branching process with a binomial offspring distribution is close to that of a Galton--Watson branching process with a Poisson offspring distribution. For this, we will need the following two results on the total variation.

\begin{lemma}[Folklore]
For any two random variables $X$ and $Y$, there exists a coupling of $X' \sim X$ and $Y' \sim Y$ such that
\[ \Prob(X' \neq Y') = d_{TV}(X, Y).\]
\end{lemma}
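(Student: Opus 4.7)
The plan is to exhibit the standard maximal coupling. The strategy has two steps: first, show that $\Prob(X' \neq Y') \geq d_{TV}(X,Y)$ for every coupling, giving the bound in one direction; then construct an explicit coupling that attains equality.

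For the lower bound, I would let $\mu$ and $\nu$ denote the distributions of $X$ and $Y$, and observe that for any coupling $(X', Y')$,
$$\Prob(X' = Y') = \sum_x \Prob(X' = Y' = x) \leq \sum_x \min(\mu(x), \nu(x)) = 1 - d_{TV}(X, Y),$$
where the final equality is the standard identity $d_{TV}(\mu, \nu) = 1 - \sum_x \mu(x)\wedge\nu(x)$. Rearranging gives $\Prob(X' \neq Y') \geq d_{TV}(X, Y)$ unconditionally, so it suffices to match this bound.

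For the matching construction, set $\beta = \sum_x \min(\mu(x), \nu(x)) = 1 - d_{TV}(X,Y)$ and, assuming $0 < \beta < 1$ (the boundary cases being trivial), define the common part $\gamma(x) = \min(\mu(x), \nu(x))/\beta$ and the residuals $\mu^*(x) = (\mu(x) - \min(\mu(x), \nu(x)))/(1-\beta)$ and $\nu^*(x) = (\nu(x) - \min(\mu(x), \nu(x)))/(1-\beta)$. These are probability distributions, and crucially $\mu^*$ and $\nu^*$ have disjoint supports because at each $x$ at least one of the numerators vanishes. The coupling is then: flip a biased coin with success probability $\beta$; on success sample $Z \sim \gamma$ and put $X' = Y' = Z$; on failure sample $X' \sim \mu^*$ and $Y' \sim \nu^*$ independently. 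A one-line check gives $\Prob(X' = x) = \beta\gamma(x) + (1-\beta)\mu^*(x) = \mu(x)$, and analogously for $Y'$, so the marginals are correct. Disjointness of the residual supports forces $X' \neq Y'$ on the failure branch while $X' = Y'$ on the success branch, yielding $\Prob(X' \neq Y') = 1 - \beta = d_{TV}(X,Y)$ as required.

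The only genuinely non-routine step is extending the argument beyond the discrete setting: one replaces counting sums by integrals against a dominating measure such as $\mu + \nu$, interprets $\mu(x), \nu(x)$ as Radon--Nikodym derivatives, and defines the common part and residuals as densities. The verification of marginals and the disagreement probability goes through verbatim once this bookkeeping is in place, so I do not expect any real obstacle.
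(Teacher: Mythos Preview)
Your proof is correct and is the standard construction of the maximal coupling. Note, however, that the paper does not actually prove this lemma: it is stated as folklore and left without proof, so there is no paper proof to compare against. Your argument---lower bound via $\Prob(X'=Y'=x)\leq \mu(x)\wedge\nu(x)$, then the explicit split into the common part $\gamma$ and the disjointly-supported residuals $\mu^*,\nu^*$---is exactly the textbook route, and your remark about passing to densities with respect to $\mu+\nu$ in the non-discrete case is the right way to handle the general statement.
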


\begin{theorem}[Le Cam's Theorem \citet*{le1960approximation,steele1994cam}]
    \label{thm:le-cam}
    Let $X_1, \dots, X_n$ be independent Bernoulli random variables with success probabilities $p_1, \dots, p_n$. Let $S  = X_1 + \dotsb + X_n$ and let $\mu$ denote the expectation of $S$ (i.e.
    \(\mu = \expec[S] = \sum_{i=1}^n p_i\)).
    Then
    \[ \sum_{k=0}^\infty \left| \mathbb{P}(S = k) - \frac{\mu^k e^{-\mu}}{k!} \right| < 2 \min \left\{1, \frac{1}{\mu}\right\} \sum_{i=1}^n p_i^2. \]
\end{theorem}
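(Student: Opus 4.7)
The plan is a coupling argument. First I would couple each Bernoulli $X_i$ with a Poisson random variable $Y_i \sim \Poi(p_i)$ using the folklore coupling lemma stated just above, so that $\Prob(X_i \neq Y_i) = d_{TV}(\text{Bernoulli}(p_i), \Poi(p_i))$. This total variation can be computed directly: the two laws both place mass $p_i$ on $\{1, 2, \ldots\}$, and a short calculation shows that the only positive pointwise difference between the mass functions occurs at $k = 1$ and equals $p_i(1 - e^{-p_i}) \leq p_i^2$. Taking the couplings independently across $i$, additivity of independent Poissons gives $T = \sum_i Y_i \sim \Poi(\mu)$, and a union bound yields $\Prob(S \neq T) \leq \sum_i p_i^2$. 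Since $\sum_k |\Prob(S=k) - \Prob(T=k)| = 2 d_{TV}(S, T) \leq 2 \Prob(S \neq T)$, this already delivers the bound $2 \sum_i p_i^2$, which proves the theorem in the regime $\mu \leq 1$ where $\min(1, 1/\mu) = 1$.

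The harder regime is $\mu > 1$, where the $1/\mu$ factor is needed. For this I would invoke the Stein--Chen method for Poisson approximation. For each $A \subseteq \mathbb{Z}_{\geq 0}$, I would solve Stein's equation
\[ \mu f(k+1) - k f(k) = \mathbbm{1}[k \in A] - \Prob(\Poi(\mu) \in A) \]
by the standard forward recursion, producing an explicit solution $f_A$, and then establish the key bound $\sup_k |f_A(k+1) - f_A(k)| \leq (1 - e^{-\mu})/\mu$. Substituting $S$ into Stein's equation and taking expectations gives $\expec[\mu f_A(S+1) - S f_A(S)] = \Prob(S \in A) - \Prob(\Poi(\mu) \in A)$. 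The left-hand side can then be controlled using independence of the $X_i$ via the identity $\expec[X_i g(S)] = p_i \expec[g(S_i + 1)]$, where $S_i = S - X_i$; a short manipulation bounds it by $\sup_k|f_A(k+1) - f_A(k)| \sum_i p_i^2$. Taking the supremum over $A$ yields $d_{TV}(S, \Poi(\mu)) \leq (1 - e^{-\mu}) \mu^{-1} \sum_i p_i^2 \leq \min(1, 1/\mu) \sum_i p_i^2$, and doubling recovers the claimed bound.

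The main obstacle is the sup-norm bound on the first difference of $f_A$: though the recursion defining $f_A$ is elementary, showing uniformly across all $A$ that $|f_A(k+1) - f_A(k)| \leq (1-e^{-\mu})/\mu$ requires a careful monotonicity argument, with the extremal case reducing to the singletons $A = \{j\}$ and explicit estimates on ratios of Poisson probabilities. Once this estimate is in hand, the remaining Stein--Chen calculation is a routine expansion that uses nothing beyond independence of the $X_i$.
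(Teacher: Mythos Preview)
The paper does not give its own proof of this statement: Le Cam's theorem appears in the preliminaries section as a cited result (with references to Le Cam 1960 and Steele 1994) and is used later without argument. So there is no paper proof to compare against.

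Your plan is the standard route and is essentially correct. The coupling argument delivers the constant $2$ (this is in fact Le Cam's original approach, reproduced in Steele's expository article), and the Stein--Chen argument you sketch is exactly how one obtains the sharpening by the factor $\min(1,1/\mu)$: the key estimate $\sup_k |f_A(k+1)-f_A(k)| \le (1-e^{-\mu})/\mu$ is the Barbour--Eagleson bound, and the remaining manipulation using $\expec[X_i g(S)] = p_i \expec[g(S_i+1)]$ is routine as you say. One small slip: the Bernoulli$(p_i)$ and Poisson$(p_i)$ laws do \emph{not} both place mass $p_i$ on $\{1,2,\dots\}$ --- the Poisson places mass $1-e^{-p_i}$ there --- but your subsequent claim that the sole positive pointwise excess of Bernoulli over Poisson occurs at $k=1$ and equals $p_i(1-e^{-p_i}) \le p_i^2$ is correct, and that is all the coupling step needs.
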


We end this section with the following folklore result (a short combinatorial proof can be found in~\citet*{mcdiarmid2021component}).

\begin{lemma}\label{lem:trees}
    If $G$ is a graph with maximum degree $\Delta$, then for each $m\in \N$ there are at most $(e\Delta)^{m-1}$ trees of order $m$ in $G$ that contain a given vertex.
\end{lemma}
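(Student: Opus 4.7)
The plan is a double-counting argument based on Cayley's formula. Write $c_m$ for the number of trees of order $m$ in $G$ that contain the given vertex $v$, so that there are exactly $(m-1)!\, c_m$ pairs $(T, L)$ where $T$ is such a tree and $L \colon [m] \to V(T)$ is a bijection with $L(1) = v$. I would re-encode each such pair bijectively as a pair $(T', \phi)$ where $T'$ is an abstract labelled tree on vertex set $[m]$ (with $\{i,j\}$ an edge of $T'$ iff $\{L(i), L(j)\}$ is an edge of $T$) and $\phi = L$ is an injective embedding of $T'$ into $G$ satisfying $\phi(1) = v$. The inverse map sends $(T', \phi)$ to $(\phi(T'), \phi)$, so this is indeed a bijection.

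By Cayley's formula there are $m^{m-2}$ labelled trees on $[m]$. For each such $T'$, I bound the number of embeddings by traversing $T'$ in BFS order from vertex $1$: the image of each non-root vertex must be a $G$-neighbour of its parent's already-placed image, giving at most $\Delta$ choices per step (injectivity only lowers the count). Consequently there are at most $m^{m-2} \Delta^{m-1}$ pairs $(T', \phi)$, and so
\[
(m-1)!\, c_m \leq m^{m-2}\, \Delta^{m-1}.
\]

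To finish, Stirling's bound $(m-1)! \geq \sqrt{2\pi(m-1)} \bigl((m-1)/e\bigr)^{m-1}$ combined with $(1 + 1/(m-1))^{m-1} \leq e$ gives, for $m \geq 2$,
\[
c_m \leq \frac{1}{m} \cdot \Bigl(\frac{m}{m-1}\Bigr)^{m-1} \cdot \frac{(e\Delta)^{m-1}}{\sqrt{2\pi(m-1)}} \leq \frac{e}{m\sqrt{2\pi(m-1)}}\, (e\Delta)^{m-1} \leq (e\Delta)^{m-1},
\]
where the final inequality uses $m\sqrt{2\pi(m-1)} \geq 2\sqrt{2\pi} > e$ for $m \geq 2$. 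The case $m = 1$ is trivial since $c_1 = 1 = (e\Delta)^0$.

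The main obstacle I anticipate is arranging the counting so that the constant in the base of the exponent comes out to be $e$ rather than something larger like $e^2$. The key choice is to invoke Cayley's formula for \emph{unrooted} labelled trees (giving the factor $m^{m-2}$) rather than for rooted ones ($m^{m-1}$): the extra factor of $1/m$ is exactly what is needed to absorb the $(1+1/(m-1))^{m-1} \leq e$ term that Stirling introduces and produce the claimed constant $e$.
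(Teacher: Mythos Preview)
Your proof is correct. The paper does not give its own proof of this folklore lemma but instead refers to the short combinatorial argument in \citet*{mcdiarmid2021component}; your Cayley-formula double count (pairing subtrees-with-labelling against labelled-trees-with-embedding, then bounding embeddings by $\Delta^{m-1}$ and dividing by $(m-1)!$ via Stirling) is exactly that standard argument, and your bookkeeping with the constants is clean.
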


\section{The number of good and bad sinks}
\label{sec:number_of_good_and_bad_sinks}

In this section we prove Theorem~\ref{thm:poisson}. For this, we will explore out from selected vertices which we assume are sinks and approximate the probability that the sinks are good/bad. We will then use the method of moments (Theorem \ref{thm:method_of_moments}) to complete the proof. When exploring from a single sink the process grows like a Galton--Watson branching process which starts with $n$ individuals (since the vertex is a sink) and has offspring distribution $\Bin((n-1)(k-1), 1/k)$. Since this binomial distribution is close to a Poisson distribution with mean $n - 1$, we might hope that the branching process behaves much like one with offspring distribution $\Poi(n-1)$, at least when $n$ is small compared to $k$. This is indeed the case (at least for our purposes) and we start this section by proving this.

\begin{lemma}\label{lem:pois_approx}
    Let $n  = n(k) \geq 3$ be such that $n \leq k^{\eps}/\log(k)$, where $\eps \leq 1/2$, and let $T$ be a Galton--Watson process starting with $n$ individuals and with branching distribution $\Bin((n-1)(k-1), 1/k)$.
    Let $\lambda(k)$ be the probability that the total population is at most $k^{\eps}$.
    Then, as $k \to \infty$,
    \[| \lambda(k) -  (\eta_{n-1})^n| \to 0.\]
\end{lemma}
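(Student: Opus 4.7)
The plan is to couple $T$ with a Galton--Watson process $T'$ having $\Poi(n-1)$ offspring distribution and the same starting population of $n$ individuals, for which the extinction probability is exactly $(\eta_{n-1})^n$. It will then suffice to argue that (i) the coupling is tight enough on the event $\{|T| \leq k^{\eps}\}$, and (ii) conditional on extinction, $T'$ is unlikely to have total population exceeding $k^{\eps}$.

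For (i), Le Cam's theorem gives $d_{TV}(\Bin((n-1)(k-1),1/k), \Poi((n-1)(k-1)/k)) \leq 2(n-1)(k-1)/k^2$, and a standard thinning bound gives $d_{TV}(\Poi((n-1)(k-1)/k), \Poi(n-1)) \leq (n-1)/k$. By the folklore coupling lemma, at each vertex we can couple the $\Bin$ offspring count in $T$ with the $\Poi$ offspring count in $T'$ so that they agree with probability at least $1 - O(n/k)$. Exposing both trees in parallel in BFS order and stopping as soon as either total population exceeds $k^{\eps}$, we perform at most $k^{\eps}$ couplings, so by a union bound all of them succeed with probability at least $1 - O(n k^{\eps-1})$. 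Since $n \leq k^{\eps}/\log k$ and $\eps \leq 1/2$, this is $o(1)$, and on the success event the events $\{|T| \leq k^{\eps}\}$ and $\{|T'| \leq k^{\eps}\}$ coincide, so $|\Prob(|T| \leq k^{\eps}) - \Prob(|T'| \leq k^{\eps})| \to 0$.

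For (ii), the extinction probability of $T'$ is $(\eta_{n-1})^n$ by independence of the $n$ sub-trees, and if $T'$ does not go extinct then $|T'|=\infty$; so $\{|T'| \leq k^{\eps}\} \subseteq \{T' \text{ extinct}\}$ and it suffices to bound $\Prob(T' \text{ extinct}, |T'| > k^{\eps})$. By the corollary to Lemma~\ref{lem:extinction_branching_distribution}, conditional on extinction each sub-tree is a $\Poi(\nu)$ Galton--Watson process with $\nu \in (0,1)$ satisfying $\nu e^{-\nu} = (n-1)e^{-(n-1)}$ (using $n \geq 3$). Lemma~\ref{lem:galton-watson-tail} with one starting individual then yields $\Prob(Z_i \geq t) \leq c_\nu \nu^{-1}(\nu e^{1-\nu})^t$, so via pigeonhole and a union bound
\[
\Prob(|T'| > k^{\eps} \mid T' \text{ extinct}) \leq n\, c_\nu\, \nu^{-1} (\nu e^{1-\nu})^{k^{\eps}/n}.
\]

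It remains to check this is $o(1)$ in both regimes. If $n$ is bounded along a subsequence, then $\nu$ is bounded strictly below $1$, so $\nu e^{1-\nu}$ is a constant less than $1$ and $k^{\eps}/n \to \infty$, giving exponential decay in $k^{\eps}$. If $n \to \infty$, then the relation $\nu e^{-\nu} = (n-1)e^{-(n-1)}$ forces $\nu \to 0$ and $\nu \sim (n-1)e^{-(n-1)}$, hence $\log(\nu e^{1-\nu}) = -(n-2) + O(\log n)$; the exponent $(k^{\eps}/n)\log(\nu e^{1-\nu})$ is therefore $-\Theta(k^{\eps})$, while the prefactor $n c_\nu \nu^{-1}$ is at most $e^{O(n)} \leq e^{O(k^{\eps}/\log k)}$, which is absorbed. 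I expect this final uniform estimate to be the main obstacle: the prefactor $\nu^{-1}$ blows up exponentially in $n$ as $n \to \infty$, and the hypothesis $n \leq k^{\eps}/\log k$ is precisely what keeps the stretched-exponential decay in $k^{\eps}$ dominant. Combining the two steps gives $\lambda(k) \to (\eta_{n-1})^n$.
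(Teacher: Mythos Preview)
Your argument is correct and follows the same two-step scheme as the paper: first couple $T$ with the $\Poi(n-1)$ Galton--Watson process $T'$ via Le Cam (so that the events $\{|T|\leq k^{\eps}\}$ and $\{|T'|\leq k^{\eps}\}$ agree except with probability $O(nk^{\eps-1})=o(1)$), then bound $\Prob(|T'|>k^{\eps}\mid T'\text{ extinct})$ using the duality in Lemma~\ref{lem:extinction_branching_distribution} together with the tail bound in Lemma~\ref{lem:galton-watson-tail}.

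The only difference is in how the second step is executed. You work with the dual rate $\nu$ itself, break $T'$ into its $n$ independent sub-trees, pigeonhole, and then absorb the $n$-dependence of $c_\nu$ and $\nu^{-1}$ via a bounded/unbounded-$n$ case split; this is sound (in particular $c_\nu$ stays bounded as $\nu\to 0$, as one reads off from the proof of Lemma~\ref{lem:galton-watson-tail}), and the hypothesis $n\leq k^{\eps}/\log k$ is exactly what makes the $e^{O(n)}$ prefactor negligible against $e^{-\Theta(k^{\eps})}$. The paper instead notes once and for all that $\nu<1/2$ for every $n\geq 3$ (since $2e^{-2}<\tfrac12 e^{-1/2}$ and $(n-1)e^{-(n-1)}$ is decreasing), so the conditioned process is stochastically dominated by a $\Poi(1/2)$ process with $n$ initial individuals. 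A single application of Lemma~\ref{lem:galton-watson-tail} with the fixed parameter $\mu=1/2$ then gives the uniform bound $c_{1/2}\,2^{n}(\sqrt{e}/2)^{k^{\eps}}$, and $2^{n}\leq 2^{k^{\eps}/\log k}$ is absorbed directly. This avoids your case analysis entirely; your route is a bit longer but makes the role of the growth condition on $n$ more transparent.
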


\begin{proof}
    Let $\{(X_{i}, X_{i}') : i \geq 1 \}$ be pairs of random variables such that each pair is independent of the other pairs, $X_{i} \sim \Bin((n-1)(k-1), 1/k)$,  $X_{i}' \sim \Poi(n-1)$ and $X_i$ and $X_i'$ are coupled so that 
        \[\Prob(X_{i} \neq X_{i}') = d_{TV}(X_{i}, X_{i}').\]
    We may decompose $X_{i}'$ as the sum of a $\Poi((n-1)(k-1)/k)$ and a $\Poi((n-1)/k)$ random variable. By Le Cam's Theorem, the total variation distance between $X_{i}$ and the former Poisson distribution is at most $2(n-1)/k$. The overall total variation distance is therefore at most 
    \[2(n-1)/k + 1 - e^{-(n-1)/k} \leq 3(n-1)/k.\]
    Define $T = (T_t)_{t \geq 0}$ by $T_0 = n$ and \[T_{t+1} = \sum_{i=Z_{t-1} + 1}^{Z_{t-1} + T_t}X_{i}\] where $Z_{t-1} = \sum_{i=0}^{t-1} T_i$, so that $T$ is a Galton--Watson process starting with $n$ individuals and with branching distribution $\Bin((n-1)(k-1), 1/k)$.
    Similarly, define the process $T' = (T'_t)_{t \geq 0}$ by $T'_0 = n$ and \[T'_{t+1} = \sum_{i=Z'_{t-1} + 1}^{Z'_{t-1} + T'_t}X'_{i}\] where $Z'_{t-1} = \sum_{i=0}^{t-1} T'_i$, so that $T'$ is a Galton--Watson process starting with $n$ individuals and with branching distribution $\Poi(n-1)$.
    Let $\tau$ and $\tau'$ be the total population of the processes $T$ and $T'$ respectively, and note that we are interested in the event that $\tau < k^{\eps}$.
   
    We certainly have
        \[\left| \Prob(\tau < k^{\eps}) - \Prob(\tau' < k^{\eps}) \right| \leq \Prob(\exists i < k^{\eps} \text{ s.t. }X_i \neq X_{i}')\leq \frac{3(n-1)}{k} \cdot k^{\eps} \to 0.\]
    
    By \Cref{lem:extinction_branching_distribution}, the law of $T'$ conditioned on extinction is that of a branching process with branching distribution $\Poi(\nu)$ where $\nu < 1$ satisfies $\nu e^{-\nu} = (n-1) e^{-(n-1)}$. In particular, the conditioned process is dominated by a process with branching distribution $\Poi(1/2)$ (for $n \geq 3$). 
    By \Cref{lem:galton-watson-tail}, the probability that the total population of this conditioned process is at least $k^{\eps}$ is at most
    \[c_{1/2} 2^{k^{\eps}/\log(k)} (\sqrt{e}/2)^{k^\eps}\]
    and this clearly tends to 0 as $k \to \infty$.
    Hence, $\Prob(\tau' < k^{\eps} | \tau' < \infty) = (1 - o(1))$ and
    \[
        \lim_{k\to \infty}\Prob(\tau' < k^{\eps})  =  \lim_{k\to \infty}\Prob(\tau' < k^{\eps} | \tau' < \infty) \Prob(\tau' < \infty) =  (1 - o(1)) (\eta_{n-1})^n. \qedhere
    \]
\end{proof}

We now turn to the proof of Theorem \ref{thm:poisson}.

\poisson*

\begin{proof}
    First, note that the expectation
    \[\expec \left[X(X-1) \cdots (X-a+1) \cdot Y(Y-1)\cdots (Y - b + 1)\right]\]
    is the sum over all ordered choices of distinct points $x_1, \dots, x_a$ and distinct points $y_1, \dots, y_b$ (where we may have $x_i = y_j$) of the probability that $x_1, \dots, x_a$ are good sinks and $y_1, \dots, y_b$ are bad sinks. 

    If it is not the case that all of the points are distinct and no two share a line, then the probability that $x_1, \dots, x_a$ are good sinks and $y_1, \dots, y_b$ are bad sinks is 0. Indeed, either there is a point which needs to be both a good sink and a bad sink, which is impossible, or there are two points on the same line which both need to be sinks and both need to win the shared line. 
    If it is the case that all of the points are distinct and no two share a line, then each of them are sinks independently with probability $(1/k)^n$ and the probability that they are all sinks is $(1/k)^{n (a + b)}$. 

    Suppose that the points are all distinct and no two share a line, and condition on the fact that they are all sinks. We describe a backwards exploration process which checks if the given sinks are good, and an approximate coupled process which is easier to work with. The coupled process is distributed like $(a+b)$ independent Galton--Watson branching processes each of which start with $n$ individuals and have branching distribution $\Bin((n-1)(k-1), 1/k)$. Hence, by Lemma~\ref{lem:pois_approx}, we know that the probability the approximate coupled process finds that $x_1, \dots, x_a$ are good sinks and $y_1, \dots, y_b$ are bad sinks tends to $p^a (1-p)^b$ as $k \to \infty$, where $p  = (\eta_{n-1})^n$. We will show that the probability that the approximate process and the exact process disagree at any point is $o(1)$, which essentially completes the proof.

    The exact process explores from the points $x_1, \dots, x_a$ in turn and then from $y_1, \dots, y_b$ in turn. Suppose we are exploring from the point $z$. We start with the $n$ lines which contain $z$ and we mark the point $z$ on each line. We will call these lines \emph{active}, and we remark that we know the winner of each active line and that each active line can reach $z$. At each step, pick the first active line $\ell$ (according to some fixed ordering) and, for each point $w$ on $\ell$ which is not the winner of $\ell$, check if $w$ is the winner on each of the lines containing $w$ (except $\ell$). If $w$ is the winner of such a line, we add the line to our set of active lines, and we note that we know the winner of the line and that the line can reach $z$. Once we are done exploring $\ell$, we remove $\ell$ from the set of active lines. 
    If at any point in this process, we have found at least $k^{\eps}$ lines that can reach $z$, we know $z$ is a good sink and we continue to the next sink as appropriate. Note that we may stop part way through exploring a line $\ell$.
    If we run out of active lines before finding $k^{\eps}$ lines that can reach $z$, then $z$ must be a bad sink.

    We now describe the coupled process which behaves in a similar way, except that we may add phantom lines. The coupled process also explores from the points $x_1, \dots, x_a$ in turn and then from $y_1, \dots, y_b$ in turn. Suppose we are exploring from the point $z$, and start with the same set of active lines as before. At each step, pick the first active line $\ell$, and suppose first that $\ell$ is not a phantom line. For each point $w$ on $\ell$ which is not the winner of $\ell$, consider each of the lines containing $w$. If we are considering a line for which we already know the winner (and it is therefore not $w$), we add a phantom copy of the line to the set of active lines with probability $1/k$ and we mark $w$ as the winner of this phantom copy. 
    If we already know $m \geq 0$ points on $\ell$ are not the winner, then the probability that $w$ is the winner is $1/(k-m)$. We check if $w$ is the winner of the line. If it is not, we do nothing. If it is, we add the line to the set of active lines with probability $(k-m)/k$, so the overall probability of adding the line is $1/k$. 
    If the line $\ell$ is a phantom line, we again consider each point $w$ on $\ell$ which is not the winner and each of the lines (except $\ell$) which are incident to $w$. For each line we consider, we will add a phantom copy of that line with $w$ as the winner with probability $1/k$. After we have fully processed $\ell$, we remove it from the set of active lines.
    If at any point in this process, we have found at least $k^{\eps}$ lines that can reach $z$, we guess $z$ is a good sink and we continue to the next sink as appropriate.
    Alternatively, we may stop when there are no active lines.
    Each line we check is added with probability $1/k$ and we check $(n-1)(k-1)$ lines when checking $\ell$, and so the probability that this coupled process finds less than $k^{\eps}$ lines which can ``reach'' $z$ has probability $\lambda(k)$ (defined as in the statement of Lemma~\ref{lem:pois_approx}).
    
    We now show that with probability $(1- o(1))$ the coupled process and the exact process agree.
    There are two ways the processes could diverge, we could add a phantom line in the approximate process as we already know the winner of the line is somewhere else, or we could discover $w$ is the winner of the line and still not add it.
    There are at most $(a+b)\lceil k^{\eps} \rceil$ lines that are ever explored and any unexplored line is checked at most once per line that is explored, so the probability that we discover our test vertex $w$ is the winner of a line and reject the line is at most $(a+b)\lceil k^{\eps} \rceil/k$. The probability that we reject any of the first $(a+b)\lceil k^{\eps} \rceil$ lines we should add is at most $(1+o(1))(a+b)^2 k^{2 \eps - 1} = o(1)$. 

    Suppose that we never reject a line that we should add. Then we reveal the winners of at most $(a+b)\lceil k^{\eps} \rceil$ lines, and we can check each of these lines at most once per line we explore. We explore at most $(a+b)\lceil k^{\eps} \rceil$ lines, so we certainly check a line for which we already know the winner at most $((a+b)\lceil k^{\eps} \rceil)^2$ times. The probability that any of these are accepted is at most $(1+o(1))(a+b)^2k^{2 \eps - 1} = o(1)$. Hence, with probability $1 - o(1)$, the two processes agree and the probability that $x_1, \dots, x_a$ are good sinks and $y_1, \dots, y_b$ are bad sinks is $(1 - \lambda(k))^a \lambda(k)^b + o(1)$, conditional on them being sinks.

    It remains to put the pieces together and verify the condition of Theorem~\ref{thm:poisson}. For this, we note that there are
    \[k^n (k^n  - (n(k-1) + 1))(k^n  - 2(n(k-1) + 1)) \cdots (k^n  - (a+b - 1)(n(k-1) + 1)) = (1 + o(1)) k^{n (a+b)}\]
    ways of picking $a+b$ ordered points with no two sharing a line.
    \begin{align*}
         &\expec \left[X(X-1) \cdots (X-a+1) \cdot Y(Y-1)\cdots (Y - b + 1)\right]\\
         &\quad\quad = (1+o(1))k^{n(a+b)} \cdot (1/k)^{n(a+b)} \cdot ((1 - \lambda(k))^a \lambda(k)^b + o(1))\\
         &\quad\quad= (1 - p)^a p^b + o(1). \qedhere
    \end{align*}
    Hence, we can appeal to \Cref{thm:method_of_moments} to finish the proof.
\end{proof}

From the above proof we see that the probability that a vertex $x$ is a bad sink is at most $1/k^n \cdot (\lambda(k) + 8\ k^{2\eps - 1})$ for large $k$, where we have used that $\lceil k^{\eps} \rceil^2 / k \leq 4 k^{2\eps - 1}$ for $k \geq 1$. We also note that, from the proof of \Cref{lem:pois_approx}, we have
\[\lambda(k) = \Prob(\tau < k^{\eps}) \leq \Prob(\tau' < \infty ) + \frac{3(n-1)}{k^{1-\eps}} = (\eta_{n-1})^n + \frac{3(n-1)}{k^{1-\eps}}. \]
In particular, the expected number of bad sinks is at most
\[
    (\eta_{n-1})^n + \frac{3(n-1)}{k^{1-\eps}} + 8k^{2\eps - 1}.
\]
When $n \geq k^{\eps}$, there are no bad sinks, so in fact we may bound the expectation by $(\eta_{n-1})^n +11k^{2\eps - 1}$.
\Cref{lem:no_bad_sinks} now follows from an application of Markov's inequality.

\section{Good cycles are found in lots of slices}

The aim of this section is to prove Lemma~\ref{lem:good_cycle} which shows that good cycles are ubiquitous. 

\goodCycle*
Let us first give a rough sketch of the proof of this lemma.
We start by considering the exploration process which follows the out-edges from a point in the slice, starting with a horizontal edge. 
The coordinates of the active point alternate updating randomly until one of the coordinates repeats, at which point the process starts to cycle. 
Since the new points are drawn uniformly, the time until the first repeat (in a particular coordinate) is a `birthday problem' and we therefore expect the first repeat to be when we have $\Theta(\sqrt{k})$ values.
Given that the next update is a repetition, it is uniformly random which of the previous values it takes.
If the repeated value was first seen early in the sequence, the cycle must be quite long and we expect a constant proportion of the points to be in the basin of a long cycle.
However, we need many points which are in the basin of the \emph{same} long cycle.
To this end, we bound the number of long cycles in the slice to show that at least one of them must have a large basin.

We now make the above sketch precise.

\begin{proof}[Proof of \Cref{lem:good_cycle}]
    Fix a slice $S$ and let us suppress all coordinates except the first two, so we are working on a $k \times k$ grid with vertical and horizontal directions. 
    Let us call a cycle in $S$ a \emph{long cycle} if it has length at least $\sqrt{k}$.
    We start by showing that the expected number of vertices in $S$ which are in the basin of at least one long cycle is at least $k^2/9$. This implies that the probability that there are least $ \delta k^2$ such vertices is at least $(9\delta - 1)/(9 ( \delta - 1))$. 
    Indeed, if $X$ is the number of vertices in the basins of the long cycles, we have
    \[\frac{k^2}{9} \leq \expec[X] \leq \delta k^2 ( 1 - \Prob(X \geq \delta k^2)) + k^2 \Prob(X \geq \delta k^2),\]
    and this implies $\Prob(X \geq \delta k^2) \geq (9\delta - 1)/(9 ( \delta - 1))$. 
    We will later take $\delta =  1/20$.
    %The probability is (9a - 1)/9(a-1).
    
    Fix a point $(x_0, y_0) \in [k]^2$ and consider the exploration process which begins by following the horizontal edge from $(x_0, y_0)$ to reach some point $(x_1, y_0)$, then the vertical edge from $(x_1, y_0)$ to reach some point $(x_1, y_1)$, and continues alternating between vertical and horizontal edges. At some point this exploration process must start cycling, and we will show that with probability at least $1/9$ this cycle has length at least $\sqrt{k}$. We let the sequence of $x$-coordinates be $x_0, x_1, x_2 \dots$, and similarly let $y_0, y_1, \dots$ be the sequence of $y$-coordinates. If $y_i \neq y_j$ for all $j < i$, then $x_{i+1}$ is uniformly distributed over $[k]$, while if $y_i = y_j$ for some $j < i$, then $x_{i+1} = x_{j+1}$ and the process is in a loop. The sequence of $y$-coordinates behaves similarly except the process does not necessarily loop when repeating $x_0$ as we have not revealed anything about the vertical line through $(x_0, y_0)$. 
    The probability that $x_0, \dots, x_{m-1}$ are all distinct and $y_0, \dots, y_{m-1}$ are all distinct is the square of the probability that $m$ uniform samples from $[k]$ are all distinct.
    The probability that the $i$th sample is the same as an earlier sample is $(i-1)/k$, so the probability that $m$ samples are all distinct is at least $1 - \frac{m(m-1)}{2k}$.
    Now condition on the event that $x_1, \dots, x_{i-1}$ are all distinct, $y_0, \dots, y_{i-1}$ are all distinct but $x_i \in \{x_1, \dots, x_{i-1}\}$. In this case, $x_i = x_j$ for some $j$ which is uniformly distributed over $\{1, \dots, i -1\}$. The length of the cycle is then 1 if $j = i -1$ and $2(i-j)$ otherwise. The same is true if instead $y_i \in \{y_0, \dots, y_{i-1}\}$, and so the probability that the cycle is of length at least $i$ given that the first repeat is either $x_i$ or $y_i$ is at least $1/2 - 1/i$ when $i \geq 3$.
    Hence, the probability that the exploration process reaches a cycle of length at least $\sqrt{k}$ is at least 
    \[\left( \frac{1}{2} - \frac{1}{\lceil\sqrt{k} \rceil} \right) \cdot \left(1 - \frac{\lceil \sqrt{k}\rceil(\lceil \sqrt{k}\rceil-1)}{2k}\right)^2\]
    which is at least $1/9$ for large enough $k$.

    We now show that there cannot be many cycles of length at least 4. For $r \geq 2$, the expected number of cycles of length $2r$ is \[\frac{(k(k-1)\dotsb (k - r + 1))^2}{rk^{2r}} \leq 1/r.\]
    Hence, the expected number of cycles is at most $\sum_{r=2}^{k^2} 1/r \leq 2 \log (k)$. By Markov's inequality, the probability that there are at least $\lambda \log (k)$ cycles is at most $2/\lambda$. 
    % Set lambda = 1800(a - 1)/(891 a - 91) to get 1/100
    With probability at least $(9\delta - 1)/(9 ( \delta - 1)) - 2/\lambda$, at least $\delta k^2$ of the vertices are in the basins of long cycles and there are at most $\lambda \log (k)$ cycles, which together imply that there is some long cycle with a basin of size at least $\frac{\delta k^2 }{\lambda\log(k)}$.
    To get the claimed result, we set $\delta = 1/20$ and $\lambda  = 34200/929$ to find that with probability at least $1/100$ there is a long cycle with a basin of size at least $929 / 684000 \geq 1/800$.
\end{proof}

\section{All good cycles are in the same strongly connected component}

The main aim of this section is to prove Lemma~\ref{lem:good-cycles-same-component}, that is, to prove that all good cycles are in the same strongly connected component.
We will first show that every pair of good cycles in slices which differ in exactly one coordinate are in the same strongly connected component with high probability. 
This does not immediately imply that all good cycles are in the same strongly connected component.
For example, it is possible that there is a slice containing two good cycles while none of the slices differing in a single coordinate contain a good cycle. 
To ensure that all good cycles are in the same strongly connected component, we want the slices which contain a good cycle to `form a connected component in $H(n-2, k)$ with at least two vertices'.
Fortunately, the fact that each slice contains a good cycle independently with probability at least $1/100$ is more than sufficient for this to happen with high probability.

Our first step towards showing that every pair of good cycles in slices which differ in exactly one coordinate can reach each other is to consider the case where $n = 3$.
This case actually does the majority of the heavy lifting and extending the result to all $n \geq 4$ only requires a little extra work. However, as part of this, we will need to apply the $ n = 3$ case after having revealed the winners in many of the slices (which we do to check if they contain a good cycle).
For this reason, we avoid $z$-coordinates which are even (and at least three) in the lemma below, leaving us free to check these slices for a good cycle.

\begin{lemma}\label{lem:cycle_to_cycle_3}
    Suppose that $n = 3$ and that $S(1)$ and $S(2)$ contain good cycles $C_1$ and $C_2$ respectively. Then, for any choice of winners within $S(1)$ and $S(2)$, the probability that there is a path from $C_1$ to $C_2$ in $\Lnk$ is
    \[1 - \exp(-\Omega(\sqrt{k}/\log(k))).\]
    Furthermore, this path does not include any vertex $(x,y,z)$ for which $z \geq 3$ is even.
\end{lemma}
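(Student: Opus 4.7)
The plan is to show that, starting from $C_1$, a multi-round forward exploration within the allowed part of $\Lnk$ reaches $\Omega(\sqrt{k})$ approximately uniformly distributed vertices in $S(2)$. Since the basin $B_2$ of $C_2$ has density $|B_2|/k^2 \geq 1/(800 \log k)$ in $S(2)$, at least one of these reached vertices will lie in $B_2$ except with probability $\exp(-\Omega(\sqrt{k}/\log(k)))$; from any such vertex we follow out-edges within $S(2)$ to $C_2$, producing a $C_1$-to-$C_2$ path that avoids even $z \geq 3$.

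I would first fix notation for the randomness: the winners $w\colon [k]^2 \to [k]$ of the vertical lines (the lines in dimension $3$), and the row- and column-winner maps within each slice $S(z)$ for $z \notin \{1,2\}$. The slice structures of $S(1)$ and $S(2)$ are fixed by hypothesis, so $C_1$, $C_2$, and $B_2$ are deterministic. To avoid any even $z \geq 4$, the exploration uses only vertical lines and lines contained in $S(1)$, $S(2)$, or $S(z)$ for $z$ odd with $z \geq 3$.

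The exploration alternates \emph{vertical jumps} (from $(x,y,z)$ to $(x,y,w(x,y))$) and \emph{lateral traversals} (following row and column winners within one slice). A vertical jump lands in an allowed slice with probability $1/2 + O(1/k)$. A lateral traversal from $(x_0,y_0,z) \in S(z)$ traces the out-component there; as in the proof of \Cref{lem:good_cycle}, the trajectory alternates row- and column-winner moves, so the $y$-coordinate evolves under iteration of a random map $[k]\to[k]$ (the composition of the column- and row-winner maps of $S(z)$), whose typical orbit length is $\Theta(\sqrt{k})$, and hence the lateral out-component has size $\Theta(\sqrt{k})$. Starting from $C_1 \subset S(1)$ and iterating a constant number of (vertical, lateral) rounds pushes the number of reachable vertices up to $\Omega(k^{3/2})$; since each vertical jump places its target in a uniformly chosen slice, $\Omega(\sqrt{k})$ of them lie in $S(2)$ with $(x,y)$-coordinates approximately uniform on $[k]^2$. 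Each such $S(2)$-vertex lies in $B_2$ with probability at least $1/(800\log k)$, and so the probability that none does is at most $\bigl(1 - 1/(800\log k)\bigr)^{\Omega(\sqrt{k})} = \exp(-\Omega(\sqrt{k}/\log(k)))$.

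The main obstacle is the dependency management in the exploration: different branches may revisit the same row, column, or vertical line, and conditioning on already-revealed winners biases subsequent jumps, so the clean picture of ``independent uniform $S(2)$-vertices'' is only approximate. I would handle this by revealing the random winners incrementally, only at first query, and by using Chernoff and second-moment estimates to verify that with high probability the branches use essentially disjoint randomness over the constant number of rounds, so that the terminal $S(2)$-vertices are close enough to independent and uniform on $[k]^2$ for the union bound above to go through. The restriction to allowed slices discards only a constant fraction of vertical jumps and therefore only enters the final bound as a constant factor.
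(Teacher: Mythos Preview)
Your outline is correct and follows essentially the same approach as the paper's proof: two rounds of vertical jumps (into fresh odd-$z$ slices) alternated with lateral $\Theta(\sqrt{k})$-step in-slice explorations to accumulate $\Omega(k^{3/2})$ reachable vertices with mostly distinct $(x,y)$-projections, followed by a final vertical jump that lands in $S(2)$ with probability $1/k$ and, conditionally, in the basin of $C_2$ with probability $\Omega(1/\log k)$. The paper carries out exactly the dependency bookkeeping you sketch---explicitly discarding repeated slices and repeated $(x,y)$-pairs at each stage and applying Chernoff bounds---so your plan and the paper's execution coincide.
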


To prove this lemma we start by exploring in the $z$-direction from the $\Theta(\sqrt{k})$ vertices in $C_1$ to get points in $\Theta(\sqrt{k})$ different slices. 
We then explore within each of these slices to get $\Theta(k)$ points which can all be reached from $C_1$.
This is then repeated once more to again boost the number of points which can be reached from $C_1$. That is, we explore in the $z$-direction to get to $\Theta(k)$ slices and we explore in each of these slices to get $\Theta(k^{3/2})$ points which can all be reached from $C_1$.
Finally, we explore in the $z$-direction from each of these points.
Heuristically, this final exploration should give $\Theta(\sqrt{k})$ points in $S(2)$ and each of these should be in the basin of $C_2$ with probability $\Theta(1/\log(k))$. This means that we expect $\Theta(\sqrt{k}/\log(k))$ of the points to be in the basin of $C_2$, and it is highly likely we have reached at least one point in the basin of $C_2$.
We now make this argument rigorous.

\begin{proof}[Proof of \Cref{lem:cycle_to_cycle_3}]   
    Let $m = \lceil \sqrt{k} \rceil$ and pick $m$ vertices from $C_1$ to form $A_0$.
    We will label the three directions by $x$, $y$ and $z$, where a slice is spanned by the $x$- and $y$-directions.
    We start by exploring in the $z$-direction from each of the $m$ points in turn, and we form $A_1$ by keeping the new points which are the first in a slice $S(z)$ where $z \geq 3$ is odd.
    The number of points we keep dominates a $\Bin(m, (k-2m - 2)/(2k))$ random variable. Hence, using Lemma~\ref{lem:chernoff}, the probability that there are at most $m/3$ points is at most
    % \[ \exp \left( - \frac{m}{36} \cdot \frac{(k - 6m - 6)^2}{k (k-2m-2)}\right) \]
    \( \exp ( - m/360) \)
    for $k$ large. % \geq 92$.
    If there are less than $m/3$ points in $A_1$, we will declare a failure, and so we assume that $|A_1| \geq \lceil m/3 \rceil$.

    From each of the points in $A_1$, we explore in the slice by first moving in the $x$-direction, then alternating between moving in the $y$-direction and the $x$-direction until an $x$- or $y$-coordinate is repeated. For each point $\mathbf{v} = (x_0, y_0, z)$ in $A_1$, we can sample this as follows. Generate a sequence of random points $(x_1, y_1), (x_2, y_2), \dots$ by picking points independently and uniformly at random from all $k^2$ options. The exploration process then follows the sequence 
    \[  (x_0, y_0, z),
        (x_1, y_0, z),
        (x_1, y_1, z),
        (x_2, y_1, z), \dots \]
    until an $x$- or $y$-coordinate is repeated for the first time, at which point the exploration process stops.
    If there is a repeated $x$- or $y$-coordinate in the points
    \[(x_0, y_0, z), (x_1, y_1, z), \dots , (x_m, y_m, z),\]
    we move onto the next point from $A_1$. 
    Otherwise, add the points $(x_i, y_i, z)$ where $1 \leq i \leq m$ to the set $A_2$.
    The probability that we have added points to $A_2$ when processing $v$ is at least 
    \[\left(1- \frac{m(m+1)}{2k}\right)^2 \geq \frac{1}{5}\]
    for $k$ large enough. % \geq 845$.
    As each slice is independent, the number of slices that contribute to $A_2$ is dominated by a $\Bin(\lceil m/3\rceil, 1/5)$ random variable and, by Lemma~\ref{lem:chernoff}, the probability that there are less than $m^2/20$ points in the set $A_2$ is at most $\exp(- m/480)$.

    This has given us lots of point in $A_2$, but the next step is to explore again in the $z$-direction and we actually need to have lots of different pairs of $x$- and $y$-coordinates.
    Recall that all of the points added are of the form $(x_i, y_i, z)$ where $(x_i, y_i)$ was picked independently and uniformly at random. However, we know that the points have been added to $A_2$ and, conditional on this information, the points are not independent.
    However, collisions should still be very rare and it is enough for us to bound the total number of collisions across the first $m$ pairs $(x_i, y_i)$ picked for all of the at most $m$ points in $A_1$.   
    The probability that a picked point matches an earlier point or one of the points from $A_0$ is at most $(m^2 + m)/k^2$, and so the number of such points is dominated by a $\Bin(m^2, (m^2 + m)/k^2)$ random variable. The probability that this is at least $\ell$ is at most 
    \[\binom{m^2}{\ell} \left(\frac{m^2 + m}{k^2}\right)^\ell \leq \left(\frac{e m^3 (m + 1)}{\ell \cdot k^2} \right)^\ell. \]
    Substituting in $\ell = \lceil m^2/100\rceil$, we find that the probability there at least $m^2/100$ collisions is $O(k^{-\Omega(k)})$.
    We now form $A_3$ by going through the points in $A_2$ and adding the point to $A_3$ if it does not have the same $x$- and $y$-coordinates as a point already in $A_3$ or in $A_0$.
    We also stop the process as soon as $A_3$ contains $\lceil m^2/25\rceil$ points. 
    Combining the above estimates, the probability that $A_3$ has $\lceil m^2/25\rceil$ points is at least $1 - e^{-\Omega(m)}$, and we assume that this is the case. 

    We now repeat the process with the points from $A_3$ replacing the points in $A_0$. That is, we explore in the $z$-direction from each of the points in $A_3$ and form the set $A_4$ by keeping the points $(x,y,z)$ which are in a slice we have not visited before and for which $z$ is odd and at least three.
    At any point in this exploration the number of slices which we will not accept is at most 
    \[\frac{k}{2} + m + 1 + \left \lceil \frac{m^2}{25} \right \rceil = \frac{27}{50} k + O(\sqrt{k}).\]
    Hence, the number of points in $A_4$ is dominated by a binomial random variable and we can use Lemma~\ref{lem:chernoff} to find that the probability that $A_4$ has at most $k/60$ points is at most $\exp(-\Omega(k))$.

    From each point $v$ in $A_4$, we explore out from $v$ in its slice by first moving in the $x$-direction, then alternating between moving in the $y$-direction and the $x$-direction until we repeat a coordinate. We do this by sampling $m$ points from $[k]^2$ uniformly at random as before.
    The probability that no two of the $m$ points share a coordinate with each other or with $v$ is at least $1/5$. We now condition on this being the case and go through the $m$ points in turn. We say a point $(x,y,z)$ is \emph{good} if 
    \begin{enumerate}
        \item the point $(x,y,2)$ is in the basin of $C_2$,
        \item we have not already encountered a point of the form $(x, y, \star)$, 
        \item the winner of the line through the point in the $z$-direction is $(x,y,2)$.
    \end{enumerate}
    If any of the $m$ points are good, then we have the requisite path from $C_1$ to $C_2$.
    The distribution of the $i$th point is uniform over all points which do not share an $x-$ or $y-$coordinate with the first $i-1$ points or with $v$. Each banned value for a coordinate reduces the number of points which we may take for $i$ by at $k$, and so the conditioning reduces the number of choices for the $i$th point by at most $2km$.
    When exploring within a slice we encounter at most $2m$ new pairs of $x$- and $y$- coordinates, so there are at most pairs $(x,y)$ for which we have encountered a point of the form $(x,y,\star)$. 
    The worst case is that all these points that the $i$th point cannot take satisfy the first condition, but that still leaves at least \[\frac{k^2}{800 \log(k)} - 4km\] points which satisfy the first two conditions.
    Given that the $i$th point satisfies the second condition, the winner in the $z$-direction is $(x,y,2)$ with probability $1/k$. 
    Hence, the probability that the $i$th point is good is at least
    \[\frac{\frac{k^2}{800 \log(k)} - 4km}{k^3} = \Theta(1/(k\log(k)),\]
    and the probability that none of the $m$ points are good is \[1 - 
\Omega \left(\frac{1}{\sqrt{k} \log(k)} \right).\]
    
    The probability that a particular point in $A_4$ does not lead to a good point is at most 
    \begin{align*}
        \frac{4}{5} + \frac{1}{5}\left(1 - \Omega\left(\frac{1}{\sqrt{k} \log(k)} \right) \right) &= 1 - \Omega\left(\frac{1}{\sqrt{k} \log(k)} \right).
    \end{align*}
    Hence, for sufficiently large $k$, the probability that none of the points in $A_4$ lead to a good point is at most
    \[\exp \left( - \Omega\left(\frac{\sqrt{k}}{\log(k)} \right) \right),\]
    as required.
\end{proof}

We now consider the case where $n \geq 4$. To take advantage of the extra dimensions we will consider lots of sequences of good cycles $C_1$, $C_1'$, $C_2'$, $C_2$ where adjacent cycles in the sequence are in slices which differ in a single coordinate. By the above, the probability that there is a path between each pair of adjacent good cycles is $1 - \exp(-\Omega(\sqrt{k}/\log(k)))$, and we will take $n$ independent sequences to boost the probability of getting from $C_1$ and $C_2$.

\begin{lemma}
\label{lem:cycle_to_cycle_single_change}
    There is a constant $c > 0$ such that, for all large enough $k$ and all $n \geq 3$,  with probability at least
     \[1 - \exp( - c n \sqrt{k}/\log(k))\]
     there is a path from $C_1$ to $C_2$ whenever $C_1$ and $C_2$ are good cycles in slices which differ in a single coordinate.   
\end{lemma}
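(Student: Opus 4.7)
The plan is to handle the case $n = 3$ by a first-moment bound on pairs of good cycles using \Cref{lem:cycle_to_cycle_3} directly, and to handle $n \geq 4$ by constructing $n - 3$ independent intermediate routes. Write $\mathbf{a}, \mathbf{b} \in [k]^{n-2}$ for the slice indices containing $C_1, C_2$, and let $i^* \in \{3,\ldots,n\}$ be the coordinate in which $\mathbf{a}$ and $\mathbf{b}$ differ. For each $j \in \{3,\ldots,n\}\setminus\{i^*\}$, the route indexed by $j$ is a sequence $C_1 \to C_1' \to C_2' \to C_2$, where $C_1'$ and $C_2'$ are good cycles in intermediate slices $S(\mathbf{a}'_j(v_j))$ and $S(\mathbf{b}'_j(v_j))$ obtained from $\mathbf{a}, \mathbf{b}$ by changing coordinate $j$ to a common value $v_j \neq a_j$.

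For fixed $j$ I first pick $v_j$. By \Cref{lem:good_cycle} each slice independently contains a good cycle with probability at least $1/100$, so for each $v \neq a_j$ the joint event that both $S(\mathbf{a}'_j(v))$ and $S(\mathbf{b}'_j(v))$ contain good cycles has probability at least $10^{-4}$, and these joint events are independent over $v$ since the slices are disjoint. Hence with probability at least $1-\exp(-\Omega(k))$ some suitable $v$ exists; take the smallest and call it $v_j$. Each of the three hops $C_1 \to C_1'$, $C_1' \to C_2'$, $C_2' \to C_2$ connects good cycles in slices of a three-dimensional subgrid differing in one coordinate (subgrids varying coordinates $1, 2, j$ for the outer hops and $1, 2, i^*$ for the middle hop), so \Cref{lem:cycle_to_cycle_3} applied inside that subgrid gives each hop failure probability at most $\exp(-\Omega(\sqrt{k}/\log k))$. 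A union bound yields route-$j$ failure probability at most $\exp(-\Omega(\sqrt{k}/\log k))$.

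The key point is independence across routes. Inspecting the proof of \Cref{lem:cycle_to_cycle_3} shows that it never consults a line winner inside the starting slice beyond those already forced by declaring $C_1$ good: the winners it queries lie either in the ``$z$-direction'' or inside slices other than the starting one. Thus, conditional on $C_1$ and $C_2$ being good, the additional randomness consulted by route $j$ consists of winners of lines in direction $j$, winners inside slices $S(\mathbf{a}'_j(v))$ or $S(\mathbf{b}'_j(v))$ for $v \neq a_j$, and winners inside the middle three-dimensional subgrid pinned at $x_j = v_j \neq a_j$. For a different coordinate $j'$ each of these three sets is disjoint from its counterpart: lines in directions $j$ and $j'$ are distinct; slices with coordinate $j$ altered are distinct from slices with coordinate $j'$ altered (since $v_j \neq a_j$ forces the slices apart); and the middle subgrids for $j$ and $j'$ are disjoint because in one $x_j = v_j \neq a_j$ is forced while in the other $x_j = a_j$ is forced. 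Independence then gives
\[
    \Prob\bigl(\text{every route fails} \bigm| C_1, C_2 \text{ good}\bigr) \leq \bigl(\exp(-\Omega(\sqrt{k}/\log k))\bigr)^{n-3} = \exp(-\Omega(n\sqrt{k}/\log k)),
\]
using $n - 3 \geq n/4$ for $n \geq 4$; for $n = 3$ the same conditional bound is immediate from \Cref{lem:cycle_to_cycle_3}.

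To upgrade from a fixed pair to ``whenever $C_1, C_2$ are good cycles in adjacent slices'' I would control the expected number of failing pairs. The calculation in the proof of \Cref{lem:good_cycle} shows a single slice has $O(\log k)$ expected cycles, and since different slices use independent winners, the expected number of ordered pairs of cycles in adjacent slices is at most $(n-2)(k-1)k^{n-2} \cdot O(\log^2 k)$. Multiplying by the per-pair conditional failure bound gives expected failing-pair count at most $\exp(O(n \log k)) \cdot \exp(-\Omega(n\sqrt{k}/\log k))$, which for $k$ large enough that $\sqrt{k}/\log k$ dominates $\log k$ is $\exp(-\Omega(n\sqrt{k}/\log k))$; Markov's inequality then finishes the proof. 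I expect the principal obstacle to be the independence argument, i.e., carefully confirming that the three-dimensional subgrids for different $j$'s share no random winners beyond those already revealed by the goodness of $C_1$ and $C_2$; the rest is standard bookkeeping and concentration.
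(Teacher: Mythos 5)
Your overall architecture is the same as the paper's: condition on the winners in the two slices containing $C_1$ and $C_2$, build $n-3$ routes $C_1\to C_1'\to C_2'\to C_2$ through intermediate slices obtained by perturbing one further coordinate $j$, apply \Cref{lem:cycle_to_cycle_3} to each hop inside a three-dimensional subgrid, exploit disjointness of the line sets consulted by different routes to get independence, and finish with a first-moment bound over pairs of good cycles in adjacent slices (your bound via the \emph{expected} number of cycles per slice is a legitimate variant of the paper's deterministic count of at most $1600\log k$ good cycles per slice, since the per-pair failure bound from \Cref{lem:cycle_to_cycle_3} is uniform over winner configurations).

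There is, however, one genuine gap, and it sits exactly where the ``Furthermore'' clause of \Cref{lem:cycle_to_cycle_3} is needed. To locate $v_j$ you propose to test \emph{every} $v\neq a_j$ for whether $S(\mathbf{a}'_j(v))$ and $S(\mathbf{b}'_j(v))$ both contain good cycles. Deciding this requires revealing the within-slice winners of those slices (or at least conditioning on a nontrivial event about them). But the slices $S(\mathbf{a}'_j(v))$, $v\neq a_j$, are precisely the intermediate slices through which the proof of \Cref{lem:cycle_to_cycle_3} must route the outer hop $C_1\to C_1'$: that proof repeatedly explores within fresh, unrevealed slices of the three-dimensional subgrid, and its probability estimates collapse once those slices have been examined. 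So as written, your application of \Cref{lem:cycle_to_cycle_3} to the outer hops is not valid. The paper resolves this by only testing the slices whose $j$th coordinate is even (after relabelling), which still leaves $\lfloor k/2\rfloor$ independent trials and hence failure probability $e^{-\Omega(k)}$ for finding $v_j$, and then invoking the guarantee that the path produced by \Cref{lem:cycle_to_cycle_3} avoids every vertex whose third coordinate is even and at least three---so the revealed slices are never needed as fresh randomness. Your careful disjointness analysis \emph{across} routes is correct and matches the paper's independence claim; it is the reveal/re-use conflict \emph{within} a single route that you have not addressed, and patching it requires exactly the parity device built into the statement of \Cref{lem:cycle_to_cycle_3}.
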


\begin{proof}
    First, let us handle the case where $n  = 3$. 
    Each vertex can be in the basin of at most 2 cycles, and so there are at most $1600 \log(k)$ good cycles in each slice.
    There are $k$ slices, and so there are at most $(1600 k \log(k))^2$ ordered pairs of good cycles.
    The probability that there is not a path between any particular pair of them is $\exp( - \Omega(\sqrt{k}/\log(k)))$ by Lemma~\ref{lem:cycle_to_cycle_3}, and the result follows by the union bound.

    We now assume that $n \geq 4$.
    Suppose $C_1$ and $C_2$ are good cycles in the slices $\mathbf{a}_1$ and $\mathbf{a}_2$ which differ in only one coordinate, and reveal the winners in these slices.
    By relabelling the points as necessary, we can assume $\mathbf{a}_1 = (1, \dots, 1)$ and $\mathbf{a}_2 = (2, 1, \dots, 1)$.
    For each $2 \leq i \leq n - 2$ and odd $1 \leq z \leq k -1$, let $\mathbf{a}_j + z \mathbf{e_i}$ be the vector which differs from $\mathbf{a}_j$ by taking the value $z+1$ in the $i$th position.
    The slices $S(\mathbf{a}_1 + z \mathbf{e_i})$ and $S(\mathbf{a}_2 + z \mathbf{e_i})$ each independently contain a good cycle with probability at least $1/100$ by Lemma~\ref{lem:good_cycle}.
    Reveal the winners in the slices $S(\mathbf{a}_j + z \mathbf{e_i})$ for odd $z$ and suppose that there is a $z$ such that both $S(\mathbf{a}_1 + z \mathbf{e_i})$ and $S(\mathbf{a}_2 + z \mathbf{e_i})$ contain good cycles, which we denote $C_1'$ and $C_2'$ respectively.
    Note that , since there are $\lfloor{k/2}\rfloor$ choices for $z$ for each choice of $i$, the probability that there is such a $z$ is at least 
    \[1 - \left( \frac{1}{100^2} \right)^{\lfloor{k/2}\rfloor} = 1 - \exp(-\Omega(k)).\]

    Consider the subgraph of $\Lnk$ consisting of vertices $\mathbf{v} = (v_1, \dots, v_n)$ where $v_j = 1$ for all $j \not \in \{1,2,i + 2\}$, and note that this is a copy of $\vv{L}(3,k)$.
    By relabelling as necessary we can assume that $z = 1$.
    Hence, we can apply Lemma~\ref{lem:cycle_to_cycle_3} to find that there is a path from $C_1$ to $C_1'$ in this subgraph with probability $1 - \exp(-\Omega(\sqrt{k}/\log(k)))$.
    We note that it does not matter that we have revealed the winners in slices with an even $i$th coordinate as the path does not use any vertex for which the $i$th coordinate is even and at least three.
    Similarly, by considering the subgraph of $\Lnk$ induced by vertices $v = (v_1, \dots, v_n)$ with $v_j = 1$ for $j \not \in \{1,2, 3, i + 2\}$ and $v_{i+2} = z + 1$, there is a path from $C_1'$ to $C_2'$ with probability $1 - \exp(-\Omega(\sqrt{k}/\log(k)))$.
    Likewise, there is a path from $C_2'$ to $C_2$ with probability $1 - \exp(-\Omega(\sqrt{k}/\log(k)))$.
    If all of these paths exist, then there is clearly a path from $C_1$ to $C_2$, which we will call an \emph{$i$-path}.
    After revealing the winners in $S(\mathbf{a}_1)$ and $S(\mathbf{a}_2)$, the existence of an $i$-path is independent of whether there exists a $j$-path for $j \neq i$, so the probability there is a path from $C_1$ to $C_2$ is at least
    \[1 - \exp(- c n \sqrt{k}/\log(k))\]
    for some small constant $c > 0$ and $k$ large enough.
    There are $(n-2)(k-1)k^{n-2}$ (ordered) pairs of slices which differ in exactly one coordinate, and so there are at most $(1600 \log(k))^2 n k^{n-1}$ pairs of good cycles in slices which differ in exactly one coordinate.
    Using the union bound, the probability that there are two good cycles which do not have a path between them is $\exp( - c' n  \sqrt{k}/\log(k))$ for large enough $k$ and some small constant $c' > 0$.
\end{proof}

The lemma above shows that with high probability, there is a path from every good cycle in $S(\mathbf{u})$ to every good cycle in $S(\mathbf{v})$ whenever $\mathbf{u}\mathbf{v}$ is an edge in $H(n-2,k)$. 
Given a good cycle in $S(\mathbf{u})$ and a good cycle in $S(\mathbf{v})$ (where $\mathbf{u} \neq \mathbf{v}$), we know there must be a path between them if there is some path $\mathbf{u} = \mathbf{v_1}, \dots, \mathbf{v_r} = \mathbf{v}$ in $H(n-2, k)$ such that $S(\mathbf{v_i})$ contains a good cycle for every $i$. 
If we say a vertex of $H(n-2, k)$ is \emph{open} if the corresponding slice in $\Lnk$ contains a good cycle, then we are interested in the event that the open vertices form a connected component. 
Since each vertex is open independently with constant probability, this occurs with high probability, as shown by the following lemma.

\begin{lemma}\label{lem:perc}
    Suppose that $n = n(k) \geq 1$ and let $A \subseteq [k]^n$ be a random subset of $[k]^n$ where each element is added independently with probability $p$. Then, for large enough $k$, the subgraph $H(n,k)[A]$ is connected with probability at least $1 - \exp(-cnk)$ for some constant $c > 0$ depending only on $p$.
\end{lemma}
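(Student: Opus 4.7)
I will establish the bound by the first moment method applied to the connected components of $H(n,k)[A]$. Since $H(n,k)[A]$ is disconnected exactly when it has a component of size at most $\lfloor k^n/2 \rfloor$, writing $N_s$ for the number of such components of size exactly $s$, we have
\[
    \Prob(H(n,k)[A] \text{ disconnected}) \leq \sum_{s=1}^{\lfloor k^n/2 \rfloor} \expec[N_s].
\]
For each connected $S \subseteq [k]^n$ of size $s$, the event that $S$ is an isolated component of $H(n,k)[A]$ requires all of $S$ to be open and all of $N(S)\setminus S$ to be closed, which occurs with probability $p^s (1-p)^{|N(S)\setminus S|}$. By \Cref{lem:trees}, the number of connected subsets of size $s$ containing a fixed vertex is at most $(en(k-1))^{s-1}$, so there are at most $(k^n/s)(en(k-1))^{s-1}$ such subsets.

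The main work is lower-bounding $|N(S)\setminus S|$, for which I plan to combine two bounds. For small $s$, a degree-based bound $|N(S)\setminus S| \geq n(k-1) - s + 1$ follows from $|E(S, V\setminus S)| \geq s(n(k-1) - s + 1)$ together with $\deg_S(v) \leq s$ for any $v \notin S$. For larger $s$, a Loomis--Whitney-type bound $|N(S) \setminus S| \geq k s^{(n-1)/n} - s$: letting $P_i$ denote the projection of $S$ onto the coordinates other than $i$, the external vertices in direction $i$ number $|P_i|k - s$, so averaging yields $|N(S)\setminus S| \geq \tfrac{1}{n}\sum_i(|P_i|k - s)$; Loomis--Whitney gives $\prod_i |P_i| \geq s^{n-1}$, and AM--GM then produces $\sum_i |P_i| \geq n s^{(n-1)/n}$.

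Splitting the sum: for $s = 1$ we have $\expec[N_1] \leq k^n p (1-p)^{n(k-1)} \leq \exp(-cnk)$ for any $c < |\log(1-p)|$ and $k$ large. For small $s \geq 2$, the degree bound provides a vertex boundary of order $nk$, and the factor $(1-p)^{n(k-1)/2}$ dominates the combinatorial $(enk)^{s-1}$. For the main range of $s$, the Loomis--Whitney bound gives $|N(S)\setminus S| \geq ks^{(n-1)/n} - s$, and a direct optimization (the maximum of $\log\expec[N_s]$ occurs near $s^* = \Theta((qk/\log(nk))^n)$, with $q = |\log(1-p)|$) shows the per-$s$ bound is at most $\exp(-cnk)$ for some $c = c(p) > 0$ and all $k$ sufficiently large depending only on $p$. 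Summing and absorbing the $k^n$ prefactor into the exponential (since $n\log k \leq cnk/2$ for $k$ large) yields the claim.

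The principal technical obstacle is the very-large-$s$ regime (close to $k^n/2$), where the bound from \Cref{lem:trees} becomes loose; there one may need to switch to the trivial count $\binom{k^n}{s} \leq (ek^n/s)^s$ together with a spectral (or Harper-type) isoperimetric bound, exploiting that the spectral gap of $H(n,k)$ is $\Theta(k)$ so that conductance gives $|N(S)\setminus S| \gtrsim s/n$ for $|S| \leq k^n/2$. A careful case analysis across these regimes, together with the first moment estimate above, will complete the proof.
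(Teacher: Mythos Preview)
Your approach is genuinely different from the paper's, and unfortunately the gap you flag in the large-$s$ regime is real and is not closed by the tools you propose.

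The paper does not use a first-moment/component-counting argument at all. Instead it proves two local facts: (i) every pair of open vertices at Hamming distance at most three are joined by an open path, and (ii) every vertex (open or not) has at least one open neighbour. For (i) one exhibits $\Theta(nk)$ internally vertex-disjoint paths of length at most five between any such pair; each is open with probability $p^4$, so the failure probability for a fixed pair is $(1-p^4)^{\Theta(nk)}$, and a union bound over $k^{2n}$ pairs suffices. For (ii) the failure probability per vertex is $(1-p)^{n(k-1)}$. Given (i) and (ii), any two open vertices $u,v$ are connected: walk along any $H(n,k)$-path from $u$ to $v$, replace each intermediate vertex by an open neighbour, and stitch using (i). This sidesteps component sizes entirely.

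The reason your scheme breaks is quantitative. Both the Loomis--Whitney and spectral isoperimetric bounds give, for $|S|$ of order $k^n$, only $|N(S)\setminus S|=\Omega(|S|/n)$ (indeed for the Loomis--Whitney bound at $s=k^n/2$ one gets $(k^n/2)(2^{1/n}-1)\sim (k^n\log 2)/(2n)$). With the tree count, the exponent of your bound is roughly $s\log(enk)-qks^{(n-1)/n}$, which is convex in $s$ and therefore maximised at an endpoint; at $s=k^n/2$ it is of order $+k^n\log k$. Switching to the trivial count $\binom{k^n}{s}$ gives exponent roughly $k^n\!\left[\log 2 + \tfrac12\log p - \tfrac{q\log 2}{2n}\right]$ at $s=k^n/2$, which is positive as soon as $p>1/4$ (and even for small $p$ the analogous computation at $s\approx pk^n$ yields exponent $\approx pk^n(1-O(1/n))>0$ once $n$ is large). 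For a concrete failure: take $n=2$, $p=1/2$, $s=k^2/2$; even the sharp isoperimetric constant for $H(2,k)$ (extremal sets are near-squares, boundary $\approx 0.414\,k^2$) gives a first-moment upper bound $2^{k^2}\cdot 2^{-0.914k^2}=2^{0.086k^2}\to\infty$. So no combination of the counts and isoperimetric inputs you list can close the large-$s$ range; a different mechanism (such as the paper's local-to-global argument) is needed.
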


\begin{proof}
    We will show that two events occur with appropriately high probability: that every pair of open vertices at distance at most three in $H(n,k)$ are joined by a path of open vertices, and that every vertex is adjacent to an open vertex. Together these two events are enough to guarantee that $H(n,k)[A]$ is connected. Indeed, fix two vertices $\mathbf{u}, \mathbf{v} \in A$ and consider any path in $H(n,k)$ between them, say $\mathbf{u} = \mathbf{p_0}, \mathbf{p_1}, \dots, \mathbf{p_t} = \mathbf{v}$. 
    For each vertex $\mathbf{p_i}$ with $2 \leq i \leq t-2$, we can pick an open vertex $q_i \in A$ which is adjacent to $p_i$.
    Each of the vertices $\mathbf{p_2}, \dots, \mathbf{p_{t-2}}$ is adjacent to a vertex in $A$, say $\mathbf{p_i}$ is adjacent to the open vertex $\mathbf{q_i}$. Then $\mathbf{u}$ and $\mathbf{q_2}$ are both open and at distance at most three, so are connected by a path of open vertices. Likewise, there is an open path between $\mathbf{v}$ and $\mathbf{q_{t-2}}$. The vertices $\mathbf{q_i}$ and $\mathbf{q_{i+1}}$ are open and at distance at most three, so there also is an open path from $\mathbf{q_i}$ to $\mathbf{q_{i+1}}$. Stitching all these paths together gives an open path from $\mathbf{u}$ to $\mathbf{v}$ as required.

    We now prove that the two events occur with appropriately high probability.
    To prove the claim for the first event, we will need lots of vertex disjoint paths between any two vertices at distance at most three. 
    Let $\mathbf{u}$ and $\mathbf{v}$ be distinct vertices in $A$ at distance at most three in $H(n,k)$, and let us assume that they are actually at Hamming distance exactly three; the other cases are similar. 
    For ease of notation, we consider the copy of $H(n,k)$ on $\{0,1\dots, k-1\}^n$ instead of $[k]^n$.
    By relabelling as necessary, we can assume that $\mathbf{u} = (0, \dots, 0)$ and $\mathbf{v} = (1, 1, 1, 0, \dots, 0)$. Let us associate vertices with points in $\mathbb{R}^n$ in the obvious way, and let $\mathbf{e_1}, \dots, \mathbf{e_n}$ be the standard basis.
    For each $1 \leq \alpha \leq (k-2)/2$ and $j \in [n]$, consider the path 
    \[\mathbf{0}, 2\alpha \mathbf{e_j}, \mathbf{e_1} + 2\alpha \mathbf{e_j}, \mathbf{e_1} + \mathbf{e_2} + 2\alpha \mathbf{e_j}, \mathbf{e_1} + \mathbf{e_2} + \mathbf{e_3} + 2\alpha \mathbf{e_j}, \mathbf{e_1} + \mathbf{e_2} +  \mathbf{e_3}.\]
    These paths are vertex disjoint except for the endpoints (which we have assumed are in $A$).
    Hence, the probability that all of the vertices on this path are in $A$ is $p^4$, and the probability that no path is open for any $\alpha$ and $j$ is at most $(1 - p^4)^{(k-3)n/2}$. 
    There are only $k^{2n}$ possible pairs of open vertices and we can take a union bound over them to find that the probability the first event holds is at least 
    \[1 - \left( k^2 (1-p^4)^{(k-3)/2}\right)^n.\]

    The second claim is even easier.
    Each vertex has $n(k-1)$ neighbours, each of which is independently in $A$ with probability $p$. 
    Hence, the probability that none of them are in $A$ is $(1-p)^{n(k-1)}$.
    Taking a union bound over the $k^n$ vertices in $H(n,k)$, the probability of the second event is at least
    \[1 - \left( k (1-p)^{k-1}\right)^n.\]
    
\end{proof}

We are now in a position to prove the main result of this section, Lemma~\ref{lem:good-cycles-same-component}, which we restate here for convenience.

\goodCyclesSameComponent*

\begin{proof}
    Let $A$ be the set of $\mathbf{v} \in [k]^{n-2}$ for which the slice $S(\mathbf{v})$ contains a good cycle, and suppose that $H(n-2, k)[A]$ is connected and has at least two vertices.
    Suppose also that there is a path in $\Lnk$ between every pair of good cycles in slices which differ in a single coordinate.
    Then every good cycle is in the same strongly connected component.
    Indeed, suppose there are two good cycles $C_1$ and $C_2$ in different slices $S(\mathbf{a_1})$ and $S(\mathbf{a_2})$. 
    Then $\mathbf{a_1}, \mathbf{a_2} \in [A]$ and there is a path $\mathbf{a_1} = \mathbf{v_1}, \dots, \mathbf{v_m} = \mathbf{a_2}$ in $H(n-2, k)[A]$ from $\mathbf{a_1}$ to $\mathbf{a_2}$.
    Each of these vertices corresponds to a slice which contains a good cycle, and it is possible to get from any good cycle in $S(\mathbf{v_i})$ to any good cycle in $S(\mathbf{v_{i+1}})$ as these slices differ in exactly one coordinate. 
    Chaining these together, there is a path in $\Lnk$ from $C_1$ to $C_2$.
    For two good cycles $C_1$ and $C_2$ in the same slice, pick any good cycle $C_3$ in a different slice (which is possible as $|A| \geq 2$) and note that $C_1$ and $C_2$ are both in the same strongly connected component as $C_3$.

    It just remains to check that the events above occur with sufficiently high probability. The probability that $A$ has less than two vertices is at most
    \[\left(\frac{99}{100} \right)^{k^{n-2}} + \frac{k^{n-2}}{100} \left(\frac{99}{100} \right)^{k^{n-2} - 1} \leq \exp(- c' k^{n-2} )\]
    for some small $c' >0 $ and large enough $k$. The other events happen with sufficiently high probability by \Cref{lem:perc,lem:cycle_to_cycle_single_change}.        
\end{proof}

\section{All non-sinks can reach a good cycle}

The main aim of this section is prove Lemma~\ref{lem:reaching_cycle}, which we restate below for convenience.

\reachingCycle*

\begin{proof}[Proof of Lemma~\ref{lem:reaching_cycle}]
    The event that there is a non-sink which cannot reach a good cycle can be equivalently stated as the event that there exist distinct vertices $x,y\in[k]^n$ sharing a line, with the property that $y$ is the winner in their common line and $x$ cannot reach a good cycle. We will union bound over such pairs $(x,y)$, of which there are $n(k-1)k^n$. For each fixed $(x,y)$, the probability that $y$ is the winner in their common line is $1/k$, and we will bound the probability that $x$ cannot reach a good cycle given that $y$ wins the line. 

    In pursuit of this, we begin by considering the following exploration process in $\Lnk$ conditioned on $y$ winning its common line with $x$. At each stage in the process we will have a set $A$ of \emph{active} vertices and a set $P$ of \emph{processed} vertices. Throughout the exploration, it will always be the case that for each active vertex $z\in A$, the winner of exactly one of the lines containing $z$ has been revealed.
    Initiate the process with $A=\{x,y\}$ and $P=\emptyset$. At each step, pick an arbitrary active vertex $z\in A$ (provided one exists). Delete $z$ from $A$ and add it to $P$, then reveal the winners of the $n-1$ lines containing $z$ in which the winner is not currently known. For each such winner $w$ in turn, add $w$ to $A$ unless it shares a line with any vertex in $(A\cup P)\setminus\{z\}$. Any vertex that we add to $A$ at this stage will be called an \emph{offspring} of $z$. Terminate the process at the first point where there are no active vertices at the start of a step, $\abs{A}\geq k^{1/6}$ or $\abs{A \cup P} \geq 4k^{1/6}$.
    In these latter two cases, we may stop part way through adding a collection of winners to $A$.

    Note that, during this process, every line in which we know the winner contains an active or processed vertex, so the condition on adding vertices to $A$ ensures that for each $z\in A$ there are indeed $n-1$ lines containing $z$ whose winner is unknown. Moreover, every vertex on such a line is equally likely to be the winner of the line.

    We first show that the probability that the process ends with $|A \cup P| \geq 4k^{1/6}$ is small.
    \begin{claim}
       There is a constant $c > 0$ such that the probability that the exploration process ends with $|A \cup P| \geq 4k^{1/6}$ is at most $k^{-c n k^{1/6}}$ for all $n \geq 3$ and $k$ large enough. 
    \end{claim}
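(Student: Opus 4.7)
The plan is to recast the event $|A \cup P| \ge 4k^{1/6}$ at termination as a rare lower-tail event for a sum of offspring counts, and then apply a sharp binomial tail bound. Write $S_t = \sum_{z \le t} X_z$ for the total number of offspring added after the first $t$ processing steps, so that $|A \cup P|_t = 2 + S_t$ and $|A|_t = 2 + S_t - t$. The two stopping rules force $S_s < s + k^{1/6} - 2$ and $S_s < 4k^{1/6} - 2$ at every step $s$ strictly before termination; combined with the fact that a single processing step adds at most $n-1$ offspring, any realisation that ends with $|A\cup P|\ge 4k^{1/6}$ must survive through at least $s_0 := \lceil 3k^{1/6}\rceil - n$ processing steps, and must have $S_{s_0} < 4k^{1/6}$ at that moment.

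The heart of the argument is to stochastically lower-bound the offspring count of each processed vertex. When $z$ is being processed, the $n-1$ lines through $z$ with as-yet unrevealed winner contain no vertex of $(A\cup P)\setminus\{z\}$: such a vertex $v$ would share a line with $z$, and the rejection rule would have eliminated either $v$ upon its addition or $z$ upon $z$'s addition (the parent of $z$ is on the parent-line, which is not one of these $n-1$ new lines). For each new line, the winner is uniform on $k$ positions, of which at most $|A\cup P| \le 4k^{1/6}$ are forbidden: one for $z$ itself (which yields no offspring), and at most one per other vertex of $A\cup P$, since each such vertex (not lying on the current line) shares a line with at most one position on it. Each line therefore produces a valid offspring with probability at least $1 - 4/k^{5/6}$, and a standard coupling (inserting fresh independent Bernoullis from the first moment $|A\cup P|$ exits $[0,4k^{1/6}]$, which does not affect the event in question) yields that, up to this stopping time, $S_{s_0}$ stochastically dominates $Z \sim \Bin(s_0(n-1),\, 1 - 4/k^{5/6})$.

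Finally, rewriting $\{Z \le 4k^{1/6}\}$ as $\{\Bin(s_0(n-1),\, 4/k^{5/6}) \ge s_0(n-1) - 4k^{1/6}\}$ reduces the claim to an upper-tail bound for a binomial with small success probability. The standard estimate $\Prob(\Bin(N,q) \ge t) \le (eNq/t)^t$ applied with $N = s_0(n-1) = \Theta(nk^{1/6})$, $q = 4/k^{5/6}$, and $t = s_0(n-1) - 4k^{1/6} = \Theta(nk^{1/6})$ (valid for $n \ge 3$ since the latter is $\ge (3n-7)k^{1/6} \cdot (1-o(1))$) gives $eNq/t = O(1/k^{5/6})$, and hence $\Prob \le k^{-(5/6)t} = k^{-c\,nk^{1/6}}$ for some absolute constant $c>0$. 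The main technical subtlety is keeping the stochastic domination valid across the random stopping time, which the coupling above handles; the combinatorial line-counting and the binomial upper-tail estimate are both standard.
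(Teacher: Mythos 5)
Your overall strategy --- lower-bounding the number of accepted offspring by a binomial via stochastic domination, rather than union-bounding over tree realisations (as the paper does via Lemma~\ref{lem:trees}) and pricing in the many forced rejections --- is sound, and your combinatorial ingredients are correct: each of the $n-1$ fresh lines through $z$ meets $A\cup P$ only in $z$, each other vertex of $A\cup P$ forbids at most one position on such a line, and the winner of such a line is uniform over its $k$ positions. The genuine gap is in the quantitative step. Your choice $s_0=\lceil 3k^{1/6}\rceil-n$ makes the argument vacuous unless $n$ is small compared with $k^{1/6}$: for $n\geq 3k^{1/6}$ you have $s_0\leq 0$, and more generally $t=s_0(n-1)-4k^{1/6}=(3n-7)k^{1/6}-n(n-1)+O(n)$, so your parenthetical claim that $t\geq(3n-7)k^{1/6}(1-o(1))$ already requires $n=o(k^{1/6})$. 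But the claim must hold uniformly for all $n\geq 3$ with a threshold on $k$ not depending on $n$: Lemma~\ref{lem:reaching_cycle} carries no upper bound on $n$, and in the proof of Theorem~\ref{thm:n-large} it is invoked with $k=\lceil\delta\sqrt{n/\log(n)}\rceil$, i.e.\ with $n$ of order $k^{2}\log(k)\gg k^{1/6}$. In that regime your bound proves nothing, whereas the event is still genuinely rare (with $n$ huge the process tends to halt in one step because $|A|$ hits $k^{1/6}$ first, so reaching $|A\cup P|\geq 4k^{1/6}$ still forces many rejections).

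The fix is small. You lose the factor of $n$ by bounding $S_{T-1}\geq S_T-(n-1)$ for the final, possibly interrupted, step. Instead observe that $|A|$ and $|A\cup P|$ each increase by exactly one per accepted offspring and the stopping conditions are checked after every addition, so on the event in question $|A|\leq\lceil k^{1/6}\rceil$ even at termination, whence $|P|\geq\lceil 4k^{1/6}\rceil-\lceil k^{1/6}\rceil\geq 3k^{1/6}-1$ and at least $\lceil 3k^{1/6}\rceil-2$ processing steps are fully completed. Taking $s_0=\lceil 3k^{1/6}\rceil-2$ gives $t=s_0(n-1)-4k^{1/6}\geq(3n-7)k^{1/6}-2(n-1)$, which is at least $\tfrac{1}{3}nk^{1/6}$ for all $n\geq 3$ once $k$ is large, and then $eNq/t=O(k^{-5/6})$ uniformly in $n$; the rest of your argument (the coupling across the stopping time and the bound $(eNq/t)^t$) goes through unchanged and recovers $k^{-cnk^{1/6}}$. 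This corrected count is exactly the one the paper extracts, namely at least $(3k^{1/6}-2)(n-1)-4k^{1/6}$ forced rejections, each of probability $O(k^{-5/6})$.
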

    \begin{proof}
        Suppose that the exploration process has stopped with $\abs{A\cup P} \geq 4k^{1/6}$. As we stop immediately when this occurs, we have $\abs{A\cup P} = \lceil{4k^{1/6}}\rceil$.
        By Lemma~\ref{lem:trees}, the number of realisations of this exploration process is at most $(enk)^{\lceil{4k^{1/6}}\rceil-1}$.

        Since $|A| \leq \lceil k^{1/6}\rceil$, there are at least $3k^{1/6} - 1$ vertices in $P$ and all but (possibly) one of them has been \emph{fully} processed (since the early termination of the exploration interrupts the processing of at most one vertex).
        As we have terminated the exploration as soon as $\abs{A\cup P} = \lceil{4k^{1/6}}\rceil$, these fully explored vertices have certainly had at most $4k^{1/6}$ offspring between them. 
        That means the number of lines for which we have revealed the winner and it was in a line with a vertex in $(A \cup P) \setminus \{z\}$ is at least
        \[(3k^{1/6} - 2)(n-1) - 4k^{1/6} \geq 2(n-1)k^{1/6} - 3k^{1/6}\]
        for $k \geq 2^{12}$.

        Each time we reveal a winner, the chance that it is any particular vertex on the line is $1/k$ and the chance that it shares a line with a vertex in $(A \cup P) \setminus \{z\}$ is at most $|A \cup P|/k$.
        Each line's revelation is independent so the probability that the exploration process terminates with a particular realisation is at most
        \begin{align*}
            \left(\frac{1}{k}\right)^{\lceil{4k^{1/6}}\rceil - 2} \left(\frac{4k^{1/6}}{k}\right)^{(2n-5)k^{1/6}}.
        \end{align*}
        Hence, taking the union bound over the realisations, the probability that $| A \cup P|$ reaches $\lceil4k^{1/6}\rceil$ is at most
        \[(enk)^{\lceil{4k^{1/6}}\rceil - 1} \left(\frac{1}{k}\right)^{\lceil{4k^{1/6}}\rceil - 2} \left(4k^{-5/6}\right)^{(2n-5)k^{1/6}} \leq k \left( C k^{-5/6}\right)^{(2n-5)k^{1/6}}\]
        for any constant $C > 0$ such that $4 e^4 n^4 \leq C^{2n-5}$ for all $n \geq 3$.
    \end{proof}

    We now bound the probability that the process ends because there are not enough active vertices.

    \begin{claim}
        There is a constant $c > 0$ such that the probability that the process terminates with $|A| = 0$ is at most $(ck)^{-5(n-1)/3}$ for all $n \geq 3$ and large enough $k$.
    \end{claim}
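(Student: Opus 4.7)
The plan is to show that the exploration process stochastically dominates a supercritical Galton--Watson branching process, and then apply \Cref{lem:GW_bin} to bound its extinction probability.

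First, I analyse the offspring distribution at a single step. When we process a vertex $z$, we reveal winners of the $n-1$ unrevealed lines through $z$, with each winner independently uniform on its line. A winner $w$ fails to be added to $A$ only if $w = z$, or if $w$ shares one of its $n$ lines with some other vertex in $(A \cup P) \setminus \{z\}$. The key combinatorial observation is that, by induction on the exploration, no vertex of $(A \cup P) \setminus \{z\}$ lies on a newly revealed line through $z$: the only pair in $A \cup P$ sharing a line is $\{x,y\}$, whose common line is revealed at the outset. Since each $z' \notin L$ forbids at most one position on the line $L$ (because $z'$ and a vertex on $L$ differ in exactly one coordinate for a unique choice of that vertex), the total number of bad positions on $L$ is at most $|A \cup P| \leq 4 k^{1/6}$ by the termination rule. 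Hence each winner becomes an offspring with probability at least $1 - 4 k^{-5/6}$.

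Consequently, the number of offspring produced at each processing step stochastically dominates $\Bin(n-1, 1 - 4 k^{-5/6})$, so the entire exploration dominates a Galton--Watson process with two roots and this offspring distribution. The event that the process terminates with $|A| = 0$ is contained in the extinction event of this dominating process. Applying \Cref{lem:GW_bin} with $x = 4 k^{-5/6}$ (which is less than $1/4$ for large $k$) bounds the single-root extinction probability by $2 (4 k^{-5/6})^{n-1}$, and independence of the two roots yields the bound
\[
\left(2 (4 k^{-5/6})^{n-1}\right)^2 = 4 \cdot 16^{n-1} \cdot k^{-5(n-1)/3}.
\]
For $n \geq 3$, the factor $4 \cdot 16^{n-1}$ grows only exponentially in $n-1$ while the exponent on $k$ is $5(n-1)/3$, so this can be absorbed into the desired form $(ck)^{-5(n-1)/3}$ for a suitable absolute constant $c > 0$ (e.g.\ $c = 1/8$).

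The main obstacle is verifying the invariant that no two distinct vertices of $A \cup P$ share a line apart from the initial pair $\{x,y\}$. This requires a short induction on the processing order: if some $z'' \in (A \cup P) \setminus \{z\}$ lay on a line $L$ being revealed at the current step, then either the admission of $z$ (if $z''$ was added earlier) or the admission of $z''$ (if added later than $z$) would have either violated the sharing condition in the admission rule or forced $L$ to have been revealed before the current step, contradicting its status as newly revealed.
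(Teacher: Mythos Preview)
Your approach is identical to the paper's: dominate the exploration from the two initial vertices $x,y$ by two independent Galton--Watson trees with offspring distribution $\Bin(n-1,\,1-4k^{-5/6})$ and apply \Cref{lem:GW_bin} to obtain the bound $4(4k^{-5/6})^{2(n-1)}$, which rewrites as $(ck)^{-5(n-1)/3}$ for a suitable $c$.

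One correction to your exposition: the invariant you state---that no two distinct vertices of $A\cup P$ share a line apart from $\{x,y\}$---is false, since every offspring shares a line with its parent. What is actually true, and what your induction in the final paragraph in fact establishes, is the weaker statement that no vertex of $(A\cup P)\setminus\{z\}$ lies on an \emph{unrevealed} line through $z$. (The parent of $z$ does share a line with $z$, but that line's winner is $z$ itself, so it is not among the $n-1$ lines being examined; likewise any sibling already admitted during the current step lies on a different, already-revealed line through $z$.) With the invariant restated in this form, your counting of at most $|A\cup P|\le 4k^{1/6}$ forbidden positions per line and the resulting stochastic domination go through exactly as written.
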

    \begin{proof}
        At every point in the process (except possibly the point at which it terminates) we have $\abs{A \cup P} < 4k^{1/6}$.
        It follows that the probability that the process terminates due to a lack of active vertices is less than the probability that two independent Galton--Watson processes with offspring distributions $\Bin(n-1,1-4k^{-5/6})$ go extinct. By Lemma~\ref{lem:GW_bin}, if $k$ is sufficiently large then this probability is at most
        \[
             4 (4 k^{-5/6})^{2(n-1)}. \qedhere
        \]
    \end{proof}

    Condition on the event that the exploration process has terminated with $|A| = \lceil k^{1/6}\rceil$, and let us first assume that $n \geq 4$.
    We now have many vertices which can be reached from $x$ and from which we have not yet explored. We will explore from these vertices to reach many different slices as follows.
     Initiate $B=\emptyset$. For each $\mathbf{a} \in A$ in turn, reveal for each $i\in \{3,\dots,n\}$ the winner of the line through $a$ in dimension $i$ (provided this has not already been revealed).
     If this winner shares no \emph{slice} with any vertex in $A\cup B\cup P$, then add it to $B$. Note that for each $\mathbf{a}\in A$, we reveal the winner of at least $n-3$ lines through $\mathbf{a}$.

    \begin{claim}
        At each stage, the probability that the winner of the line $l$ through $\mathbf{a}$ in dimension $i \geq 3$ shares no slice with any vertex in $A\cup B\cup P$ is at least $1-(\lceil{4k^{1/6}}\rceil+\lceil{k^{1/6}}\rceil)/k$.
    \end{claim}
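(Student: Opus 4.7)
My plan is to observe that, conditional on everything revealed before this step, the winner of $l$ is uniformly distributed over the $k$ vertices on $l$. This is ensured by the clause ``provided this has not already been revealed'', which guarantees that the winner of $l$ was never exposed in any previous step of either exploration; hence its distribution has not been conditioned on and remains uniform on $l$. Since $l$ varies only in dimension $i\in\{3,\dots,n\}$, which is one of the coordinates defining a slice, each slice of $[k]^n$ meets $l$ in at most one vertex; equivalently, for any fixed vertex $\mathbf{v}$, there is at most one vertex of $l$ that shares a slice with $\mathbf{v}$.

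Applying a union bound over $\mathbf{v}\in A\cup B\cup P$ therefore gives
\[
\Prob\bigl(\text{winner of } l \text{ shares a slice with some } \mathbf{v}\in A\cup B\cup P\bigr) \leq \frac{N}{k},
\]
where $N$ is the number of distinct slices spanned by $A\cup B\cup P$.

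It remains to show that $N\leq \lceil 4k^{1/6}\rceil + \lceil k^{1/6}\rceil$. By the first-exploration termination rule, $|A \cup P| \leq \lceil 4 k^{1/6}\rceil$, so $A\cup P$ occupies at most that many distinct slices. By the construction of $B$ in the second exploration, a vertex is added to $B$ only if its slice is not already occupied by any element of $A\cup B\cup P$ at that moment; thus the elements of $B$ occupy pairwise distinct slices, all of which are different from the slices occupied by $A\cup P$. Combined with the fact that $|B|\leq \lceil k^{1/6}\rceil$ throughout the second exploration (which follows directly from the structure of that exploration), this contributes at most $\lceil k^{1/6}\rceil$ further slices, yielding $N\leq \lceil 4k^{1/6}\rceil + \lceil k^{1/6}\rceil$ and the claimed bound.

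The only delicate point is the slice-counting for $B$: one must verify that each addition to $B$ introduces a genuinely new slice rather than repeating an existing one. This is immediate from the acceptance rule in the definition of the second exploration, so the argument is essentially a careful bookkeeping exercise on top of the uniform-distribution observation.
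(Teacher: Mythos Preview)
Your argument contains a genuine gap at the step where you assert $|B| \leq \lceil k^{1/6} \rceil$. This bound is false: in the second exploration, each of the $\lceil k^{1/6} \rceil$ vertices $\mathbf{a}' \in A$ may contribute up to $n-2$ winners to $B$ (one per dimension $j \in \{3,\dots,n\}$), so $|B|$ can be as large as $(n-2)\lceil k^{1/6} \rceil$. Since the acceptance rule for $B$ only forbids slice-collisions and does not cap the raw count, nothing in ``the structure of that exploration'' yields the bound you claim. Consequently your global slice count $N$ is not controlled by $\lceil 4k^{1/6}\rceil + \lceil k^{1/6}\rceil$.

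The paper's proof does not bound $|B|$ at all. Instead, it bounds the number of slices \emph{in the specific family} $\cS$ of slices intersecting the current line $l$ that contain a vertex of $B$. This is where the $\lceil k^{1/6} \rceil$ genuinely comes from: any $b\in B$ contributed by some $\mathbf{a}'\in A$ lies in a slice that differs from the slice of $\mathbf{a}'$ in exactly one coordinate $j\geq 3$, and a short case analysis (splitting on whether $j=i$ or $j\neq i$, and using that $b$ cannot share the slice of $\mathbf{a}'$ or of $\mathbf{a}$) shows that at most one such slice can lie in $\cS$. Hence $B$ meets at most $|A| = \lceil k^{1/6} \rceil$ slices of $\cS$, even though $|B|$ itself may be much larger. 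Your uniform-distribution observation and union-bound framework are fine; what is missing is this local geometric argument restricting attention to $\cS$ rather than to all slices.
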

    \begin{proof}
        Let $\cS$ denote the set of slices intersecting $l$. First, since $\abs{A\cup P}\leq \lceil{4k^{1/6}}\rceil$, at most $\lceil{4k^{1/6}}\rceil$ slices in $\cS$ contain a vertex in $A\cup P$. It remains to show that at most $\lceil{k^{1/6}}\rceil$ slices in $\cS$ contain a vertex in $B$. Suppose this is not the case. Then there exists $\mathbf{a}'\in A$ with the property that at least two slices in $\cS$ contain vertices which are winners of a line through $\mathbf{a}'$ in dimensions greater than 2 (and are not in $A$). 
        It is not difficult to see that this is impossible.
    \end{proof}

    At the end of the process, the vertices in $B$ are all in different slices, and we have revealed nothing about the edges within these slices. It thus follows from Lemma~\ref{lem:good_cycle} that each vertex of $B$ is in the basin of a good cycle in its slice independently with probability $c/\log(k)$ for some small $c$, provided $k$ is large enough. Thus, the probability that no vertex in $A$ can reach a good cycle in $\Lnk$ is at most
    \[
    \left(1-\frac{c}{\log(k))}\left(1-\frac{5k^{1/6} + 2}{k}\right)\right)^{k^{1/6}(n-3)} \leq e^{-\frac{ck^{1/6}(n-3)}{2\log(k)}}
    \]
    for large enough $k$.

    Clearly, if $n=3$, this argument is not strong enough. The problematic case is when the exploration has revealed that most of the active vertices are the winner in their line in the third dimension.
    To avoid this, we will explore once from each vertex in either the first or second dimension before exploring in the third dimension.
    That is, from each $\mathbf{a}\in A$ reveal the winner of the line through $\mathbf{a}$ in the first dimension (or the second dimension if the winner of the line in the first dimension is already known), and add the winner to a new set $A'$ if it does not share a line with vertices in $A\cup P$ or existing vertices in $A'$. 
    We then emulate the process above, exploring in the third dimension from $A'$ and adding the vertices reached to $B$ if they share no slice with any vertex in $A\cup A'\cup B\cup P$. Adapting the above, we see that the probability that we reach no vertex in a good cycle in this way is at most
    \[
    \left(1 - \frac{c}{\log(k)} \left(1 - \frac{5 k^{1/6} + 2}{k} \right)\left(1 - \frac{6 k^{1/6} + 3}{k} \right)\right)^{k^{1/6}} \leq e^{-\frac{c k^{1/6}}{2 \log(k)}}
    \]
    for some small constant $c$ and large enough $k$.

    We've shown that, for a fixed pair $(x,y)$, the probability that $y$ wins its common line with $x$ and yet cannot reach a good cycle is at most
    \[k^{-c n k^{1/6}} + (ck)^{-5(n-1)/3} + e^{-\frac{ck^{1/6} n}{2\log(k)}}\]
    for some small constant $c > 0$. The result now follows by taking the union bound over the $n k^n$ possible pairs $(x,y)$, noting that $5(n-1)/3 > n$ for $n > 5/2$.
\end{proof}

\section{All good sinks can be reached from some good cycle}\label{sec:goodsinks}

The main aim of this section is to prove \Cref{lem:cycle_to_sink}. We do this by exploring backwards from the sink and showing that the probability that this exploration process grows to $k^{\eps}$ lines but not $k^{2/3}$ lines is negligible. 
Once we have lots of (active) lines in the backwards exploration process, we explore backwards in the third dimension to find around $k^{2/3}$ lines in dimension 3 which can reach the sink.
Heuristically, we expect each of these lines to intersect a good cycle in a given slice with probability around $1/k^{3/2}$, and so we expect these lines to together intersect around $k^{1/6}$ good cycles and the probability that there is no intersection is suitably small.
Although, the total lines ever explored in our exploration is relatively small ($k^{3/4}$), we do reveal some information about the edges within all of the slices, and we will have to show that this does not meaningfully affect the chance that we encounter a good cycle in the slice in future.
We do this by defining ``pathological'' cycles and showing that they are rare.

\subsection{Pathological cycles are rare}

The vertex set of any cycle contained within a slice of $\Lnk$ can be partitioned into two subsets of equal size each consisting of `every other' vertex in the cycle, so that no two vertices in the same half of the partition share a line with one another. In the absence of a clean canonical way to distinguish the two halves of this partition, for each cycle we will arbitrarily refer to one half as the \emph{odd} vertices and the other as the \emph{even} vertices.

\begin{definition}
    A cycle in a slice of $\Lnk$ is \emph{pathological} if 
    \begin{itemize}[nolistsep]
        \item it has length at least $\sqrt{k}$, and
        \item at least half its even vertices and at least half its odd vertices have first or second coordinate taken from $[k^{3/4}]$.
    \end{itemize}
\end{definition}

\begin{lemma}\label{lem:path_cycles}
    The probability that a given slice contains a pathological cycle in $\Lnk$ is at most $k^{-\Omega(\sqrt{k})}$.
\end{lemma}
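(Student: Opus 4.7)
My plan is to bound the expected number of pathological cycles of each length $2m \geq \sqrt{k}$ by a first-moment argument and then sum over $m$, concluding with Markov's inequality.

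A cycle of length $2m$ in a slice alternates horizontal and vertical moves in the $k \times k$ grid, so is specified (up to rotational symmetry, which I cheerfully ignore since I only want an upper bound) by sequences $(x_1,\dots,x_m)$ and $(y_1,\dots,y_m)$ of distinct elements of $[k]$, visiting $(x_1,y_1), (x_2,y_1), (x_2,y_2), (x_3,y_2), \ldots, (x_m,y_m), (x_1,y_m)$ in turn. The cycle appears in $\Lnk$ exactly when the $2m$ lines it uses (the horizontal lines $y = y_i$ and the vertical lines $x = x_i$) each have a prescribed winner, which happens with probability $(1/k)^{2m}$ by the independence of line winners.

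The key combinatorial reduction is the following: setting $A = |\{i : x_i \in [k^{3/4}]\}|$ and $B = |\{i : y_i \in [k^{3/4}]\}|$, pathology forces $A + B \geq m/2$. Indeed, the constraint on the odd vertices requires that at least $m/2$ indices $i$ satisfy $x_i \in [k^{3/4}]$ or $y_i \in [k^{3/4}]$, but the set of such indices is contained in $\{i : x_i \in [k^{3/4}]\} \cup \{i : y_i \in [k^{3/4}]\}$, which has size at most $A + B$. (The constraint on the even vertices gives further information, but will not be needed.) With this, I would overcount the number of sequences in $[k]^{2m}$ (dropping distinctness) with $A + B \geq m/2$ by
\[
\sum_{s \geq m/2} \binom{2m}{s} (k^{3/4})^s (k - k^{3/4})^{2m - s} \leq k^{2m} \cdot 2^{2m} \cdot k^{-m/8},
\]
using $(k^{3/4}/k)^s \leq k^{-m/8}$ for $s \geq m/2$ and the trivial bound $\sum_s \binom{2m}{s} \leq 2^{2m}$. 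Multiplying by $(1/k)^{2m}$, the expected number of pathological cycles of length exactly $2m$ is at most $(4k^{-1/8})^m$.

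Summing the resulting geometric series over $m \geq \lceil \sqrt{k}/2 \rceil$ yields, for $k$ sufficiently large that $4k^{-1/8} < 1/2$, a total of at most $2(4k^{-1/8})^{\sqrt{k}/2} = k^{-\Omega(\sqrt{k})}$, and Markov's inequality concludes. The main obstacle is really the combinatorial reduction to $A + B \geq m/2$: it decouples the pathology condition (which a priori depends delicately on the cyclic structure of the cycle) from the cycle itself, and this is what permits a clean product-form overcount without case analysis on which of the coordinates lie in $[k^{3/4}]$. Once this observation is in hand, the rest is a routine first-moment estimate.
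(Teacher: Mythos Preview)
Your proof is correct and takes essentially the same approach as the paper: both are first-moment arguments that exploit the fact that pathology forces a constant fraction of the cycle's coordinates to lie in the small set $[k^{3/4}]$, each occurrence of which costs a factor of $k^{-1/4}$. The only cosmetic difference is that the paper phrases the count via the exploration process from a fixed starting vertex (bounding the probability that at least $\ell/2-1$ of the uniformly random pairs $(x_i,y_i)$ have a coordinate in $[k^{3/4}]$, then union bounding over start vertex and length), while you directly enumerate all labelled cycles and extract the cleaner reduction $A+B\geq m/2$ on the $2m$ individual coordinates; the resulting bounds are the same up to constants in the exponent.
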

\begin{proof}
    Identify the vertex set of the slice with $[k]^2$ in the natural way. Since each line can contain at most one even vertex of any given cycle, every pathological cycle has length at most $8k^{3/4}$. Fix $(x_1,y_1)\in[k]^2$ and an integer $\ell$ with $\sqrt{k}\leq 2\ell\leq 8k^{3/4}$. We will upper bound the probability that starting at $(x_1,y_1)$ and repeatedly exploring horizontally then vertically yields a cycle of length $2\ell$ containing $(x_1,y_1)$ in which at least half the vertices of the same parity as $(x_1,y_1)$ have a coordinate in $[k^{3/4}]$.

    Let us assume that $(x_1, y_1)$ is the winner of its line in the $y$-direction as otherwise the exploration cannot possibly yield the desired cycle.
    We can sample the exploration process as follows. Generate a sequence of random points $(x_2, y_2), (x_3, y_3), \dots$ by picking points independently and uniformly at random from all $k^2$ options. The exploration process then follows the sequence 
    \[  (x_1, y_1),
        (x_2, y_1),
        (x_2, y_2),
        (x_3, y_2), \dots \]
    until an $x$- or $y$-coordinate is repeated for the first time, at which point the exploration process stops. We remark that if the exploration finds a cycle of length $2\ell$ containing $(x_1, y_1)$, then the first repeat must be that $x_{\ell + 1} =  x_1$, but we will not need to use this.
    Instead, note that if the exploration yields the desired cycle, then at least $\ell/2 - 1$ of the pairs $(x_2, y_2), \dots, (x_\ell, y_\ell)$ must have at least one coordinate in $[k^{3/4}]$. The probability of this event is at most 
    $2^{\ell-1}(2k^{-1/4})^{\ell/2-1}\leq k^{-\ell/9}\leq k^{-\sqrt{k}/18}$ for large enough $k$. Taking a union bound over the different options for $(x_1,y_1)$ and $\ell$ completes the proof. 

\end{proof}

We now use pathological cycles to prove the following lemma, which shows that revealing that a small subset of the vertices are sources in a slice, does not drastically reduce the probability that there is a good cycle intersecting a given set of vertices.

\begin{lemma}\label{lem:slicewise}
    Let $A\subseteq [k^{3/4}]^2$ have size at most $k^{3/4}$. Let $U\subseteq (k^{3/4},k]^2$ with $\abs{U}\leq  k^{2/3}$. Condition $\vv{L}(2,k)$ on every vertex in $A$ being a source. Then the probability that some vertex in $U$ is in a good cycle in $\vv{L}(2,k)$ is at least $\Omega(|U|/k^{3/2})$.
\end{lemma}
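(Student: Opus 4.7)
The plan is a second moment argument. Let $N_U$ denote the (random) number of vertices of $U$ that lie on some good cycle of the slice, and write $F$ for the conditioning event that every vertex of $A$ is a source. I aim to show $\expec[N_U\mid F] = \Omega(|U|/k^{3/2})$ and $\expec[N_U^2\mid F] = O(|U|/k^{3/2})$, so that the Paley--Zygmund inequality yields $\Prob(N_U \geq 1 \mid F) = \Omega(|U|/k^{3/2})$, as required.

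The starting observation is that $\Prob(F) = \prod_L (k - |L\cap A|)/k \geq 1 - 2|A|/k \geq 1 - 2k^{-1/4}$, so $\Prob(F) \geq 1/2$ for $k$ large. This has two consequences. First, $\expec[N_U^2\mid F] \leq 2\expec[N_U^2]$, which reduces the second-moment bound to the unconditional setting. Second, combining with \Cref{lem:good_cycle} and \Cref{lem:path_cycles} (whose guarantees transfer to the conditional measure with the same factor-of-two loss), a non-pathological good cycle exists in the conditioned slice with probability at least some absolute constant $c_1>0$. By the definition of non-pathological, such a cycle has at least one quarter of its vertices in $(k^{3/4},k]^2$, hence at least $\sqrt{k}/4$. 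The permutations of $[k]$ that fix $[k^{3/4}]$ pointwise, applied independently to each coordinate, induce symmetries of $\vv{L}(2,k)$ that preserve both $A$ and the event $F$, and act transitively on $(k^{3/4},k]^2$. Hence every vertex of $(k^{3/4},k]^2$ has the same probability $p_F$ of lying on a good cycle under $F$; summing over $(k^{3/4},k]^2$ gives $p_F = \Omega(\sqrt{k}/k^2) = \Omega(1/k^{3/2})$, and therefore $\expec[N_U\mid F] = |U|p_F = \Omega(|U|/k^{3/2})$.

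For the second moment, expand $\expec[N_U^2] = \expec[N_U] + \sum_{v\neq w \in U}\Prob(v,w \text{ both lie on good cycles})$; the first term is $O(|U|/k^{3/2})$. Split the pair sum by whether $v,w$ lie on the same long cycle or on two different long cycles. A direct count of alternating cycles of length $2\ell$ through two fixed vertices shows that the expected number of such cycles is $O(e^{-c\ell^2/k}/k^3)$ when $v,w$ share a coordinate and $O(\ell e^{-c\ell^2/k}/k^4)$ otherwise; summing over $\ell \geq \sqrt{k}/2$ gives $O(1/k^{5/2})$ and $O(1/k^3)$ per pair respectively, so a total of $O(|U|^2/k^{5/2})$. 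The different-cycle contribution is bounded by the square of the expected total $U$-incidences with long cycles, which is $O(|U|^2/k^3)$, using independence of cycles supported on disjoint line sets. Summing gives $\expec[N_U^2] = O(|U|/k^{3/2} + |U|^2/k^{5/2}) = O(|U|/k^{3/2})$ since $|U|\leq k^{2/3}$, and the conclusion follows. The main obstacle is the two-point estimate and verifying that the linear term dominates for $|U|\leq k^{2/3}$; most of the work reduces to unconditional cycle counting via $\Prob(F)\geq 1/2$, but the combinatorial estimates for pairs of vertices on the same long cycle require care.
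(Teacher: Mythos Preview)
Your approach is correct and genuinely different from the paper's. Both arguments use the same two ingredients --- the good-cycle existence lemma and the pathological-cycles lemma --- to guarantee that, with constant conditional probability, some good cycle has $\Omega(\sqrt{k})$ vertices in $(k^{3/4},k]^2$. From there the two proofs diverge. The paper conditions on the event $E$ that $A$ consists of sources, a good cycle exists, and no pathological cycle exists, then picks a uniformly random good cycle and $t=\lceil\sqrt{k}/4\rceil$ of its even vertices in $(k^{3/4},k]^2$; the key step is that, under $E$, this random $t$-set is \emph{exactly} uniform over $t$-subsets of $(k^{3/4},k]^2$ with no two points on a line (this is the same coordinate-permutation symmetry you invoke, but applied to the whole $t$-set rather than to a single vertex). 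The probability that this set misses $U$ is then computed directly. Your Paley--Zygmund route instead uses the symmetry only to get the single-vertex lower bound $p_F=\Omega(k^{-3/2})$, and controls $\expec[N_U^2]$ by replacing ``good'' with ``long'' and doing explicit two-vertex cycle counts; the crucial observation that distinct cycles in $\vv{L}(2,k)$ are necessarily supported on disjoint line sets (a shared line forces a shared vertex, whence the cycles coincide) is what makes the different-cycles term factor. The constraint $|U|\le k^{2/3}$ is exactly what makes the diagonal term $O(|U|/k^{3/2})$ dominate the off-diagonal $O(|U|^2/k^{5/2})$. Your version trades the paper's exchangeability observation for routine moment computations; it is slightly more calculational but avoids having to identify the exact conditional law of the cycle vertices.
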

\begin{proof}
    Let $F$ be the event that no vertex in $U$ is in a good cycle in $\vv{L}(2,k)$, and let $D$ be the event that every vertex in $A$ is a source, so we are interested in an upper bound for the probability of $F$ given $D$.
    Note that it is straightforward to see that the event $D$ occurs with high probability. Indeed, the event $D$ occurs if and only if, for every line $l$, the winner of $l$ is not in $A$. Hence, we find
    \[\mathbb{P}(D) = \prod_{l} \left( \frac{k - |A \cap l|}{k} \right) \leq \prod_{l} e^{-|A \cap l|/{k}} = e^{-2k^{3/4}/k}. \]
    The first step in the proof is to show that we can consider the case where there is a good cycle and no pathological cycles.
    To this end, let $E$ be the event that every vertex in $A$ is a source and $\vv{L}(2,k)$ contains a good cycle but no pathological cycles.

    By Lemma~\ref{lem:path_cycles} the probability that $\vv{L}(2,k)$ contains a pathological cycle is at most $k^{-\Omega(\sqrt{k})}$ and by Lemma~\ref{lem:good_cycle} the probability that it contains a good cycle is at least $1/100$.     
    Since $E \subseteq D$, the probability $\Prob(E \mid D)$ is at least $\Prob(E)$, and this is at least $1/200$ for sufficiently large $k$.
    Hence,
    \[\Prob(F \mid D) \leq \Prob(F \mid E) \Prob(E \mid D) + \Prob(E^c \mid D) \leq \frac{1}{200}\Prob(F \mid E) + \frac{199}{200}.\]

    If $E$ occurs, then at least half the even vertices in every good cycle in $\vv{L}(2,k)$ are in $(k^{3/4},k]^2$. Select a good cycle in $\vv{L}(2,k)$ uniformly at random, and then let $(x_1,y_1),\dots,(x_t,y_t)\in (k^{3/4},k]^2$ be $t\coloneqq \lceil\sqrt{k}/4\rceil$ uniformly random distinct even vertices on the cycle from $(k^{3/4}, k]^2$. Observe that, even having conditioned on $E$, these vertices have the same distribution as uniformly drawing $t$ points (without repetition) from $(k^{3/4},k]^2$ such that no two share a line. Let $Z=\{(x_i,y_i):i\in[t]\}$, then it is sufficient to show that $\Prob(U\cap Z=\emptyset)\leq 1-\Omega(\abs{U}/k^{3/2})$.

    \begin{claim}
    For $k$ large enough, we have that
        \[
        \Prob\big((x_i,y_i)\in U \mid  (x_j,y_j)\not\in U ~ \forall j<i\big)
        \geq \frac{\abs{U}}{k^{2}} \left(1-  \frac{2}{k^{5/6}}\right)
        \]
        for all $i \in [t]$.
    \end{claim}
    \begin{proof}
        We have
        \[
        \begin{split}
            \Prob\big((x_i,y_i)\in U \mid (x_j,y_j)\not\in U ~ \forall j<i\big)
            & \geq \Prob\big((x_i,y_i)\in U  \wedge  (x_j,y_j)\not\in U ~ \forall j<i\big)
        \end{split}
        \]
    The probability that $(x_i,y_i)\in U$ is simply $|U|/(k-\lfloor k^{3/4} \rfloor)^2 \geq \abs{U}/k^2$. Given $(x_i,y_i)$ and $(x_1,y_1),\dots,(x_{j-1},y_{j-1})$ for some $j<i$, $(x_j,y_j)$ is uniformly distributed amongst the points in $[k^{3/4},k)^2$ which do not share a line with any of these given points. Hence, the probability that $(x_j,y_j)\not\in U$ given these points is at least
    \[
    \frac{(k-\lfloor k^{3/4}\rfloor-j)^2-\abs{U}}{(k-\lfloor k^{3/4}\rfloor-j)^2}
    = 1-\frac{\abs{U}}{(k-\lfloor k^{3/4}\rfloor-j)^2}
    \geq 1-\frac{k^{2/3}}{(k-2k^{3/4})^2}.
    \]
    Hence, applying Bernoulli's inequality, we obtain (for large enough $k$)
    \[
    \begin{split}
        \Prob\big((x_i,y_i)\in U \mid  (x_j,y_j)\not\in U ~ \forall j<i\big)
        & \geq\frac{\abs{U}}{k^2}\left( 1-\frac{k^{2/3}}{(k-2k^{3/4})^2}\right)^i \\
        & \geq \frac{\abs{U}}{k^2}\left(1 - \frac{2i}{k^{4/3}}\right) \\
        & \geq \frac{\abs{U}}{k^{2}} \left(1-  \frac{2}{k^{5/6}}\right). \qedhere
    \end{split}
    \]
    \end{proof}
    It follows that, for large $k$,
    \[
    \begin{split}
    \Prob(U\cap Z=\emptyset)
    & = \prod_{i=1}^{t}{\Big(1-\Prob\big((x_i,y_i)\in U \mid (x_j,y_j)\not\in U ~ \forall j<i\big)\Big)} \\
    & \leq \left[1-\frac{\abs{U}}{k^2}  \left(1-  \frac{2}{k^{5/6}}\right) \right]^{\sqrt{k}/4} \\
    & \leq \exp\left(-\frac{\abs{U}}{5k^{3/2}} \right) \\
    & \leq 1-\frac{\abs{U}}{10k^{3/2}},
    \end{split}
    \]
    as required.
\end{proof}

\subsection{Proof of Lemma \ref{lem:cycle_to_sink}}

Consider the following exploration process on $\Lnk$. At each stage in the process there will be a set $A$ of \emph{active} lines and a set $P$ of \emph{processed} lines. Initialise $P = \emptyset$ and initialise $A$ by picking a vertex $x \in [k]^n$, revealing in turn which (if any) of its lines it is the winner in, and adding any such lines to $A$.
At each subsequent step, arbitrarily pick an active line $l\in A$ for processing. Add $l$ to $P$ and delete it from $A$. Now for each line $l'$ intersecting $l$ in which we have not already revealed the winner, reveal whether or not the winner of $l'$ is in $l$. If it is, then add $l'$ to $A$. The process terminates as soon as any of the following are true:
\begin{enumerate}[label = (\roman*)]
    \item there are no active lines at the start of a step, or
    \item $\abs{A}\geq k^{2/3}$, or
    \item $\abs{A \cup P} \geq k^{3/4}$.
\end{enumerate}
Note that termination due to conditions (ii) and (iii) could occur in the middle of a step, including during the initialisation step if $n$ is large enough.

We start by recording some simple facts about this exploration.
\begin{enumerate}[label=(\alph*)]
    \item Every vertex on a line in $A\cup P$ can reach $x$ along a directed path in $\Lnk$. Moreover, if the process terminates due to condition (i), then the lines in $A\cup P$ contain exactly the vertices that can reach $x$ along directed paths in $\Lnk$.
    \item During the exploration, if the winner of a given line has been revealed, then that line is in $A\cup P$.\footnote{Note that even if every vertex but one in a line has been revealed \emph{not} to be the winner of that line, we do not consider the winner of that line to have been revealed. However, as soon as the winner of a line is revealed, we consider it to be revealed that the other vertices of that line are not the winner.}
    \item \label{d} It follows that when we begin processing a line $l$, the number of lines intersecting $l$ in which we have not already revealed the winner is at least $(n-1)k-\abs{A\cup P}$.
    \item If it has been revealed that a given vertex is not the winner in a given line, then that vertex is in a line in $A\cup P$ (except in the trivial case where no lines ever are added to $A$).
    \item It follows that when revealing the winner of a line, the probability of success is always at least $1/k$ and at most $1/(k-\abs{A\cup P})$.
\end{enumerate}

We first show that the probability the process terminates because of condition (iii) is sufficiently small.

\begin{lemma}\label{lem:bushy_trees}
    The probability that the process terminates with $\abs{A\cup P}\geq k^{3/4}$ is at most $e^{-\Omega(nk^{3/4})}$.
\end{lemma}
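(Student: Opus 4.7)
The plan is a first-moment / Chernoff argument on the total number of Bernoulli-type trials performed by the exploration. The key observation is that while each success grows $|A\cup P|$ by one, the exploration must perform many trials per success, so accumulating $k^{3/4}$ successes becomes unusual when $n\geq 3$. (We use $n\geq 3$ in the final Chernoff step; this is the only regime in which Lemma~\ref{lem:bushy_trees} will be applied.)

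Assume condition (iii) triggers. Since $k^{3/4}>k^{2/3}$, condition (iii) cannot trigger during the initialisation phase: while $P=\emptyset$ we have $|A\cup P|=|A|$, so before this quantity can reach $k^{3/4}$ it must pass through $k^{2/3}$, and then condition (ii) triggers first. So (iii) must trigger during processing. At termination, $|A|<k^{2/3}$ (else (ii) triggered earlier), so $|P|\geq k^{3/4}-k^{2/3}$; in particular at least $k^{3/4}-k^{2/3}-1$ processing steps have been fully completed. By Fact~(c), each completed step performs at least $(n-1)k-|A\cup P|\geq(n-1)k-k^{3/4}$ trials, so the total number of trials $T^{\ast}$ satisfies
\[
T^{\ast}\;\geq\;(k^{3/4}-k^{2/3}-1)\bigl((n-1)k-k^{3/4}\bigr)\;\geq\;M\;:=\;\tfrac{9}{10}(n-1)k^{7/4}
\]
for $k$ sufficiently large.

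Next, couple each trial to an independent $\mathrm{Bernoulli}(1/k)$ variable: for trial $i$, draw $U_i\sim\mathrm{Unif}[0,1]$ independently, declare trial $i$ a success iff $U_i\leq p_i$, where $p_i\geq 1/k$ is its conditional success probability by Fact~(e), and set $\xi_i:=\mathbbm{1}[U_i\leq 1/k]$. Then the $\xi_i$ are (unconditionally) i.i.d.\ $\mathrm{Bernoulli}(1/k)$, and $\xi_i=1$ implies trial $i$ succeeds. On the event that (iii) triggers, $T^{\ast}\geq M$ and there are exactly $k^{3/4}$ successes, so by monotonicity of partial sums
\[
\sum_{i=1}^{M}\xi_i\;\leq\;\sum_{i=1}^{T^{\ast}}\xi_i\;\leq\;k^{3/4}.
\]

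Finally, $\sum_{i\leq M}\xi_i\sim\mathrm{Bin}(M,1/k)$ has mean $\mu=M/k\geq\tfrac{9}{10}(n-1)k^{3/4}\geq\tfrac{9}{5}k^{3/4}$ (using $n\geq 3$), so Lemma~\ref{lem:chernoff} applied with $\delta=1-k^{3/4}/\mu\geq 4/9$ gives the probability bound $\exp(-\delta^2\mu/2)\leq e^{-\Omega(nk^{3/4})}$, as required. The main subtlety lies in the coupling step: because each $p_i$ depends on the exploration's history, one must be careful to produce i.i.d.\ $\xi_i$, which the construction via independently pre-drawn $U_i$'s handles cleanly.
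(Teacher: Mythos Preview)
Your proof is correct and takes a genuinely different route from the paper's. The paper argues by enumerating tree realisations: it invokes Lemma~\ref{lem:trees} to bound the number of possible exploration trees on $\lceil k^{3/4}\rceil$ lines by $n(e(n-1)k)^{k^{3/4}}$, then upper-bounds the probability of any single realisation by $\bigl((1+o(1))/k\bigr)^{k^{3/4}}(1-1/k)^{(1+o(1))(n-1)k^{7/4}}$, and multiplies. Your argument sidesteps the tree-counting entirely: you observe that on the event in question the process must have performed $\Omega((n-1)k^{7/4})$ Bernoulli trials while accruing only $\lceil k^{3/4}\rceil$ successes, couple each trial from below to an i.i.d.\ $\mathrm{Bernoulli}(1/k)$ via pre-drawn uniforms (so the lower-bounding indicators are genuinely i.i.d.\ despite the history-dependent $p_i$), and apply the Chernoff bound of Lemma~\ref{lem:chernoff}. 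This is more elementary and arguably cleaner, since it avoids Lemma~\ref{lem:trees} and the bookkeeping of matching realisation counts against per-realisation probabilities; the paper's approach, on the other hand, is more mechanical and makes the competing exponential factors $(n-1)^{k^{3/4}}$ versus $e^{-(n-1)k^{3/4}}$ explicit. Two small imprecisions in your write-up (which do not affect correctness): at termination one has $|A|\leq\lceil k^{2/3}\rceil$ rather than $|A|<k^{2/3}$, since (ii) and (iii) can trigger on the same line addition; and the success count is $\lceil k^{3/4}\rceil$ rather than $k^{3/4}$, which the Chernoff step absorbs without difficulty.
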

\begin{proof}
    At the point of termination, $\abs{A\cup P}=\lceil k^{3/4}\rceil$. By Lemma~\ref{lem:trees}, the number of possible realisations of this exploration is at most $n(e(n-1)k)^{k^{3/4}}$, where the additional factor of $n$ comes from fixing an initial active line through $x$ as the root of the tree. 
    As the walk has not already terminated, this is the first point for which condition (ii) might be satisfied and we have $\abs{A}\leq \lceil{k^{2/3}}\rceil$, so at least $\lceil k^{3/4} \rceil - \lceil{k^{2/3}} \rceil-1=(1+o(1))k^{3/4}$ lines have been fully processed. By (d) above, during the processing of each of these lines, at least $(n-1)k-\lceil{k^{3/4}}\rceil=(1+o(1))(n-1)k$ lines are checked.
    There were at most $k^{3/4}$ successful line checks during the full exploration, so during the processing of the fully processed lines there were at least
    \[
    \Big((1+o(1))k^{3/4}\Big)\cdot\Big((1+o(1))(n-1)k\Big)-k^{3/4}=(1+o(1))(n-1)k^{7/4}
    \]
    unsuccessful line checks. 
    
    By (f), the probability of a given realisation of this form occurring is at most
    \[
    \left(\frac{1+o(1)}{k}\right)^{k^{3/4}} 
    \left(1-\frac{1}{k}\right)^{(1+o(1))(n-1)k^{7/4}}
    \leq \left(\frac{1+o(1)}{k}\right)^{k^{3/4}}e^{-(1+o(1))(n-1)k^{3/4}}
    \]
    Taking a union bound over the possible realisations shows that the probability that the process ends with $|A \cup P| \geq k^{3/4}$ is at most
    \[
    \begin{split}
    n(e(n-1)k)^{k^{3/4}} & \left(\frac{1+o(1)}{k}\right)^{k^{3/4}}e^{-(1+o(1))(n-1)k^{3/4}} \\
    & = ne^{-(1+o(1))(n-2)k^{3/4}}(n-1)^{k^{3/4}}(1+o(1))^{k^{3/4}} \\
    & = ne^{-(1+o(1))(n-2-\log(n-1))k^{3/4}} \\
    & \leq e^{-nk^{3/4}/10}
    \end{split}
    \]
    for all $n\geq 3$ if $k$ is sufficiently large.
\end{proof}

Next, we bound the probability that the root vertex $x$ of the exploration is a sink and the process grows to $\abs{A\cup P}\geq k^{\eps}$, yet the exploration terminates due to a lack of active lines. 

\begin{lemma}\label{lem:good_sink_explodes}
    The probability that $x$ is a sink and $\abs{A\cup P}$ grows to size at least $k^\eps$ yet the process terminates due to condition (i) is at most $e^{-\Omega(k^\eps)}/k^n$.
\end{lemma}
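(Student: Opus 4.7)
The plan is to condition on the event that $x$ is a sink (which has probability exactly $k^{-n}$, accounting for the required $/k^n$ factor) and prove that the conditional probability of natural termination with final size $\geq k^\eps$ is $e^{-\Omega(k^\eps)}$. I would do this by reformulating termination as a first-passage event for a random walk, and then bounding it by stochastic domination against a sum of i.i.d.~binomials.

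First, I would rephrase the event. Let $Z_s$ be the number of lines added to $A$ when processing the $s$-th line and $S_t = \sum_{s=1}^{t} Z_s$. Since each processing step removes one line from $A$ and adds $Z_s$ new ones, we have $|A|_t = n + S_t - t$, so the process terminates due to condition~(i) after processing exactly $t$ lines iff $S_t = t - n$ (and $S_s > s - n$ for all $s < t$). Crucially, $|A\cup P|$ at that moment equals $t$. Since natural termination precludes triggering~(iii), we always have $t \leq \lceil k^{3/4}\rceil$, so the event of interest reduces to
\[
\bigcup_{t = \lceil k^\eps \rceil}^{\lceil k^{3/4}\rceil}\!\!\bigl\{S_t = t - n\bigr\}.
\]

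Next, I would set up a lower-tail coupling. Given the history $\mathcal{F}_{s-1}$, properties (c) and (f) stated before \Cref{lem:bushy_trees} imply that $Z_s$ is a sum of independent Bernoullis, one per unrevealed line intersecting the processed line; there are at least $(n-1)k - |A\cup P|_{s-1} \geq (n-1)k - \lceil k^{3/4}\rceil$ of them, and each has success probability at least $1/k$. Thus $Z_s \mid \mathcal{F}_{s-1}$ stochastically dominates $\Bin((n-1)k - \lceil k^{3/4}\rceil, 1/k)$. A standard quantile coupling then produces i.i.d.\ variables $Z_s^- \sim \Bin((n-1)k - \lceil k^{3/4}\rceil, 1/k)$ with $Z_s \geq Z_s^-$ almost surely, so $S_t = t-n$ forces $S_t^- := \sum_{s\leq t} Z_s^- \leq t - n$. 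Writing $\mu_t = \expec S_t^- = t(n-1-k^{-1/4} + o(1))$, a short arithmetic check shows that for $n \geq 3$ and $k$ large, $t - n \leq (1-\delta)\mu_t$ for some $\delta = \delta_n \geq 1/4$ uniform in $t \geq n$. \Cref{lem:chernoff} therefore yields $\Pr(S_t^- \leq t - n) \leq \exp(-\delta^2 \mu_t/2) \leq \exp(-c_n t)$ for a constant $c_n > 0$ depending only on $n$ (and bounded below over $n \geq 3$). Summing a geometric series over $t \geq k^\eps$ gives $e^{-\Omega(k^\eps)}$, which combined with $\Pr(x\text{ is a sink}) = k^{-n}$ yields the stated bound.

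The main obstacle is the second step: one must justify that the offspring of each processed line is genuinely a sum of \emph{independent} Bernoullis despite the exploration's cumulative revelation of information about other lines. This uses the independence of winners across distinct lines (so each unrevealed line's indicator depends on an independent random variable), but care is needed because the success probability of each Bernoulli depends on which vertices of its line have already been ruled out in previous processing steps. Once this is verified cleanly enough to apply stochastic domination, the remaining Chernoff bookkeeping is routine and does not hinge on any fine combinatorial counting (which would be too weak here because the tree-count factor $(e(n-1)k)^{j-n}$ in \Cref{lem:trees} is almost exactly cancelled by the failure-probability factor $(1-1/k)^{(n-1)kj}$, leaving no slack for $n = 3$).
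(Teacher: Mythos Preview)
Your approach is correct and genuinely different from the paper's. The paper proceeds combinatorially: it fixes the terminal size $i=\abs{A\cup P}$, invokes a refinement of the tree-counting lemma (Claim~\ref{claim:tree_count_mod}, which gives at most $\binom{(i-1)(n-1)k}{i-n}$ realisations when $x$ is a sink), multiplies by a per-realisation probability bound that absorbs the $k^{-n}$ factor from the $n$ initial ``$x$ wins its line'' checks, and verifies that the product is at most $(4/5)^i/k^n$. Your route instead conditions on the sink event to extract $k^{-n}$ up front, reformulates termination as $S_t=t-n$, and uses a Chernoff bound after dominating each $Z_s$ from below by i.i.d.\ $\Bin((n-1)k-\lceil k^{3/4}\rceil,1/k)$. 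This avoids tree counting entirely and is arguably cleaner; the paper's approach, on the other hand, makes the dependence on the realisation structure explicit and reuses machinery already developed for Lemma~\ref{lem:bushy_trees}.

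Two small remarks. First, your statement ``$Z_s\mid\mathcal{F}_{s-1}$ stochastically dominates $\Bin(\cdot)$'' is only literally true when step $s$ is executed in full; if condition~(ii) or~(iii) fires mid-step, $Z_s$ is truncated. You should phrase the coupling for the \emph{untruncated} offspring $\tilde Z_s$ (or equivalently for the process with~(ii) and~(iii) removed), then note that on the event of termination by~(i) the two coincide for all $s\le t$. This is routine but worth saying explicitly; your ``main obstacle'' paragraph focuses on the independence of the Bernoullis, which is actually the easier point. Second, your parenthetical claim that the tree-count route ``leaves no slack for $n=3$'' is not quite right: with the refined exponent $i-n$ the governing base is $e(n-1)e^{-(n-1)}$, which for $n=3$ equals $2/e\approx 0.736<1$, and the paper does close the argument (handling $n\in\{3,4,5\}$ separately to control the $(1+(n-1)/(i-n))$ factor). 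Direct use of Lemma~\ref{lem:trees} with exponent $i-1$ would indeed fail to produce the $k^{-n}$ factor, which may be what you had in mind.
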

\begin{proof}
    We will take a union bound over the events that $x$ is a sink and the exploration terminates due to condition (i) being met when $\abs{A\cup P}=\abs{P}=i$ for each $k^{\eps}\leq i< k^{3/4}$.
    Fix such an $i$. Clearly we may assume that $i\geq n$ as we are assuming that $x$ is a sink. We start with the following claim, which can be proved via a straightforward adaptation of the proof of Lemma~\ref{lem:trees} given in~\citet*{mcdiarmid2021component}.
    
    \begin{claim}\label{claim:tree_count_mod}
        The number of realisations of the exploration in which $x$ is a sink and $\abs{A\cup P}=i$ is at most 
        \[
        \binom{(i-1)(n-1)k}{i-n}
        \leq \left(\left(1+\frac{n-1}{i-n}\right)e(n-1)k\right)^{i-n}
        \]
        if $i>n$, and exactly 1 if $i = n$.
    \end{claim}
    
    Each realisation which terminates due to (i) at a point where $\abs{A\cup P}=i$ involves exactly $i$ successful line checks and hence, by (d) above, at least
    \[
    i\big((n-1)k-i\big)-i = (1+o(1))i(n-1)k
    \]
    unsuccessful line checks. By (f) above, the probability of a successful line check is always at least $1/k$ and at most $1/(k-i) \leq (1 + 2i/k)/k$ for large enough $k$. The probability of a given realisation occurring is thus at most
    \[
    \left(\frac{1 + 2i/k}{k}\right)^i\left(1-\frac{1}{k}\right)^{(1+o(1))i(n-1)k}
    \leq \frac{e^{-(1+o(1))i(n-1) - 2i^2/k}}{k^i} = \frac{e^{-(1+o(1))i(n-1) }}{k^i}
    \]
    If $i=n$, then this probability is $e^{-\Omega(k^\eps)}/k^n$ as required. Otherwise, by a union bound, the probability that $x$ is a sink and the exploration terminates due to (i) at a point where $\abs{A\cup P}=i$ is at most
    \begin{align*}
        & \left(\left(1+\frac{n-1}{i-n}\right) e(n-1)k\right)^{i-n} \cdot \frac{e^{-(1+o(1))i(n-1) }}{k^i}\\ 
        & \leq \left(\frac{1}{\left(1+\frac{n-1}{i-n}\right)e(n-1)k}\right)^n \cdot \left[\left(1+\frac{n-1}{i-n}\right)e(n-1) e^{-(1+o(1))(n-1)}\right]^i\\
        &\leq \frac{1}{k^n}\cdot \left[\left(1+\frac{n-1}{i-n}\right)e(n-1) e^{-(1+o(1))(n-1)}\right]^i.
    \end{align*}

    For $n \geq 6$, we have $en(n-1)e^{-(n-1)} \leq 4/5$ and so, using the bound $1+\frac{n-1}{i-n}\leq 1+n-1=n$, we have that if the above is bounded by $(4/5)^{k^{\eps}}/k^n$ for all sufficiently large $k$ (not depending on $n$).
    If $n \in {3,4,5}$, we note that $e(n-1)e^{-(n-1)} \leq 3/4$ and $1 + \frac{n-1}{i-n} \leq 1 + \frac{4}{k^{\eps} - 5}$ so, for large enough $k$, the above is again bounded by $(4/5)^{k^{\eps}}/k^n$.    
    Taking a union bound over $i$ now completes the proof.
\end{proof}

We are now ready to prove Lemma~\ref{lem:cycle_to_sink}.

\begin{proof}[Proof of Lemma~\ref{lem:cycle_to_sink}]
    By a union bound and symmetry, the probability that there exists a good sink which cannot be reached from any good cycle is at most
    \[
    k^n\cdot\Prob(x\textrm{ is a good sink but cannot be reached from a good cycle}),
    \]
    for any fixed $x\in [k]^n$. 
    Note that if $x$ is a good sink and the process terminates as there are no active lines at the start of a step, then $\abs{A \cup P}$ contains all of the lines that can reach $x$ and $\abs{A \cup P} \geq k^{\eps}$.
    Hence, the probability that $x$ is a good sink that cannot be reached from a good cycle is at most the sum of the probabilities that 
    \begin{itemize}
        \item $x$ is a sink and the exploration process from $x$ terminates by condition (i) at a time when $\abs{A\cup P}\geq k^{\eps}$,
        \item the process terminates with $\abs{A\cup P}\geq k^{3/4}$, and
        \item $x$ is a sink and the exploration terminates due to condition (ii), but $x$ cannot be reached from a good cycle.
    \end{itemize}
    By Lemmas~\ref{lem:bushy_trees} and ~\ref{lem:good_sink_explodes}, the first two probabilities are sufficiently small, so we focus on the third.

    Clearly, the probability that $x$ is a sink and the exploration process terminates due to condition (ii) is at most $k^{-n}$. Hence, to prove the theorem it is sufficient to show that, conditioned on this event, the probability that $x$ cannot be reached from a good cycle is at most $e^{-\Omega(k^{\eps})}$. In pursuit of this, start by relabelling $[k]$ separately in each dimension so that each line in $\abs{A\cup P}$ intersects $[k^{3/4}]^n$.

    We will now explore in the first dimension from all lines in $A$, noting that $\abs{A}=\lceil{k^{2/3}}\rceil$. However, we want to leave some slices as unexplored as possible so that we can join up with a good cycle in these later; for this reason we do not explore from any vertices with third coordinate greater than $k/2$. For simplicity, we will in fact not explore from any vertex outside $[k/2]^n$.

    More formally, we perform the following exploration. Let $A'\subseteq A$ be the set of lines in $A$ in a dimension other than the first, and let $T=|A'|$. Let $\cL$ be the set of lines in the first dimension which intersect any line in $A'$ at a vertex in $[k/2]^n\setminus [k^{3/4}]^n$. For each $L\in \cL$, write $t_L$ for the number of incidences between $L$ and a line in $A'$. Note that the total number of incidences between lines in $\cL$ and lines in $A'$ is $\sum_L{t_L}=T(\lfloor k/2 \rfloor-\lfloor k^{3/4} \rfloor)$. 

    The only information that has been revealed about the winner of each line in $\cL$ is that the winner is not any vertex on a line in $P$. Thus, the probability that the winner is on a line in $A'$ is at least $t_L/k$, and this event occurs independently for each $L\in\cL$. Hence, the number of lines in $\cL$ whose winner is in a line in $A'$, denoted $X$, dominates the sum over $L\in \cL$ of independent Bernoulli random variables with parameter $t_L/k$. Applying Lemma~\ref{lem:chernoff}, where here $\mu = \mathbb{E}[X] \geq T(1/2-k^{-1/4}-1/k)$ and $\delta =1/4$, we have (for large~$k$)
    \[
    \Prob(X\leq T/3)\leq \Prob(X\leq 3\mu/4) \leq e^{-T(1/2-k^{-1/4} - 1/k)/32}\leq e^{-T/70}.
    \]
    Thus, either $T\leq 3k^{2/3}/4$, in which case there are at least $k^{2/3}/4$ lines in $A$ in the first dimension, or $T\geq 3k^{2/3}/4$, in which case with failure probability $e^{-\Omega(k^{2/3})}$ the above exploration process yields at least $k^{2/3}/4$ lines in the first dimension which can reach $x$.

    We can therefore condition on having obtained, either directly from $A$ or from this second exploration process, a set $B_1$ of at least $k^{2/3}/4$ lines in the first dimension, all of which can reach $x$, and all of which intersect $[k/2]^n$. We now explore in the second dimension from the lines in $B_1$ in a very similar way: reveal which lines in the second dimension with first coordinate greater than $k^{3/4}$ have their winner in a line in $B_1$. Similarly to above, with failure probability at most $e^{-\Omega(k^{2/3})}$, this yields a set $B_2$ of at least $k^{2/3}/16$ lines in the second dimension, all of which can reach $x$, and all of which have first coordinate greater than $k^{3/4}$ and third coordinate at most $k/2$.

    Finally, we explore in the third dimension from the lines in $B_2$: reveal which lines in the third dimension with second coordinate greater than $k^{3/4}$ have their winner in a line in $B_2$. With failure probability at most $e^{-\Omega(k^{2/3})}$ this yields a set $B_3$ of at least $k^{2/3}/64$ lines in the third dimension, all of which can reach $x$, and all of which have first and second coordinate greater than $k^{3/4}$. If necessary, arbitrarily throw away lines in $B_3$ one at a time until $|B_3| \leq k^{2/3}$.

    We now consider the slices of $\Lnk$ with third coordinate greater than $k/2$. All we have revealed about the restriction of $\Lnk$ to each such slice is that some given collection of at most $k^{3/4}$ vertices contained in $[k^{3/4}]^2$ (a subset of the vertices at which the slice intersects lines in $P$) are sources in the slice. It therefore follows from Lemma~\ref{lem:slicewise} that for each fixed slice $S$ with third coordinate greater than $k/2$ the probability that no vertex in a line in $B_3$ is in a good cycle in $S$ is at most $1-\Omega(u_S/k^{3/2})\leq e^{-\Omega(u_S/k^{3/2})}$, where $u$ is the number of lines in $B_3$ that intersect $S$. Since $|B_3|\geq k^{2/3}/64$ and each line in $B_3$ intersects $k/2$ slices with third coordinate greater than $k/2$, there are a total of at least $k^{5/3}/128$ incidences between these slices and lines in $B_3$. The events that no vertex in a line in $B_3$ is in a good cycle in a given slice $S$ with third coordinate greater than $k/2$ are independent over the different choices for $S$, so the probability that no vertex in a line in $B_3$ is in a good cycle in any such slice is at most
    \[
    e^{-\Omega(k^{5/3}/k^{3/2})}= e^{-\Omega(k^{1/6})},
    \]
    as required.
\end{proof}

\end{document}